\setlist{nolistsep}
\newcolumntype{C}[1]{>{\arraybackslash$}p{#1}<{$}}
\titleformat{\section}{\normalfont\rmfamily\Large\bfseries}{\thesection.}{14pt}{}
\titlespacing{\section}{0pt}{14pt}{7pt}
\titleformat{\subsection}{\normalfont\rmfamily\large\bfseries}{\thesubsection.}{12pt}{}
\titlespacing{\subsection}{0pt}{12pt}{6pt}
\renewcommand{\emptyset}{\varnothing}
\renewcommand{\leq}{\leqslant}
\renewcommand{\geq}{\geqslant}
\tikzstyle{every picture}=[line join=round,line cap=round,line width=.75pt,every label/.append style={font=\small},label distance=-1.5pt]
\tikzstyle{every node}=[font=\small]
\tikzset{>=stealth}
\newcommand{\Q}{\mathbb Q}
\newcommand{\R}{\mathbb R}
\newcommand{\N}{\mathbb N}
\newcommand{\aff}{\mathrm{aff}}
\newcommand{\conv}{\mathrm{conv}}
\newcommand{\col}{\mathrm{col}}
\newcommand{\stab}{\mathrm{STAB}}
\newcommand{\pr}{\mathrm{Pr}}
\newcommand{\rk}{\mbox{rk}}
\newcommand{\calB}{\mathcal{B}}
\newtheorem{prop}{Proposition}
\newtheorem{lem}[prop]{Lemma}
\newtheorem{obs}[prop]{Observation}
\newtheorem{conj}[prop]{Conjecture}
\newtheorem{thm}[prop]{Theorem}
\newtheorem{cor}[prop]{Corollary}
\theoremstyle{definition}
\newcommand{\restr}[1]{\big|_{#1}}
\newcommand{\op}{product}
\renewcommand\paragraph{\@startsection{paragraph}{4}{\z@}%
{3.25ex \@plus1ex \@minus.2ex}{-.2em}%
{\normalfont\normalsize\bfseries}}
\begin{document}

\title{
Slack matrices, $k$-products, and $2$-level polytopes}
\author{Manuel Aprile, Michele Conforti, Yuri Faenza,\\ Samuel Fiorini, Tony Huynh, Marco Macchia}
\date{}
%
%
\maketitle


\begin{abstract}
    In this paper, we study algorithmic questions concerning products of matrices and their consequences for recognition algorithms for polyhedra. 
    
    The \emph{$1$-\op} of matrices $S_1 \in \R^{m_1 \times n_1}, S_2 \in \R^{m_2 \times n_2}$ is a matrix in $\R^{(m_1+m_2) \times (n_1n_2)}$ whose columns are the concatenation of each column of $S_1$ with each column of $S_2$. The \emph{$k$-\op{}} generalizes the $1$-\op{}, by taking as input two matrices $S_1, S_2$ together with $k-1$ special rows of each of those matrices, and outputting a certain composition of $S_1,S_2$. 
Our first result is a polynomial time algorithm for the following problem:  given a matrix $S$, is $S$ a $k$-\op{} of some matrices, up to permutation of rows and columns? Our algorithm is based on minimizing a symmetric submodular function that expresses mutual information from information theory.  

Our study is motivated by a close link between the 1-\op{} of matrices and the Cartesian product of polytopes, and more generally between the $k$-\op{} of matrices and the glued product of polytopes. These connections  rely on the concept of slack matrix, which gives an algebraic representation of classes of affinely equivalent polytopes. The \emph{slack matrix recognition problem} is the problem of determining whether a given matrix is a slack matrix.  This is an intriguing problem whose complexity is unknown. Our algorithm reduces the problem to instances which cannot be expressed as $k$-products of smaller matrices. 

In the second part of the paper, we give a combinatorial interpretation of $k$-\op{}s for two well-known classes of polytopes: 2-level matroid base polytopes and stable set polytopes of perfect graphs. We also show that the slack matrix recognition problem is polynomial-time solvable for such polytopes. Those two classes are special cases of $2$-level polytopes, for which we conjecture that the slack matrix recognition problem is polynomial-time solvable. 
\end{abstract}




\section{Introduction}\label{sec:intro}


Determining if an object can be decomposed as the `product' of simpler objects is a ubiquitous theme in mathematics and computer science.  For example, every integer has a unique factorization into primes, and every finite abelian group is the direct sum of cyclic groups.  Moreover, algorithms to efficiently \emph{find} such `factorizations' are widely studied, since many algorithmic problems are easy on indecomposable instances. In this paper, our objects of interest are matrices and polytopes.

For $\ell \in \N$, we let $[\ell]:=\{1, \dots, \ell\}$.  For a matrix $S$, we let $S^\ell$ be the $\ell$-th column of $S$. 
The \emph{$1$-\op} of $S_1 \in \R^{m_1 \times n_1}$ and $S_2 \in \R^{m_2 \times n_2}$ is the matrix $S_1 \otimes S_2 \in \R^{(m_1+m_2) \times (n_1n_2)}$ such that for each $j \in  [n_1\cdot n_2]$, 
$$
(S_1 \otimes S_2)^{j} := \begin{pmatrix}
S_1^{k} \\ S_2^{\ell}
\end{pmatrix},
$$
where $k \in [n_1]$ and $\ell \in [n_2]$ satisfy $j=(k-1)n_2+\ell$. For example, 
$$
\begin{pmatrix} 1 &0 \end{pmatrix} \otimes \begin{pmatrix} 0 \end{pmatrix} =
\begin{pmatrix}
1 & 0\\
0 & 0
\end{pmatrix}
, \qquad
\begin{pmatrix} 1 &0 
\\ 2 & 3\end{pmatrix} \otimes \begin{pmatrix} 1 &0&0 \\ 0  & 1 & 1 \\ \end{pmatrix}=\begin{pmatrix} 1 &1 &1 &0&0 &0 
\\ 2 & 2& 2& 3& 3& 3\\ 1 &0&0 & 1&0&0\\ 0  & 1 & 1 & 0  & 1 & 1 \end{pmatrix}.
$$
Two matrices are \emph{isomorphic} if one can be obtained from the other by permuting rows and columns. A matrix $S$ is a \emph{1-\op} if there exist two non-empty matrices $S_1$  and $S_2$ such that $S$ is isomorphic to $S_1\otimes S_2$. 

Our motivation for studying the 1-\op{} is geometric.  If $P_1 \subseteq \R^{d_1}$ and $P_2 \subseteq \R^{d_2}$ are polytopes, then their \emph{Cartesian product} is the polytope $$P_1 \times P_2 := \{(x_1,x_2) \in \R^{d_1} \times \R^{d_2} \mid x_1 \in P_1,\ x_2 \in P_2\}.$$ 

It turns out that if we represent a polytope via its \emph{slack matrix}, then the Cartesian product corresponds exactly to the 1-\op{}, see Lemma~\ref{lem:1-sum_slack}.  We now define slack matrices of polytopes.  

Let $P = \conv(\{v_1,\dots, v_n\}) = \{x\in \mathbb{R}^d \mid Ax \leq b, \, A^= x= b^= \}$, where $\{v_1,\dots,v_n\} \subseteq \R^d$, $A \in \R^{m \times d}$ and $b \in \R^m$. The \emph{slack matrix} associated to these descriptions of $P$ is the matrix $S \in \R^{m \times n}_+$ with $S_{i,j} := b_i-A_i v_j$ for $i\in [m]$ and $j\in [n]$.  That is, $S_{i,j}$ is the slack of point $v_j$ with respect to the inequality $A_i x \leq b_i$.
Throughout the paper, we assume that the set $\{v_1,\dots, v_n\}$ is minimal. That is, $v_1,\dots, v_n$ are the vertices of $P$.

 Slack matrices were introduced in a seminal paper of Yannakakis~\cite{yannakakis1991expressing}, as a tool for reasoning about the extension complexity of polytopes (see also~\cite{conforti2013extended}). They are fascinating objects, which are still not fully understood. For instance, given a matrix $S \in \R^{m \times n}_+$, the complexity of determining whether $S$ is the slack matrix of some polytope is open. In \cite{gouveia2013nonnegative}, the problem has been shown to be equivalent to the Polyhedral Verification Problem (see~\cite{kaibel2003some}): given a vertex description of a polytope $P$, and an inequality description of a polytope $Q$, determine whether $P=Q$. 
 
 A more general operation than the Cartesian product was introduced in \cite{margot1995composition} (see also \cite{tiwary2020extension}) for 0/1 polytopes. Here, we extend the definition from~\cite{margot1995composition, tiwary2020extension} to general polytopes.  For a polytope $P$, we let $V(P)$ be its set of vertices. Given coordinates $x_1,\dots,x_{k-1}$ of $P_1$, and coordinates $y_1,\dots,y_{k-1}$ of $P_2$, the \emph{glued product} of $P_1, P_2$ with respect to coordinates $(x_1,\dots,x_{k-1}$, $y_1,\dots, y_{k-1})$ is defined as:
\[
P_1\times_k P_2:=\conv\{(x,y): x\in V(P_1), y\in V(P_2), x_i=y_i \,\forall\, i \in [k-1]\}.
\]


The glued product is a very general operation, and the inequality description of $P_1 \times_k P_2$ cannot be easily derived from that of $P_1,P_2$. Hence, we restrict the input polytopes so that their glued product has a simpler description. In particular, throughout the paper we assume that the projection of $P_1$ (resp.~$P_2$) over coordinates $x_1,\dots,x_{k-1}$ (resp.~$y_1,\dots,y_{k-1}$) is given by 
$$\conv\{\mathbf{0},e_1,\dots, e_{k-1}\},$$
where $e_i$ is the $(k-1)$-dimensional vector equal to $0$ in all entries, except a $1$ in position $i$.
Moreover, we assume that for $i=1,\dots, k-1$, $x_i\geq 0$ (resp. $y_i\geq 0$)  is facet-defining for $P_1$ (resp.~$P_2$).

 
When those properties hold, we call the corresponding glued product \emph{simplicial}. It is known (see \cite{margot1995composition} or our Lemma \ref{lem:gluedprod}) that if $P$ is a simplicial product $P_1\times_k P_2$, then a description of $P$ is obtained by adding equations $x_i=y_i$ for $i\in [k-1]$ to the Cartesian product $P_1\times P_2$. 

If we represent polytopes via  slack matrices, what operation on matrices corresponds to the simplicial glued product?  The answer is an extension of the 
$1$-\op{} called the \emph{$k$-\op}, which we now describe.

We say that a matrix $S$ has \emph{special rows} $(x_1,\dots,x_{k-1})$ if the columns of $S$, restricted to such rows, consist of all the vectors $\mathbf{0}$, $e_1, \dots, e_{k-1}$, as defined above, possibly repeated.

 We consider two matrices $S_1, S_2$ each with an ordered set of special rows,
 $(x_1,\dots,x_{k-1})$ and $(y_1,\dots, y_{k-1})$ respectively. 
 Given a matrix $S$ and a set of rows $\mathcal R$ of $S$, we let $S- \mathcal R$ be the submatrix of $S$ obtained by removing the rows in $\mathcal R$.  
 
 
Observe that $S_1-\{x_1,\dots,x_{k-1}\}$ can be partitioned into submatrices $S_1[0]$, $S_1[1], \dots, S_1[k-1]$ as follows.  Let $S_1[0]$ be the matrix obtained from $S_1-\{x_1,\dots,x_{k-1}\}$ by restricting to columns $S^\ell$ such that the $\ell$-th entry of $x_i$ is $0$ for all $i \in [k-1]$. For $j \in [k-1]$, let $S_1[j]$ be obtained from $S_1$ by restricting to the columns $S^\ell_1$ such that the $\ell$-th entry of $x_j$ is $1$ and the $\ell$-th entry of $x_i$ is $0$ for $i \neq j$. Notice that all submatrices $S_1[j]$ are non-empty.  Because of this and of the assumption above, $S_1[0]$, $S_1[1], \dots, S_1[k-1]$ gives a partition of $S_1-\{x_1,\dots, x_{k-1}\}$ into non-empty submatrices.  We define $S_2[0]$, $S_2[1], \dots, S_2[k-1]$ analogously.
 
With the previous notation, the \emph{$k$-\op} $S$ of matrices $S_1$ and $S_2$ with special rows $(x_1,\dots,x_{k-1})$ and $(y_1,\dots,y_{k-1})$ respectively, is defined to be:
%
$$(S_1,x_1,\dots,x_{k-1}) \otimes_k (S_2,y_1,\dots,y_{k-1}) :=
\left(\begin{array}{c|c|c|c|c}
 S[0]&  S[1] & S[2] & \cdots & S[k-1] \\
 0 \cdots 0 & 1 \cdots 1 & 0 \cdots 0 & \cdots & 0 \cdots 0  \\
 0 \cdots 0 & 0 \cdots 0 & 1 \cdots 1 & \cdots & 0 \cdots 0 \\ \vdots &\vdots &\vdots &\vdots &\vdots \\
 0 \cdots 0 & 0 \cdots 0 & 0 \cdots 0 & \cdots & 1 \cdots 1
\end{array}
\right)
$$ where, for every $j \in [k-1]\cup\{0\}$, $S[j] :=  S_1[j] \otimes S_2[j]$, each column below $S[0]$ is a zero vector of length $k-1$, and each column below $S[j]$ is the $j$-th standard basis vector in $\mathbb{R}^{k-1}$. Notice that the last $k-1$ rows of $S$ are special rows of $S$. We refer to $S_1$ and $S_2$ as the \emph{factors} of the $k$-product.


Here are examples of a 2-\op{} and a 3-\op{}, where we indicate entries of non-special rows with letters for the sake of clarity.

 \begin{center}
 \begin{tikzpicture}[every left delimiter/.style={xshift=.3em},
     every right delimiter/.style={xshift=-.3em}]
 \matrix(S1)[matrix of math nodes,left delimiter={(},right
 delimiter={)},column sep=-\pgflinewidth,row sep=-\pgflinewidth,
 nodes={anchor=center,align=left,inner sep=1pt,text depth=0},
 row 1/.style={nodes={minimum height=5mm}},row 2/.style={minimum height=5mm},row 3/.style={nodes={minimum height=5mm}}]
 {
 a & c & e & g\\
 b & d & f & h\\
 0 & 0 & 1 & 1\\
 };
 \node [left= of S1.center, xshift=1mm] {$S_1 = $};
 \node [right,xshift=1.5mm,yshift=-.5mm] at (S1-3-4.east) {$\leftarrow x_1$};


 \matrix(S2)[matrix of math nodes,left delimiter={(},right
 delimiter={)},column sep=-\pgflinewidth,row sep=-\pgflinewidth,
 nodes={anchor=center,align=center,inner sep=1pt,text depth=0},
 row 1/.style={nodes={minimum height=5mm}},row 2/.style={minimum height=5mm},row 3/.style={nodes={minimum height=5mm}},xshift=95pt] at (S1-2-4.east)
 {
 i & j & k\\
 0 & 0 & 1\\
 };
 \node [left= of S2.center, xshift=1mm] {$S_2 = $};
 \node [right,xshift=1.5mm,yshift=-.5mm] at (S2-2-3.east) {$\leftarrow y_1$};

 \matrix(S)[matrix of math nodes,left delimiter={(},right
 delimiter={)},column sep=-\pgflinewidth,row sep=-\pgflinewidth,
 nodes={anchor=center,align=left,inner sep=1pt,text depth=0},
 row 1/.style={nodes={minimum height=5mm}},row 2/.style={minimum height=5mm},row 3/.style={nodes={minimum height=5mm}},row 4/.style={nodes={minimum height=5mm}},xshift=180pt] at (S2.east)
 {
  a & a & c & c & e & g\\
 b & b & d & d & f & h \\
  i & j & i & j & k & k  \\
 0 & 0 & 0 & 0 & 1 & 1\\
 };
 \node [left= of S.center, xshift=-3mm] {$(S_1,x_1) \otimes_2 (S_2,y_1) = $};
 \end{tikzpicture}
 \end{center}

\begin{center}
 \begin{tikzpicture}[every left delimiter/.style={xshift=.3em},
     every right delimiter/.style={xshift=-.3em}]
 \matrix(S1)[matrix of math nodes,left delimiter={(},right
 delimiter={)},column sep=-\pgflinewidth,row sep=-\pgflinewidth,
 nodes={anchor=center,align=left,inner sep=1pt,text depth=0},
 row 1/.style={nodes={minimum height=5mm}},row 2/.style={minimum height=5mm},row 3/.style={nodes={minimum height=5mm}}, row 4/.style={nodes={minimum height=5mm}}, xshift=-20pt]
 {
 a & c & e & g\\
 b & d & f & h\\
 0 & 0 & 1 & 0\\
 0 & 0 & 0 & 1\\
 };
 \node [left= of S1.center, xshift=1mm,yshift =2mm] {$S_1 = $};
 \node [right,xshift=1.5mm,yshift=-.5mm] at (S1-3-4.east) {$\leftarrow x_1$};
 \node [right,xshift=1.5mm,yshift=-.5mm] at (S1-4-4.east) {$\leftarrow x_2$};


 \matrix(S2)[matrix of math nodes,left delimiter={(},right
 delimiter={)},column sep=-\pgflinewidth,row sep=-\pgflinewidth,
 nodes={anchor=center,align=center,inner sep=1pt,text depth=0},
 row 1/.style={nodes={minimum height=5mm}},row 2/.style={minimum height=5mm},row 3/.style={nodes={minimum height=5mm}},xshift=95pt] at (S1-2-4.east)
 {
 i & j & k & $\ell$\\
 0 & 1 & 0 & 0\\
 0 & 0 & 1 & 1\\
 };
 \node [left= of S2.center, xshift=1mm] {$S_2 = $};
 \node [right,xshift=1.5mm,yshift=-.5mm] at (S2-2-4.east) {$\leftarrow y_1$};
 \node [right,xshift=1.5mm,yshift=-.5mm] at (S2-3-4.east) {$\leftarrow y_2$};

 \matrix(S)[matrix of math nodes,left delimiter={(},right
 delimiter={)},column sep=-\pgflinewidth,row sep=-\pgflinewidth,
 nodes={anchor=center,align=left,inner sep=1pt,text depth=0},
 row 1/.style={nodes={minimum height=5mm}},row 2/.style={minimum height=5mm},row 3/.style={nodes={minimum height=5mm}},row 4/.style={nodes={minimum height=5mm}},xshift=220pt] at (S2.east)
 {
 a & c & e & g & g  \\
 b & d & f & h & h \\
 i & i & j & k & $\ell$  \\
 0 & 0 & 1 & 0 & 0  \\
 0 & 0 & 0 & 1 & 1 \\
 };
 \node [left= of S.center, xshift=-3mm] {$(S_1,x_1,x_2) \otimes_3 (S_2,y_1,y_2) = $};
 \end{tikzpicture}
 \end{center}

As in the {$1$-\op} case, we say that $S$ is a \emph{$k$-\op{}} if there exist matrices $S_1, S_2$ (each with less rows and columns than $S$) and rows $x_1,\dots,x_{k-1}$ of $S_1$, $y_1,\dots,y_{k-1}$ of $S_2$, such that $S$ is isomorphic to $(S_1,x_1,\dots,x_{k-1}) \otimes_k (S_2,y_1,\dots,y_{k-1})$. Again, we will abuse notation and write $S = (S_1,x_1,\dots,x_{k-1}) \otimes_k (S_2,y_1,\dots,y_{k-1})$, and simply $S = S_1 \otimes_k S_2$ when the special rows are clear from the context.

The following is our first main result.

\begin{thm}
\label{thm:main1}
Let $S\in \R^{m\times n}$, and $k\geq 1$ be constant. There is an algorithm that runs in time polynomial in $n,m$ and determines whether $S$ is a $k$-\op{} and, in case it is, outputs two matrices $S_1, S_2$ and, if $k\geq 2$, special rows $x_1,\dots, x_{k-1}$ of $S_1$, $y_1,\dots, y_{k-1}$ of $S_2$, such that $S=(S_1,x_1,\dots, x_{k-1})\otimes_k (S_2, y_1,\dots, y_{k-1})$.
\end{thm}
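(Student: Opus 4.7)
The plan is to reduce $k$-product recognition to minimizing a symmetric submodular function coming from mutual information, as hinted by the abstract. I treat the columns of $S$ as equally-weighted samples of a random vector $X = (X_i)_{i \in [m]}$ indexed by the rows. The key observation is that, for any partition $(R, \bar R)$ of $[m]$, $S$ equals the $1$-product $S|_R \otimes S|_{\bar R}$ up to column permutation and a suitable choice of column multiplicities if and only if the joint empirical distribution of $(X_R, X_{\bar R})$ factors as the product of its marginals, i.e., the mutual information $I(X_R; X_{\bar R})$ vanishes.

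For $k = 1$, I set $f(R) := I(X_R; X_{[m]\setminus R}) = H(R) + H([m]\setminus R) - H([m])$. This function is non-negative, symmetric, and submodular: submodularity follows from Shannon's classical submodularity of the entropy $H$ applied both directly and to complements (since $R \mapsto H(\bar R)$ is also submodular). Queyranne's algorithm then computes a non-trivial minimizer of $f$ in $O(m^3)$ oracle calls, and $S$ is a $1$-product precisely when this minimum equals $0$. The minimizer provides the row partition, and the factors $S_1, S_2$ are assembled by collecting the distinct columns of $S|_R$ and $S|_{\bar R}$ with integer multiplicities whose pairwise products reproduce the joint column counts in $S$; such multiplicities always exist when the non-negative integer matrix of joint counts has rational rank $1$, which is precisely the condition $f(R) = 0$.

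For $k \geq 2$, I enumerate all $\binom{m}{k-1} = O(m^{k-1})$ subsets $T \subseteq [m]$ of size $k-1$ as candidates for the bottom special rows of $S$; since $k$ is constant, this enumeration is polynomial. For each $T$, I first verify the special-row structure: every column restricted to $T$ is $\mathbf{0}$ or some $e_j$, and all $k$ patterns occur. This yields a partition of the columns of $S$ into non-empty blocks $S[0], \ldots, S[k-1]$. I then apply Queyranne's algorithm to
\[
f_T(R) := \sum_{j=0}^{k-1} I_j\bigl(X_R;\, X_{([m]\setminus T) \setminus R}\bigr),
\]
where $I_j$ is the mutual information in the empirical distribution on the columns of $S[j]$ restricted to $[m]\setminus T$. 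As a sum of non-negative symmetric submodular functions, $f_T$ remains symmetric submodular. A non-trivial minimizer $(R, \bar R)$ of $f_T$ with value $0$ certifies that, for every block $j$, $S[j]$ factors as $S_1[j] \otimes S_2[j]$ with row sets $R$ and $\bar R$; appending block-indicator rows $x_1, \ldots, x_{k-1}$ and $y_1, \ldots, y_{k-1}$ produces the factors $S_1, S_2$. Both factors have strictly fewer rows (since $R, \bar R$ are non-empty) and strictly fewer columns (since each block is non-empty on both sides) than $S$, as required. I expect the main technical point to be the clean verification that $f_T$ is submodular --- which reduces to Shannon's entropy inequality --- together with the correctness of the block-wise factor recovery; the overall polynomial running time, combining the enumeration over $T$ with Queyranne's algorithm and linear-time column decomposition, is then routine.
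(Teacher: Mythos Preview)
Your proposal is correct and follows essentially the same route as the paper: define the mutual-information function $f(R)=I(C_R;C_{\bar R})$ (symmetric, submodular, vanishing exactly when $S$ is a $1$-product with respect to $R,\bar R$), minimize it with Queyranne's algorithm, and for $k\ge 2$ enumerate the $O(m^{k-1})$ candidate special-row sets $T$ and minimize the sum $\tilde f=\sum_j f_j$ of the block-wise mutual informations. One small caution: your justification that both factors have \emph{strictly fewer columns} ``since each block is non-empty on both sides'' only yields $\sum_j a_j \le \sum_j a_j b_j$, with equality when every $b_j=1$; you should explicitly discard such degenerate outputs (the paper handles this implicitly and, in applications, separately excludes the case where a factor is the $k\times k$ identity).
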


The proof of Theorem~\ref{thm:main1} is by reduction to symmetric submodular function minimization using the concept of \emph{mutual information} from information theory.  Somewhat surprisingly, we do not know of a simpler proof of Theorem~\ref{thm:main1}. A straightforward implementation of our algorithm runs in $O(m^{k+3}(m+n))$ time.

In Section \ref{sec:def} (see Corollary \ref{cor:glued}), we show that the simplicial glued product corresponds to the $k$-\op{} of matrices if we represent a polytope via a slack matrix. As a consequence of this relationship and of Theorem~\ref{thm:main1}, we obtain the following:

\begin{thm} \label{thm:main2}
Given a constant $k \in \N$, and a slack matrix $S \in \R^{m \times n}$ of a polytope $P$, 
there is an algorithm that is polynomial in $m, n$ which correctly determines if $P$ is affinely isomorphic to a 
simplicial glued product $P_1 \times_k P_2$ and, in case it is, outputs two matrices $S_1, S_2$ such that $S_i$ is the slack matrix of $P_i$, for $i \in [2]$.
\end{thm}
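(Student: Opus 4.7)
The plan is to reduce the statement to Theorem \ref{thm:main1} via the correspondence established in Corollary \ref{cor:glued}, which asserts that the simplicial glued product of polytopes matches the $k$-product of their slack matrices. This way, the decomposition question for $P$ is answered by running the decomposition algorithm on $S$.

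Concretely, I would apply the algorithm of Theorem \ref{thm:main1} to the input slack matrix $S$; this takes time polynomial in $m,n$ for fixed $k$. If the algorithm reports that $S$ is not a $k$-product, then by the geometric-to-combinatorial direction of Corollary \ref{cor:glued}, $P$ is not affinely isomorphic to any simplicial glued product $P_1 \times_k P_2$: for such an isomorphism would produce a slack matrix of $P$ admitting a $k$-product decomposition, and slack matrices of affinely isomorphic polytopes coincide up to permutations of rows and columns. In this case, output ``no''.

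If instead the algorithm returns matrices $S_1, S_2$ with special rows $x_1,\dots,x_{k-1}$ and $y_1,\dots,y_{k-1}$ such that $S$ is isomorphic to $(S_1, x_1, \dots, x_{k-1}) \otimes_k (S_2, y_1, \dots, y_{k-1})$, then by the combinatorial-to-geometric direction of Corollary \ref{cor:glued}, the factors $S_1$ and $S_2$ are slack matrices of polytopes $P_1$ and $P_2$ whose simplicial glued product is affinely isomorphic to $P$. Output $(S_1, S_2)$.

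The main subtle point to verify is that the factors returned by Theorem \ref{thm:main1} are indeed slack matrices and not merely abstract non-negative matrices that combinatorially decompose $S$. This is precisely the content of Corollary \ref{cor:glued}: once $S$ is assumed to be a slack matrix, any $k$-product decomposition of $S$ reflects a genuine geometric decomposition of $P$ into a simplicial glued product, and the rows and columns of $S_1,S_2$ are then interpretable as facet inequalities and vertices of the factor polytopes. If the raw output happens to present duplicate columns arising from the $\otimes$ operation, they can be identified in polynomial time; the special rows in the output correspond to the constraints $x_i \ge 0$ (resp.\ $y_i \ge 0$) that the definition of simplicial glued product already requires to be facet-defining, so no further normalization is needed.
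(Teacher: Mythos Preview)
Your overall strategy---reduce to Theorem~\ref{thm:main1} via Corollary~\ref{cor:glued}---is exactly the paper's approach. However, two technical points are mishandled.

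First, Corollary~\ref{cor:glued} is stated for a \emph{non-redundant} slack matrix $S$, and you never arrange this. Your claim that ``slack matrices of affinely isomorphic polytopes coincide up to permutations of rows and columns'' is false for redundant slack matrices: one can append arbitrary valid inequalities as extra rows, and such rows need not respect any $k$-product structure. So the ``no'' branch of your argument is not justified as written. The paper fixes this by first removing redundant rows in polynomial time, so that Corollary~\ref{cor:glued} applies verbatim.

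Second, and more substantively, you output the raw factors $S_1,S_2$ and assert they are slack matrices of $P_1,P_2$. But Corollary~\ref{cor:glued} (via Lemma~\ref{lem:k-sum_slack}(ii)) only guarantees that $S_1' = S_1 + (\mathbf{1}-x_1-\cdots-x_{k-1})$ and the analogous $S_2'$ are slack matrices of $P_1,P_2$. The added row corresponds to the inequality $\sum_{i} x_i \le 1$, which may well be facet-defining for $P_1$; when it is, $S_1$ itself is missing a facet and is not a slack matrix of $P_1$. Your final paragraph conflates this with the special rows $x_i\ge 0$ being facet-defining, but that is a different inequality. The paper's proof explicitly outputs $S_1',S_2'$ for this reason.
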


A couple of remarks are in order.  First, since affine transformations do not preserve the property of being a simplicial glued product,  determining whether a polytope $P$ is affinely
isomorphic to a simplicial glued product is a different problem than that of determining whether $P$ \emph{equals} $P_1\times_k P_2$ for some polytopes $P_1, P_2$ and some $k$.  Next, since $P_1 \times_1 P_2=P_1\times P_2$, the Cartesian product is a special case of the simplicial glued product.  Therefore, as a special case of Theorem~\ref{thm:main2}, we obtain a polynomial time algorithm to test if a polytope is a Cartesian product.

Polytopes that have a $0/1$-valued slack matrix are called \emph{$2$-level polytopes}. These form a rich class of polytopes including stable set polytopes of perfect graphs, Birkhoff and Hanner polytopes, and stable matching polytopes (see~\cite{aprile2018thesis, aprile20182, macchia2018two} for more examples),  and they have been the topic of many recent investigations (see~\cite{Grande16,kupavskii2020binary}). In the second part of the paper, we use the tools developed in the first part to shed some light on this intriguing class. As a starting point, we pose the following conjecture. 
 
\begin{conj} \label{conj:2levelrec}
Given $S \in \{0,1\}^{m \times n}$, there is an algorithm that is polynomial in $m,n$ which correctly determines if $S$ is the slack matrix of a polytope.
\end{conj}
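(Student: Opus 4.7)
The plan is to first apply Theorem~\ref{thm:main1} iteratively, for each small value of $k$, to decompose $S$ into its $k$-indecomposable factors. Since every decomposition strictly reduces the size of the factors involved, this process terminates after polynomially many invocations of the algorithm of Theorem~\ref{thm:main1}. By Corollary~\ref{cor:glued}, which matches the simplicial glued product of polytopes with the $k$-\op{} at the level of slack matrices, and by the easy converse that a $k$-\op{} of slack matrices is itself a slack matrix, the problem reduces to the case in which $S$ is $k$-indecomposable for every $k$ under consideration. Crucially, the $0/1$ property is preserved by the $k$-\op{}, so the reduction stays inside the 2-level world.

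The second step is to handle the indecomposable case, where one exploits the rigidity of 2-level affine realizations. The defining property is that for every facet-defining inequality $a\transp x \leq b$ there is a parallel hyperplane $a\transp x = b-1$ containing every non-incident vertex (after rescaling rows). Thus the zero pattern of $S$ simultaneously encodes the vertex-facet incidence and all nonzero slacks, and a realization, if it exists, is essentially unique up to affine equivalence. The natural algorithmic attack is: (i) read off the expected embedding dimension $d = \rk(S)-1$ and the combinatorial incidence structure from $S$; (ii) attempt to lift this data to a vertex set in $\R^d$ via the linear system imposed by the 2-level condition on each row; (iii) verify that the resulting point configuration has exactly $S$ as its slack matrix, which is a Polyhedral Verification call that becomes manageable once the candidate realization is in hand.

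For special classes this plan can be carried out: our treatments of stable set polytopes of perfect graphs and 2-level matroid base polytopes provide concrete templates for step (ii), by first identifying the combinatorial type of the indecomposable factor and then invoking a class-specific reconstruction. One would hope to extend these ideas by enlarging the catalogue of indecomposable 2-level polytopes for which a realization can be read off from combinatorial invariants computable from $S$ (for instance, the vertex-facet incidence hypergraph, or certain submatrix minors).

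The main obstacle is precisely the indecomposable case in full generality: indecomposable 2-level polytopes form a wild family (including hypersimplices, many sporadic low-dimensional examples, spanning tree polytopes of certain graphs, and more), and no complete classification is known. A proof of the conjecture therefore likely requires either a structural theorem characterizing $0/1$ matrices that are 2-level realizable, unifying our matroid and perfect-graph results, or a direct polynomial-time realization procedure that bypasses classification by leveraging uniqueness of the 2-level lift together with an efficient verification oracle tailored to the $0/1$ setting.
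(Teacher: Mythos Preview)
The statement is a \emph{conjecture} in the paper, not a theorem: the paper explicitly does not prove it, and only offers evidence by establishing it for two special subclasses (2-level matroid base polytopes and stable set polytopes of perfect graphs). There is therefore no proof in the paper against which to compare your attempt.

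Your proposal is likewise not a proof. You outline a two-step programme---first decompose $S$ via iterated $k$-\op{}s, then handle the indecomposable factors---but you yourself identify that the second step is the entire difficulty and that it remains open: ``The main obstacle is precisely the indecomposable case in full generality \ldots\ no complete classification is known.'' What you have written is a reasonable research plan, and its philosophy is consistent with how the paper treats the special cases, but it does not close the gap. A few of the intermediate claims are also imprecise: the assertion that a 2-level realization is ``essentially unique up to affine equivalence'' is not justified and is not obviously true in general; and invoking Polyhedral Verification does not help, since that problem is itself of unknown complexity (indeed the paper cites its equivalence with slack matrix recognition precisely to explain why the latter is hard). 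In short, the conjecture remains open, and your text should be read as motivation for why the $k$-\op{} machinery is relevant, not as a proof.
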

 
Given the wide variety of $2$-level polytopes, Conjecture~\ref{conj:2levelrec} appears difficult to settle. We provide some evidence 
 by proving it for two restricted classes of $2$-level polytopes. 
  By applying Theorem \ref{thm:main1}, we show that it holds for $2$-level matroid base polytopes~\cite{Grande16}\footnote{A special case of Theorem \ref{thm:main1} can be easily employed to show that Conjecture~\ref{conj:2levelrec} holds for Hanner polytopes, see~\cite{aprile2017vertices} for a definition.} and for stable set polytopes of perfect graphs. 
 
\begin{thm} \label{thm:main3}
Given $S \in \{0,1\}^{m \times n}$, there is an algorithm that is polynomial in $m,n$ which correctly determines whether $S$ is the slack matrix of a $2$-level matroid base polytope.
\end{thm}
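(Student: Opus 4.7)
The plan is to combine Theorem \ref{thm:main2} with the known structural decomposition of $2$-level matroids due to Grande et al., which characterizes the class of $2$-level matroid base polytopes as the closure of a set of indecomposable building blocks under simplicial glued products of small order. More precisely, every $2$-level matroid can be built from uniform matroids by iterated matroid $1$-sums and $2$-sums, and these matroid operations translate, at the level of base polytopes, into the simplicial $1$-glued product (i.e.\ Cartesian product) and the simplicial $2$-glued product, respectively. Since the base polytope of $U_{r,n}$ is the hypersimplex $\Delta_{n,r}$, the class of $2$-level matroid base polytopes is exactly the closure of the hypersimplices under simplicial $k$-glued products for $k \in \{1,2\}$.

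Given this, the algorithm is recursive: on input $S \in \{0,1\}^{m \times n}$, I would first run Theorem \ref{thm:main2} for $k = 1$ and $k = 2$. If $S$ is a simplicial $k$-glued product, the algorithm returns two smaller slack matrices $S_1, S_2$; I recurse on both. If neither decomposition exists, $S$ must be the slack matrix of a hypersimplex, which is verified directly by reading off candidate parameters $(n,r)$ from the row/column counts and sums, writing down the canonical slack matrix of $\Delta_{n,r}$ (with rows indexed by the facets $x_i \geq 0$ and $x_i \leq 1$, columns by the characteristic vectors of $r$-subsets), and checking that $S$ is isomorphic to it via a canonical sorting of rows and columns. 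If any call fails, output \emph{no}.

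For correctness, two directions are needed. In one direction, if $S$ is the slack matrix of a $2$-level matroid base polytope $P(M)$, then by the Grande et al.\ decomposition either $M$ is uniform, in which case $S$ is a hypersimplex slack matrix, or $M$ is a $1$- or $2$-sum of smaller $2$-level matroids $M_1, M_2$, in which case $P(M) = P(M_1) \times_k P(M_2)$ is a simplicial $k$-glued product and $S$ is correspondingly a $k$-product (by Corollary \ref{cor:glued}). In the other direction, if $S = S_1 \otimes_k S_2$ with both $S_i$ recognized by the algorithm as slack matrices of $2$-level matroid base polytopes $P(M_i)$, then the $k$-glued product $P(M_1) \times_k P(M_2)$ is again a $2$-level matroid base polytope (the matroid $1$- or $2$-sum of $M_1, M_2$), and its slack matrix is $S$. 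Termination follows because each decomposition strictly reduces dimensions, so the recursion tree has depth and size polynomial in $m+n$; the total running time is polynomial because each call to Theorem \ref{thm:main2} is polynomial and the hypersimplex check is elementary.

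The main obstacle will be setting up and proving the precise dictionary between matroid sums and simplicial glued products of base polytopes, in particular identifying which coordinates play the role of the special rows: for a matroid $2$-sum along elements $e \in M_1, f \in M_2$, the special coordinates are $x_e$ and $y_f$, and one must verify both that the projections onto these coordinates have the required simplex form and that the corresponding inequalities are facet-defining. Once this dictionary and the decomposition theorem are in place, the algorithm and its analysis proceed as sketched above.
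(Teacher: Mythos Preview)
Your high-level strategy matches the paper's: decompose via $1$- and $2$-products down to hypersimplex leaves. However, the correctness argument has two genuine gaps.

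The first is the \emph{converse} of the dictionary you propose to set up. You plan to show that a matroid $1$- or $2$-sum yields a $k$-product of slack matrices, but you never argue the reverse: if $S$ is the slack matrix of a matroid base polytope and the algorithm of Theorem~\ref{thm:main1} returns \emph{some} factorization $S = S_1 \otimes_k S_2$, why must $S_1,S_2$ again be slack matrices of matroid base polytopes? The algorithm is not guaranteed to return the particular decomposition coming from the matroid structure; a priori it may find a different row partition, and then your recursion would fail on a YES instance. The paper proves this converse as two separate structural results (Lemmas~\ref{lem:1-sum_matroid} and~\ref{lem:2-sum_matroid}): \emph{every} $1$- or $2$-product factorization of a matroid slack matrix arises from a matroid $1$- or $2$-sum. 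These are not automatic and use matroid-specific arguments (rank additivity for the partition, Tutte's connectivity result that one of $M\setminus e$, $M/e$ is connected, and the bijective basis-exchange axiom).

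The second gap is in your ``other direction''. You assert that if $S_1,S_2$ are recognized as slack matrices of $2$-level matroid base polytopes then $S_1 \otimes_2 S_2$ is one as well, but this is false in general: the $2$-product corresponds to a matroid $2$-sum only when the two special rows are of complementary type (one coming from $x_p \geq 0$, the other from $x_p \leq 1$). If both special rows are non-negativity rows, the glued product is a polytope that is \emph{not} a matroid base polytope, and your algorithm would output a false positive. The paper handles this by decomposing all the way to hypersimplex leaves and then solving a $2$-coloring problem on the resulting tree to check whether a globally coherent assignment of row types exists; the identity factors $I_2$ have forced colors, and one must verify the constraints are satisfiable.
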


 \begin{thm}\label{thm:recog-slack-stab(G)}
 Given $S\in \{0,1\}^{m\times n}$, there is an algorithm that is polynomial in $m,n$ which correctly determines whether $S$ is the slack matrix of the stable set polytope of a perfect graph.
 \end{thm}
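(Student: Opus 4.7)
Proof proposal. The plan is to combine Theorem~\ref{thm:main1} with a graph-reconstruction procedure and polynomial-time perfect graph recognition. The structure mirrors the proof of Theorem~\ref{thm:main3}: decompose using $k$-\op{}s until an atomic slack matrix is reached, then handle atoms directly.

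First, I would establish the polytope-side correspondence: if $G$ is a perfect graph with a clique cutset $K$ separating $G$ into $G_1$ and $G_2$ (with $K = V(G_1) \cap V(G_2)$), then $\stab(G) = \stab(G_1) \times_{|K|+1} \stab(G_2)$ as a simplicial glued product, and both $G_1, G_2$ are perfect. By Corollary~\ref{cor:glued}, this translates into the slack matrix of $\stab(G)$ being a $(|K|{+}1)$-\op{} of the slack matrices of $\stab(G_1)$ and $\stab(G_2)$. Conversely, every $k$-\op{} decomposition of a candidate slack matrix of $\stab(G)$ corresponds to a clique-cutset decomposition of $G$. Applying Theorem~\ref{thm:main1} recursively, we may therefore assume $S$ is atomic, in which case the candidate graph (if one exists) is a perfect graph with no clique cutset.

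Next, I would reconstruct a candidate graph $G$ from an atomic $S$. The key algebraic identity is that, for every maximal clique $K$ of $G$ and its corresponding row $r_K$ of the slack matrix, $r_K + \sum_{v \in K} r_v = \mathbf{1}$ componentwise, where $r_v$ denotes the row for vertex $v \in V(G)$. This holds because, for any stable set $T$, $|K \cap T| \in \{0,1\}$ and both sides evaluate to $1$. Using this identity, I would try to find a bipartition of the rows of $S$ into a set $V$ of vertex rows and a set $C$ of clique rows such that each $r \in C$ can be written as $\mathbf{1} - r = \sum_{v \in W_r} r_v$ for some $W_r \subseteq V$ whose rows have pairwise disjoint supports. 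From $V$, define $G$ by declaring $uv$ to be a non-edge precisely when some column of $S$ has a $1$ in both $r_u$ and $r_v$; the subsets $W_r$ should then be exactly the maximal cliques of $G$.

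Finally, I would run the polynomial-time perfect graph recognition algorithm of Chudnovsky, Cornu\'ejols, Liu, Seymour and Vu\v{s}kovi\'c to certify that $G$ is perfect, and directly check that $S$ equals the slack matrix of $\stab(G)$ up to row and column permutation. The main obstacle is identifying $V$ efficiently in the atomic case. Small examples such as $K_2$ (whose slack matrix is a $3 \times 3$ permutation matrix) show that the bipartition into vertex rows and clique rows is not unique, so the reconstruction must be defined carefully enough that any valid choice leads to the same graph up to isomorphism. Pinning down such a canonical choice (or enumerating a polynomial-size family of candidates), and showing that atomicity of $S$ matches the absence of a clique cutset in the reconstructed graph, is the technical heart of the proof.
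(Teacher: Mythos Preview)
Your plan has a genuine gap that keeps it from being polynomial. Theorem~\ref{thm:main1} recognizes $k$-\op{}s in polynomial time only for a \emph{fixed constant} $k$; the stated running time is $O(m^{k+3}(m+n))$. But a clique cutset in a perfect graph can have size linear in $|V(G)|$: take a clique $K_t$ and attach two further vertices $u,v$, each adjacent to all of $K_t$. This graph is chordal (hence perfect), and since $u,v$ is the only non-adjacent pair and their common neighbourhood is $K_t$, the unique minimal cutset is $K_t$ itself. To reduce $S$ to ``atomic'' factors with no clique cutset whatsoever you would therefore need to test $k$-\op{} structure for every $k$ up to roughly $m$, giving super-polynomial total cost. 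Nothing in the paper supplies a uniform-in-$k$ version of Theorem~\ref{thm:main1}, and your proposal does not either, so the recursive decomposition step as written is not polynomial. A second, smaller gap: you assert that every $k$-\op{} decomposition of a slack matrix of $\stab(G)$ comes from a clique cutset, but Theorem~\ref{thm:stab(G)k-sum} only gives this when the special rows are non-negativity rows. The analogous statement for matroids (Lemma~\ref{lem:2-sum_matroid}) required a separate argument to force the special row into the right form, and you give none here.

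The paper's proof takes an entirely different route and, as it explicitly remarks, does not use $k$-\op{}s at all. The key observation is that the column for the empty stable set singles out the vertex rows: in a non-redundant slack matrix of $\stab(G)$ for perfect $G$, this column has a $0$ precisely in the $|V(G)|$ non-negativity rows and a $1$ in every clique row, and $|V(G)| = \dim\stab(G) = \rk(S)-1$. So the algorithm tries each column with exactly $\rk(S)-1$ zeros as the candidate empty-stable-set column; its zeros name the vertex rows, the graph $G$ is read off directly (two candidate vertices are non-adjacent iff some column has a $1$ in both of their rows), and one then verifies, via output-polynomial Bron--Kerbosch enumeration, that the remaining rows list exactly the maximal cliques of $G$ and the columns list exactly its stable sets. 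This sidesteps both the unbounded-$k$ issue and the row-bipartition ambiguity you identified, and does not invoke a separate perfect-graph recognition subroutine.
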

 
 We will also describe a connection between simplicial glue products and clique cut-sets of graphs.
 In light of our results, one might ask how prevalent the $k$-product operation is among $2$-level polytopes. As an experimental answer, in \cite{aprile2018thesis} it is shown that, up to dimension $7$, roughly half of the slack matrices of  $2$-level polytopes are $k$-products for some $k$.

 \smallskip
 
\noindent {\bfseries{Paper Outline.}}
In Section~\ref{sec:def} we study properties of simplicial glued products, and of $k$-\op{}s of slack matrices. In Section~\ref{sec:algorithms} we give algorithms to efficiently recognize 1-\op{}s and $k$-\op{}s, as well as showing a unique decomposition result for 1-\op{}s.
In Section~\ref{sec:slackmatroids} we apply the previous results to slack matrices of matroid base polytopes, obtaining Theorem \ref{thm:main3}. In Section~\ref{sec:stab(G)k-sum} we deal with stable set polytopes of perfect graphs, and their slack matrices, proving Theorem~\ref{thm:recog-slack-stab(G)}. 

The results presented in this paper are contained in the PhD thesis of the first author \cite{aprile2018thesis}, to which we refer for further details.

 \smallskip

\noindent  {\bfseries{Conference version.}} This paper is the journal version of the extended abstract~\cite{aprilerecognizing}. Note that~\cite{aprilerecognizing} does not deal with glued products and general $k$-products, and instead mainly focuses on $1$-products and $2$-products. Results from Section~\ref{sec:slackmatroids} appear in~\cite{aprilerecognizing} without proofs. Section~\ref{sec:stab(G)k-sum} only appears in this journal version.

\section{Simplicial glued products, slack matrices, and $k$-\op{}s}\label{sec:def}

In this section, after proving geometric properties of simplicial glued products and recalling preliminary results on slack matrices,
we show that the operation of $k$-\op{} essentially preserves the property of being a slack matrix.

\begin{lem}\label{lem:gluedprod}
Let $P=P_1 \times_k P_2$ be the simplicial glued product of $P_1$ and $P_2$ with respect to $x_1,\dots,x_{k-1}$, $y_1,\dots, y_{k-1}$, where $x_1\geq 0,\dots,x_{k-1}\geq 0$ and $y_1\geq 0,\dots,y_{k-1}\geq 0$ are facets of $P_1$ (resp., $P_2$). Then:
\begin{enumerate}
    \item[(i)] $P= (P_1 \times P_2) \cap H$, where $H=\{x_i=y_i \, :i \in [k-1] \}$.
    \item[(ii)]$V(P)= V_0 \cup V_1 \cup \dots \cup V_{k-1}$,
     where we define $$V_0:=\{\{x \in V(P_1)\hbox{ such that $x_1=\dots=x_{k-1}=0$}\} \times \{ y \in V(P_2) \hbox{ such that $y_1=\dots=y_{k-1}=0$}\}\},$$ and, for $i=1,\dots, k-1$, $$V_i:=\{\{x \in V(P_1) \hbox{ such that $x_i=1$}\} \times \{y \in V(P_2) \hbox{ such that $y_i=1$}\}\}.$$   
      \item[(iii)] Fix a non-redundant, full-dimensional inequality description of $P_1$ and $P_2$.  
      Consider the description of $P$ obtained by juxtaposing the descriptions of $P_1$ and $P_2$ and replacing $y_i$ with $x_i$ for each $i=1,\dots, k-1$. Then any inequality of this description that is redundant involves variables $x_1,\dots,x_{k-1}$ only. 
      \item[(iv)] The description obtained in the previous part is non-redundant, apart from repeated inequalities $x_i\geq 0$, $i\in [k-1]$, and the inequality  $\sum_{i=1}^{k-1} x_i\leq 1$ that may possibly be repeated and/or redundant.



\end{enumerate}
\end{lem}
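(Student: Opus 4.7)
I will prove the four items in order, using the consequence of the simplicial hypothesis (consistent with the $k$-product definition on matrices) that every vertex of $P_1$ projects to a vertex of the simplex $\Delta := \conv\{\mathbf{0}, e_1, \dots, e_{k-1}\}$, and analogously for $P_2$. This yields partitions $V(P_1) = A_0 \cup A_1 \cup \cdots \cup A_{k-1}$ with $A_j := \{v \in V(P_1) : v_j = 1\}$ for $j \in [k-1]$ and $A_0 := \{v \in V(P_1) : v_1 = \cdots = v_{k-1} = 0\}$, and analogously $V(P_2) = B_0 \cup \cdots \cup B_{k-1}$, so that $V_j = A_j \times B_j$. For~(i), the inclusion $P \subseteq (P_1 \times P_2) \cap H$ is immediate by convexity of the right-hand side. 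For the converse, given $(x, y) \in (P_1 \times P_2) \cap H$ with convex expressions $x = \sum_v \lambda_v v$ and $y = \sum_w \mu_w w$, the equality $x_j = y_j$ forces $p_j := \sum_{v \in A_j} \lambda_v = \sum_{w \in B_j} \mu_w$ for every $j \in [k-1]$ (and the $j = 0$ case follows from $\sum \lambda_v = \sum \mu_w = 1$). Independent coupling within each class, namely $\xi_{v,w} := \lambda_v \mu_w / p_j$ for $(v, w) \in A_j \times B_j$ (skipping classes with $p_j = 0$), produces a convex combination of matching vertex pairs summing to $(x, y)$, placing $(x, y)$ in $P$.

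For~(ii), the inclusion $V(P) \subseteq V_0 \cup \cdots \cup V_{k-1}$ is immediate from $P = \conv(V_0 \cup \cdots \cup V_{k-1})$. Conversely, given $(v, w) \in V_j$, pick linear functionals $c_1, c_2$ uniquely maximized at $v$ on $P_1$ and at $w$ on $P_2$; then $(c_1, c_2)$ is uniquely maximized at $(v, w)$ over $P_1 \times P_2 \supseteq P$, so $(v, w)$ is a vertex of $P$.

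For~(iii) and~(iv), I will show that every facet-defining inequality $a^{\top} x \leq b$ of $P_1$ (and, symmetrically, of $P_2$) defines a facet of $P$ in the combined description, namely $(F_1 \times P_2) \cap H$ where $F_1$ is the induced facet of $P_1$. Full-dimensionality of $P_2$ makes the $k-1$ functionals $x_i - y_i$ linearly independent on $\aff(F_1 \times P_2) = \aff(F_1) \times \mathbb{R}^{d_2}$: fixing $x \in \aff(F_1)$ and varying $y$ freely, any relation $\sum_i c_i(x_i - y_i) \equiv 0$ forces $c_i = 0$ by looking at the $y$-coefficients. Hence $\dim\bigl((F_1 \times P_2) \cap H\bigr) = \dim F_1 + d_2 - (k-1) = \dim P - 1$, a facet. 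Consequently any redundant inequality in the combined description must coincide (up to positive scalar) with a facet-defining inequality from the opposite side; since $P_1$-inequalities use variables $\{x_1, \dots, x_{d_1}\}$ and $P_2$-inequalities (after substitution) use $\{x_1, \dots, x_{k-1}, y_k, \dots, y_{d_2}\}$, coincidence can only occur in the shared variables $\{x_1, \dots, x_{k-1}\}$, proving~(iii). For~(iv), such an inequality pulls back from a facet-defining inequality of the projection $\pi(P_1) = \Delta$, whose only facets are $\{x_i = 0\}$ for $i \in [k-1]$ and $\{\sum_i x_i = 1\}$. The $x_i \geq 0$ inequalities are facet-defining for both $P_1$ and $P_2$ by hypothesis and hence appear duplicated; the inequality $\sum_i x_i \leq 1$ is valid on both polytopes (being a pullback of a valid inequality of $\Delta$) and may be facet-defining for either, both, or neither, so it may be duplicated, appear once, and/or be redundant.

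The main obstacle is the linear-independence check underlying~(iii). It succeeds uniformly across all choices of $F_1$ thanks to the full-dimensionality of $P_2$, but its subtle counterpoint is instructive: if one were to intersect $P$ with facet-defining inequalities from both $P_1$ and $P_2$ simultaneously (for instance $\sum x_i = 1$ on the $P_1$-side paired with $\sum y_i = 1$ on the $P_2$-side), a genuine dependency among the $x_i - y_i$ appears on the joint affine hull. This is precisely the mechanism that forces the duplicate copies of $\sum_i x_i \leq 1$ (when it is facet-defining for both factors) to cut out the same face, yielding the redundancy flagged in~(iv).
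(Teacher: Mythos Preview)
Your arguments for~(i) and~(ii) are correct, and your coupling construction for~(i) is cleaner than the paper's: the paper restricts to rational $(x^*,y^*)$ by density, clears denominators, and builds an explicit bijection between multisets; your class-wise product coupling $\xi_{v,w}=\lambda_v\mu_w/p_j$ does the same job directly for all real points.

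Your argument for~(iii), however, has a genuine gap. From the linear independence of the functionals $x_i-y_i$ on $\aff(F_1\times P_2)$ you correctly get
\[
\dim\bigl(\aff(F_1\times P_2)\cap H\bigr)=\dim F_1+d_2-(k-1),
\]
but this is only an upper bound for $\dim\bigl((F_1\times P_2)\cap H\bigr)$; equality requires that $H$ meet the relative interior of $F_1\times P_2$, which you do not verify. In fact your stated claim that \emph{every} facet of $P_1$ induces a facet of $P$ is false. Take $k=2$, $P_1=[0,1]^2$ with special coordinate $x_1$, and $P_2=\conv\{(0,0),(0,1),(1,0)\}$ with special coordinate $y_1$. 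The facet $F_1=\{x_1=1\}$ of $P_1$ gives $(F_1\times P_2)\cap H=\{(1,t,1,0):t\in[0,1]\}$, which is $1$-dimensional, while $\dim P-1=2$; correspondingly $x_1\le 1$ is implied by $x_1+y_2\le 1$ and $y_2\ge 0$. This is precisely the $\sum_i x_i\le 1$ redundancy that~(iv) allows, but it breaks the logical chain you set up: once some inequality fails to define a facet, you can no longer conclude that redundancy forces coincidence with an inequality from the other side. The paper avoids this by arguing contrapositively: assuming an inequality using a variable outside $x_1,\dots,x_{k-1}$ is redundant, it writes it as a conic combination of the remaining inequalities, splits the combination into its $P_1$- and $P_2$-parts, and uses that the $P_2$-part must live over the simplex variables to derive a contradiction. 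Your route can be repaired, but you must restrict the facet claim to those $F_1$ whose projection to the special coordinates is still all of $\Delta$ (equivalently $F_1\cap A_j\neq\emptyset$ for every $j$), and supply an honest polytope-dimension argument in that case; the $x_i\ge 0$ and $\sum_i x_i\le 1$ facets then need to be handled separately, as the paper does in~(iv).
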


\begin{proof}
\begin{enumerate}
    \item[(i)] A proof of this fact can be found in~\cite[Lemma 4]{tiwary2020extension}. For completeness, we show it again here. 
    
    ``$\subseteq$'' inclusion is clear.
To prove the opposite inclusion, consider a point $p = (x^*,y^*) \in (P_1 \times P_2) \cap H \subseteq \R^{m_1} \times \R^{m_2}$, for some $x^* \in P_1$ and $y^* \in P_2$. Using a density argument, it suffices to show that $(x^*,y^*) \in P_1\times_k P_2$ when $(x^*,y^*)$ has all rational coordinates. Then $x^*$ is a convex combination of the vertices of $P_1$ and $y^*$ is a convex combination of the vertices of $P_2$, where the coefficients are all rational:
\[
x^* = \displaystyle\sum_{i = 1}^{n_1} \lambda_i v_i \quad\text{and}\quad
y^* = \displaystyle\sum_{j = 1}^{n_2}\mu_j w_j \qquad\text{with}\quad \sum_i\lambda_i=\sum_j\mu_j=1,\quad \lambda_i,\mu_j \in \Q_+\,.
\]
Then there exists a positive integer $K$ such that $K\lambda_i \in \N$ and $K\mu_j \in \N$ for every $i \in [n_1]$ and every $j \in [n_2]$. Moreover, $K = K\sum_i \lambda_i= K\sum_j \mu_j$.

Let us partition the set of vertices of $P_1$ that occur in the convex combination into $k$ subsets, according to coordinates $(x_1,\dots, x_{k-1})$. For $\ell=1,\dots,k-1$, let $V_1^\ell$ be the set of $v_i$'s with $\ell$-th coordinate equal to 1, and let $V_1^0$ be the set of the remaining $v_i$'s (i.e., with the first $k-1$ coordinates all equal to 0, since each $v_i$ has at most one of the first $k-1$ coordinates equal to 1). The sets $V_2^0,\dots,V_2^{k-1}$ are defined similarly.
For $x^*$, we have:
\[
x^* = \sum_{i: v_i \in V_1^{0}} \lambda_iv_i + \dots+ \sum_{i: v_i \in V_1^{k-1}} \lambda_iv_i\,.
\]
We split in the same way the identity $K = K \sum_i \lambda_i$. Thus $K = \alpha_{0}+\dots+ \alpha_{k-1}$, where
$\alpha_{\ell} := \sum_{i:v_i \in V_1^{\ell}} (K\lambda_i)$ for $\ell=0,\dots,k-1$. 
Applying the same reasoning to $y^*$, we get that $K = \beta_{0} + \dots + \beta_{k-1}$, where
$\beta_{\ell} := \sum_{j: w_j \in V_2^{\ell}} (K\mu_j)$ for $\ell=0,\dots,k-1$.

Since the first $n-1$ coordinates of the $v_i$'s and $w_j$'s are $0/1-$valued, we have that $Kx^*_\ell = \alpha_{\ell}$ and $Ky^*_\ell = \beta_{\ell}$ for $\ell=1,\dots,k-1$.
Exploiting the fact that the first $k-1$ coordinates of $x^*$, $y^*$ are equal, we have that $\alpha_{\ell} = \beta_{\ell}$ for $\ell=1,\dots,k-1$. These identities jointly  imply that $\alpha_{0} = \beta_{0}$.

Fix $\ell \in \{0,\dots,k-1\}$. 
The coefficients $\alpha_{\ell}$, $\beta_{\ell}$ coincide with the number of vectors $v_i$ in $V_1^{\ell}$ and $w_j$ in $V_2^{\ell}$ when counted with their multiplicity $K\lambda_i$ in the identity of $Kx^*= K\sum_i\lambda_i v_i$ and $K\mu_j$ in $Ky^*=K \sum_j\mu_j w_j$ respectively. Consider then the multiset $\overline{V}_1^{\ell}$ containing each vector $v_i$ with multiplicity $K\lambda_i$ and, similarly $\overline{V}_2^{\ell}$ containing each vector $w_j$ with multiplicity $K\mu_j$.
As $|\overline{V}_1^{\ell}|=|\overline{V}_2^{\ell}|= \alpha_{\ell}$, there exists a bijection $\Phi_{\ell}$ from the first to the latter. Let $\overline{\Phi}_\ell$ be the truncation of $\Phi_\ell$ excluding coordinates $y_1,\dots,y_{k-1}$.   Hence, the vectors  $(v_i,\overline{\Phi}_\ell(v_i))$ are vertices of $P$ for every $v_i \in \overline{V}_1^{\ell}$. 

We can now express $p = (x^*,y^*)$ as:
\[
(x^*,y^*) = \frac{1}{K} \Bigg( \sum_{v_i \in \overline{V}_1^{0}} (v_i,\overline{\Phi}_0(v_i)) +\dots +\sum_{v_i \in \overline{V}_1^{k-1}} (v_i,\overline{\Phi}_{k-1}(v_i))\Bigg)
\]
This shows that $(x^*,y^*)$ lies in the convex hull of vertices of $P$.

 \smallskip
 
 \item[(ii)] $V\subseteq V_0 \cup V_1 \cup \dots \cup V_{k-1}$ holds since points in $V_0 \cup \dots \cup V_{k-1}$ are vertices of $P_1 \times P_2$, and by part (i), $P\subseteq P_1\times P_2$. For the opposite inclusion, take any $v \in V_i$ for some $i =0,\dots, k-1$. By definition of glued product, $v\in P$. We now show that $v\in V(P)$. The components of $v$ in coordinates $x_1,\dots,x_{k-1}$ are $0/1$. Since no $0/1$ point can be written as a convex combination of other $0/1$ points, all points in a convex combination giving $v$ must agree with $v$ on coordinates corresponding to $x_1,\dots, x_{k-1}$. Hence, if $v \notin V(P)$, $v$ is obtained as a convex combination of points from $V_i$. This implies that the restriction of $v$ to the coordinates of $P_1$ is not a vertex of $P_1$, a contradiction.
 

 
 
 \smallskip

 \item[(iii)] Let $P_1=\{x : A^1 x \leq b_1 \}$, $P_2=\{y : A^2 y \leq b_2\}$ be irredundant linear descriptions of $P_1$ and $P_2$ as required by the lemma. Notice that the two descriptions do not contain implicit equations. Let $Ax + By \leq b$ the system obtained as described in the lemma. We remark that such system indeed describes $P$, thanks to part i) of the lemma.
 
 Consider an inequality from $A x +By \leq b$ that involves variables other than $x_1,\dots, x_{k-1}$, and suppose it is redundant. Without loss of generality, we assume this is an (irredundant) inequality of $A^1x\leq b_1$, hence we denote it by $ax \leq \beta$. $ax\leq \beta$ can be written as a conic combination of other inequalities from $Ax + By \leq b$. Hence in particular $a=a^1+a^2$, where $a^1$ (resp. $a^2$) is a conic combination of rows of $A^1$ (resp. $A^2$), and there are $\beta_1, \beta_2$ with $\beta_1+\beta_2=\beta$ such that $a^1x\leq \beta_1$ (resp. $a^2y\leq \beta_2$)  is valid for $P_1$ (resp. $P_2$). Now, $a^2$ must have entry 0 in correspondence of any $y_j$ variable with $j>k-1$. Hence, $a^2y\leq \beta_2$ is a valid inequality of $P_2$ involving variables $y_1,\dots, y_{k-1}$ only, implying that $a^1x\leq \beta_1$ involves variables other than $x_1,\dots, x_{k-1}$. Moreover, since by definition of simplicial glued product the projection of $P_2$ over $y_1,\dots, y_{k-1}$ is the same as the projection of $P_1$ over $x_1,\dots, x_{k-1}$, we have that $a^2x\leq \beta_2$ is valid for $P_1$. But then $ax \leq \beta$, which is facet-defining for $A^1x\leq b_1$, becomes redundant after adding $a^2x\leq \beta_2$ to the system: this implies that $a^2x\leq \beta_2$ is obtained by scaling $ax \leq \beta$ by a positive factor, but this contradicts the fact that the latter inequality involves variables other than $x_1,\dots,x_{k-1}$, and the former does not.
 
  \item[(iv)] Consider the description $P$ as in the previous part. In order to conclude the thesis, we must show that no inequality that involve variables $x_1,\dots, x_{k-1}$ only and is different from 
  $\sum_{i=1}^{k-1} x_i\leq 1$ is redundant (apart from repeated inequalities). We remark that, since the projection of $P_1$ on variables $x_1,\dots, x_{k-1}$ is a $k-1$ dimensional simplex, the only  inequalities involving variables $x_1,\dots, x_{k-1}$ only that can be non-redundant in $P_1$ are $x_i\geq 0$, $i\in [k-1]$, and $\sum_{i=1}^{k-1} x_i\leq 1$. The same holds for $P_2$. Hence we only need to show that inequalities $x_i\geq 0$, $i\in [k-1]$ are non-redundant in our description of $P$. By contradiction, assume that $x_i\geq 0$ is redundant: hence $x_i=0$ defines the affine hull of a face of $P$ that is strictly contained in a facet of $P$. In particular, there is an inequality of $P$ (different from $x_i\geq 0$) satisfied at equality by all vertices $(x,y)\in V(P)$ with $x_i=0$. This inequality is also facet defining in $P_1$ (without loss of generality). But this implies that $x_i\geq 0$ is redundant in the description of $P_1$, a contradiction with the definition of simplicial glued product.

 \end{enumerate}
\end{proof}

We will denote the set of column vectors of a matrix $S$ by $\col(S)$. The following characterization of slack matrices is due to~\cite{gouveia2013nonnegative}.

\begin{thm}[Gouveia~\emph{et al.} \cite{gouveia2013nonnegative}]\label{thm:char_slack-matrices}
Let $S\in \R^{m \times n}$ be a nonnegative matrix of rank at least 2. Then $S$ is the slack matrix of a polytope if and only if $\conv(\col(S)) = \aff(\col(S)) \cap \R_+^m$. 
\end{thm}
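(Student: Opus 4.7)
The plan is to prove both directions by exploiting the affine structure of the slack map $\phi(x) := b - Ax$, which reduces the equivalence to the behavior of $\phi$ on $\aff(P)$.

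For the forward direction ($\Rightarrow$), suppose $S$ is the slack matrix of a polytope $P = \conv\{v_1,\ldots,v_n\} = \{x : Ax \leq b,\ A^= x = b^=\}$, and consider the affine map $\phi(x) := b - Ax$, which satisfies $\phi(v_j) = S^j$. Affinity gives $\phi(P) = \conv(\col(S))$ and $\phi(\aff(P)) = \aff(\col(S))$. The inclusion $\conv(\col(S)) \subseteq \aff(\col(S)) \cap \R_+^m$ is immediate because the inequalities $Ax \leq b$ are valid on $P$. For the reverse, given $y \in \aff(\col(S)) \cap \R_+^m$, surjectivity of $\phi$ onto $\aff(\col(S))$ yields $x \in \aff(P)$ with $\phi(x) = y$; from $y \geq 0$ and $x \in \aff(P)$ I read off $Ax \leq b$ and $A^= x = b^=$, hence $x \in P$ and $y = \phi(x) \in \conv(\col(S))$.

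For the backward direction ($\Leftarrow$), the idea is to build the polytope directly from $S$. Let $Q := \conv(\col(S))$, a polytope as the convex hull of finitely many points, and write $\aff(\col(S)) = \{y \in \R^m : Cy = d\}$ for some matrix $C$ and vector $d$. The hypothesis $Q = \aff(\col(S)) \cap \R_+^m$ becomes the description $Q = \{y \in \R^m : Cy = d,\ y \geq 0\}$. Reading off the slack of the inequality $-y_i \leq 0$ at the point $S^j$ gives $0 - (-S_{ij}) = S_{ij}$, so $S$ is, by construction, the slack matrix of $Q$ with respect to this inequality/equation description.

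The main obstacle is a matching subtlety: the paper's convention requires the columns of a slack matrix to be in bijection with the vertices of the underlying polytope. Nothing in the hypothesis directly forces each column $S^j$ to be extreme in $Q$, so a careful argument must show that under the condition $Q = \aff(\col(S)) \cap \R_+^m$ the vertices of $Q$ are precisely the points of $Q$ lying on a maximal collection of coordinate hyperplanes $y_i = 0$, and that the columns $S^j$ supply all of these (up to removing duplicates or non-extreme columns, which does not affect either side of the claim). The rank-at-least-two hypothesis rules out the degenerate case $\dim Q \leq 0$ in which the characterization becomes vacuous. This combinatorial verification, rather than the geometric content, is where the most care is needed.
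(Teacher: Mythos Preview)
The paper does not supply its own proof of this theorem: it is quoted as a result of Gouveia \emph{et al.}\ and used as a black box. There is therefore nothing to compare your argument against in the paper itself.

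That said, your approach is the standard one and the geometric content is correct. Both directions hinge on the fact that the slack map $\phi(x)=b-Ax$ restricts to an affine isomorphism from $\aff(P)$ onto $\aff(\col(S))$ (this is where $\rk(S)=\dim(P)+1$ enters), after which the forward direction is immediate and the backward direction amounts to reading $Q=\conv(\col(S))$ off the description $\{y:Cy=d,\ y\ge 0\}$.

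You are right to flag the vertex issue, and it is not merely a technicality: under the paper's standing convention that the columns of a slack matrix are exactly the vertices of the polytope, the $(\Leftarrow)$ direction as literally stated can fail. For instance $S=\begin{pmatrix}0&1&2\\2&1&0\end{pmatrix}$ has rank~$2$ and satisfies $\conv(\col(S))=\aff(\col(S))\cap\R_+^2$, yet the middle column is not extreme, so $S$ is not a slack matrix in the paper's sense. The original source does not impose this minimality convention, which resolves the discrepancy; your write-up would be cleaner if you simply noted this and either dropped the convention for the purpose of the proof or observed that the equivalence holds modulo deleting non-extreme columns.
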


Throughout the paper, we will assume that the matrices we deal with are of rank at least $2$, so we may apply Theorem \ref{thm:char_slack-matrices} directly. 
We also recall the following useful fact:

\begin{lem}[Gouveia \emph{et al.}~\cite{gouveia2013nonnegative}]\label{lem:slackconvcol}
If $S$ is the slack matrix of a polytope $P$ with $\dim(P) \geq 1$, then $P$ is affinely isomorphic to $\conv(\col(S))$. In addition, we have $\dim(P)=\rk(S)-1$.
\end{lem}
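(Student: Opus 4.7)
The plan is to exhibit an explicit affine isomorphism $P \to \conv(\col(S))$ and then compute the dimension of $\conv(\col(S))$ in terms of $\rk(S)$.

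First, I would consider the affine map $\phi \colon \R^d \to \R^m$ defined by $\phi(x) := b - Ax$, where $(A,b)$ is the inequality part of the description of $P$. By the definition of the slack matrix, $\phi(v_j) = S^j$ for every $j \in [n]$, so applying $\phi$ to $P = \conv\{v_1,\dots,v_n\}$ yields $\phi(P) = \conv(\col(S))$. Next, I would verify that $\phi$ is injective on $\aff(P)$: if $x,y \in \aff(P)$ satisfy $\phi(x) = \phi(y)$, then $A(x-y) = 0$, and since $A^= x = A^= y = b^=$, also $A^=(x-y) = 0$. Hence $x-y$ lies in the recession cone $\{d : Ad \leq 0,\ A^= d = 0\}$ of $P$, which is $\{0\}$ because $P$ is bounded. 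Thus $\phi$ restricts to an affine isomorphism from $P$ onto $\conv(\col(S))$, establishing the first claim.

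For the dimension equality, since an affine isomorphism preserves dimension, it suffices to show $\dim(\conv(\col(S))) = \rk(S) - 1$. Let $M$ denote the matrix obtained from $S$ by prepending an all-ones row. By a standard identity, $\dim(\aff(\col(S))) = \rk(M) - 1$, so the task reduces to showing that $\mathbf{1}^\top$ lies in the row space of $S$ (equivalently, $\rk(M) = \rk(S)$). Concretely, I would seek $\lambda \in \R^m$ and a vector $\mu$ such that $\lambda^\top A = \mu^\top A^=$ and $\lambda^\top b - \mu^\top b^= = 1$. Writing $V$ for the matrix whose columns are the vertices of $P$, any such pair gives $\lambda^\top S = \lambda^\top(b \mathbf{1}^\top - AV) = (\lambda^\top b - \mu^\top b^=) \mathbf{1}^\top = \mathbf{1}^\top$, since $\lambda^\top A V = \mu^\top A^= V = \mu^\top b^= \mathbf{1}^\top$.

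The only way such $(\lambda,\mu)$ can fail to exist is if the column vector $(b, b^=)$ lies in the column span of the matrix stacking $A$ on top of $A^=$, i.e., if there exists $c \in \R^d$ with $Ac = b$ and $A^= c = b^=$, a point saturating every defining inequality of $P$. If such $c$ existed, then for any vertex $v$ of $P$ one would have $A(v-c) \leq 0$ and $A^=(v-c) = 0$, placing $v - c$ in the recession cone of $P$ and forcing $v = c$; this would make $P$ a single point, contradicting $\dim(P) \geq 1$. Hence $(\lambda,\mu)$ exists and the proof is complete. The main obstacle I foresee is precisely this last step, which crucially combines the boundedness of $P$ with the hypothesis $\dim(P) \geq 1$ via a small Farkas-type dichotomy; everything else is essentially bookkeeping with affine maps.
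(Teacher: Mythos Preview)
Your proof is correct. Note, however, that the paper does not actually prove this lemma: it is quoted from Gouveia~\emph{et al.}~\cite{gouveia2013nonnegative} and stated without proof, so there is no ``paper's own proof'' to compare against.

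For the record, your argument is sound in both parts. The injectivity of $\phi$ on $\aff(P)$ is clean: from $\phi(x)=\phi(y)$ you get $A(x-y)=0$, and since $\aff(P)\subseteq\{x:A^=x=b^=\}$ you also get $A^=(x-y)=0$, placing $x-y$ in the recession cone (indeed the lineality space) of the bounded polyhedron $P$. For the rank identity, your linear-algebraic alternative is exactly right: a vector $\nu=(\lambda,-\mu)$ with $\tilde A^\top\nu=0$ and $\tilde b^\top\nu=1$ exists iff $\tilde b$ is not in the column space of $\tilde A$ (orthogonal complement of the left null space), and the existence of $c$ with $Ac=b$, $A^=c=b^=$ would force every vertex to equal $c$ by the same recession-cone argument. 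The use of $\dim(P)\geq 1$ is located precisely where it is needed.
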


We point out that the slack matrix of a polytope $P$ is not unique, as it depends on the given descriptions of $P$. We say that a slack matrix is \emph{non-redundant} if its rows bijectively correspond to the facets of $P$.
In particular, non-redundant slack matrices do not contain two identical rows or columns, nor rows or columns which are all zeros, or all non-zeros. 
A non-redundant slack matrix associated to a polytope is unique up to permuting rows and columns, and scaling rows by strictly positive reals.   

Given a slack matrix $S$, a canonical way to construct a polytope $P$ whose slack matrix is $S$ is given by the \emph{slack embedding}, defined as $P:=\conv(\col(S))$, see Lemma~\ref{lem:slackconvcol}. Moreover, if $S$ is a slack matrix of polytopes $P_1,P_2$, then $P_1$ and $P_2$ are affinely isomorphic. For proofs of these facts and more properties of slack matrices, see, e.g.,~\cite{bohn2019enumeration}.

We remark that, from the algorithmic point of view, one can assume to deal with non-redundant slack matrices. 
Indeed, in light of Lemma \ref{lem:slackconvcol}, one can efficiently check whether some columns are redundant (i.e., are contained in the convex hull of the others) by using linear programming. On the other hand, zero rows are always redundant, and whether a non-zero row is redundant can be checked by verifying that its set of zeros (i.e., the vertices lying on the corresponding face) is maximal.

  We now prove the main lemma of the section. We first write the special case of $1$-\op{}s, which shows that a 1-product is a slack matrix if and only if the factors are. We refer to \cite{aprile2018thesis, aprilerecognizing} for the proof, which is an easy application of Theorem \ref{thm:char_slack-matrices}. 
  
  \begin{lem}\label{lem:1-sum_slack}
Let $S\in\R_+^{m\times n}$ and let $S_i\in\R_+^{m_i\times n_i}$ for $i \in [2]$ such that $S=S_1\otimes S_2$. $S$ is the slack matrix of a polytope $P$ if and only if there exist polytopes $P_i$, $i \in [2]$ such that $S_i$ is the slack matrix of $P_i$ and $P$ is affinely isomorphic to $P_1 \times P_2$.
\end{lem}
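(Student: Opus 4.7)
The strategy is to translate the $1$-product structure on matrices into the Cartesian product structure on polytopes via Theorem~\ref{thm:char_slack-matrices}. The central observation is that, viewing each column of $S = S_1 \otimes S_2$ as living in $\R^{m_1} \times \R^{m_2}$, one has the set equality
\[
\col(S) \;=\; \col(S_1) \times \col(S_2),
\]
since by definition the columns of the $1$-product are exactly all pairs of columns of the factors.

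For the ($\Leftarrow$) direction, I would verify the claim directly from the definition of a slack matrix. If $P_i = \{x : A_i x \leq b_i\}$ with vertex set $V(P_i)$ yields the slack matrix $S_i$, then $P_1 \times P_2$ is described by the juxtaposed system acting independently on the two coordinate blocks, and its vertex set is $V(P_1) \times V(P_2)$. The slack of a vertex $(v,w)$ with respect to an inequality coming from $A_1$ depends only on $v$ and equals the corresponding entry of $S_1$; symmetrically for inequalities coming from $A_2$. Ordering the vertices of $P_1 \times P_2$ consistently with the column order of $S_1 \otimes S_2$, the resulting slack matrix is isomorphic to $S_1 \otimes S_2$.

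For the ($\Rightarrow$) direction, combining the key observation above with the factorization $\R_+^m = \R_+^{m_1} \times \R_+^{m_2}$, both the convex hull and the affine hull of the columns factor through the product:
\[
\conv(\col(S)) = \conv(\col(S_1)) \times \conv(\col(S_2)),
\qquad
\aff(\col(S)) = \aff(\col(S_1)) \times \aff(\col(S_2)).
\]
Consequently, the characterization $\conv(\col(S)) = \aff(\col(S)) \cap \R_+^m$ provided by Theorem~\ref{thm:char_slack-matrices} factors as well, and holds for $S$ if and only if it holds simultaneously for $S_1$ and $S_2$. By Theorem~\ref{thm:char_slack-matrices}, each $S_i$ is then the slack matrix of the polytope $P_i := \conv(\col(S_i))$. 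Finally, applying Lemma~\ref{lem:slackconvcol} to $S$ gives that $P$ is affinely isomorphic to $\conv(\col(S)) = P_1 \times P_2$, closing the implication.

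The main subtlety is the rank hypothesis in Theorem~\ref{thm:char_slack-matrices}: even if $S$ has rank at least $2$, one of the factors $S_i$ could a priori have rank $1$, corresponding to a $0$-dimensional polytope for which the theorem is not phrased. This edge case requires a brief separate check, but only amounts to noting that a rank-$1$ nonnegative matrix with identical columns is the slack matrix of a single point, so $S_i$ is a slack matrix in that case too.
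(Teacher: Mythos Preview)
Your proof is correct and matches the approach the paper indicates: it refers the reader to \cite{aprile2018thesis,aprilerecognizing} and states only that the proof is ``an easy application of Theorem~\ref{thm:char_slack-matrices},'' which is precisely the route you take via the factorizations $\conv(\col(S))=\conv(\col(S_1))\times\conv(\col(S_2))$ and $\aff(\col(S))=\aff(\col(S_1))\times\aff(\col(S_2))$. Your remark on the rank-$1$ edge case is also appropriate, since the paper explicitly assumes throughout that the matrices involved have rank at least~$2$.
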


\begin{lem}\label{lem:k-sum_slack}
Let $k\geq 2$, $S\in\R_+^{m\times n}$ and let $S_i\in\R_+^{m_i\times n_i}$ for $i=1,2$ such that $S=(S_1,x_1,\dots,x_{k-1}) \otimes_k (S_2,y_1,\dots,y_{k-1})$ for some special rows $(x_1,\dots,x_{k-1})$ of $S_1$, and $(y_1,\dots,y_{k-1})$ of $S_2$. 
\begin{enumerate}
\item If $S_1, S_2$ are slack matrices, then $S$ is a slack matrix.
\item If $S$ is a slack matrix, let $S'_1=S_1 +(\mathbf{1}-x_1-\dots -x_{k-1})$ and construct $S_2'$ similarly\footnote{Here and throughout the paper the $+$ operation takes as input a matrix $M$ and a row vector $r$ of the same size of $\col(M)$, and outputs the matrix $\binom{M}{r}$.}. Then $S_1', S_2'$ are slack matrices.
\end{enumerate}
\end{lem}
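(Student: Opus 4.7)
My plan is to verify the characterization of slack matrices of Theorem~\ref{thm:char_slack-matrices} for both $S$ (in part~(1)) and $S_1'$ (in part~(2)); the argument for $S_2'$ is symmetric. I would write each column of $S$ as $(\alpha,\beta,\gamma)$ with $\alpha$ on the non-special $S_1$-rows, $\beta$ on the non-special $S_2$-rows, and $\gamma\in\R^{k-1}$ on the merged special rows, and exploit the block decomposition $\col(S)=\bigsqcup_\ell\col(S[\ell])$ where block $\ell$ gathers columns with $\gamma=e_\ell$ (setting $e_0:=\mathbf{0}$). The key observation I would use repeatedly is that, in any affine combination realizing a point with special-row value $\gamma$, the total weight on block $\ell$ is forced to equal $\rho_\ell:=\gamma_\ell$ for $\ell\geq 1$, and $\rho_0:=1-\sum_{\ell\geq 1}\gamma_\ell$ for $\ell=0$, since the special-row part of a column identifies its block.

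For part~(1), I would take $z=(\alpha,\beta,\gamma)\in\aff(\col(S))\cap\R_+^m$ and project onto $S_1$-rows to obtain $(\alpha,\gamma)\in\aff(\col(S_1))\cap\R_+^{m_1}=\conv(\col(S_1))$ by Theorem~\ref{thm:char_slack-matrices} applied to $S_1$; similarly $(\beta,\gamma)\in\conv(\col(S_2))$. Writing these as convex combinations $(\alpha,\gamma)=\sum_i\lambda_i(\alpha^i,e_{\ell(i)})$ and $(\beta,\gamma)=\sum_j\mu_j(\beta^j,e_{\ell'(j)})$, the block-weight observation forces $\sum_{i\in I_\ell}\lambda_i=\sum_{j\in J_\ell}\mu_j=\rho_\ell$. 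On each block with $\rho_\ell>0$, I would use the product weights $\xi^\ell_{ij}:=(\lambda_i/\rho_\ell)(\mu_j/\rho_\ell)$ on pairs $(i,j)\in I_\ell\times J_\ell$ to form a convex combination of columns $(\alpha^i,\beta^j,e_\ell)$ of $S[\ell]=S_1[\ell]\otimes S_2[\ell]$ with the correct $\alpha$- and $\beta$-marginals; aggregating with weights $\rho_\ell$ across blocks would recover $z$ as a convex combination of columns of $S$. Blocks with $\rho_\ell=0$ contribute nothing since nonnegativity forces the corresponding $\lambda_i,\mu_j$ to vanish.

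For part~(2), I would take $v=(\alpha,\gamma,v_{m_1+1})\in\aff(\col(S_1'))\cap\R_+^{m_1+1}$. The appended row of $S_1'$ evaluates to $1-\sum_i x_i$ on every column, so any affine combination forces $v_{m_1+1}=1-\sum_i\gamma_i$; combined with $v\geq 0$ this yields $\gamma\geq 0$ and $\sum_i\gamma_i\leq 1$, that is, $\rho_\ell\geq 0$ for every $\ell$. Writing $(\alpha,\gamma)=\sum_i\lambda_i(\alpha^i,e_{\ell(i)})$ as an affine combination of columns of $S_1$ and picking, for each block $\ell$, a fixed $\beta^*_\ell$ with $(\beta^*_\ell,e_\ell)\in\col(S_2[\ell])$ (nonempty by hypothesis), I would lift each term to $(\alpha^i,\beta^*_{\ell(i)},e_{\ell(i)})\in\col(S)$ and take the same affine combination. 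This produces $(\alpha,\beta,\gamma)\in\aff(\col(S))$ with $\beta=\sum_\ell\rho_\ell\beta^*_\ell\geq 0$, so Theorem~\ref{thm:char_slack-matrices} applied to $S$ places this point in $\conv(\col(S))$. Projecting the resulting convex combination onto $S_1$-rows gives $(\alpha,\gamma)\in\conv(\col(S_1))$, and the linear appending map transports it to $v\in\conv(\col(S_1'))$.

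The main conceptual hurdle is recognising that the block weights $\rho_\ell$ are determined by $\gamma$ alone, making decompositions on the $S_1$- and $S_2$-sides align block-by-block; the explicit bilinear pairing $\xi^\ell_{ij}$ then bridges $S_1[\ell]$ and $S_2[\ell]$ through the 1-product structure. Part~(2) crucially requires $\rho_\ell\geq 0$ for all $\ell$, which is exactly what the appended row in $S_1'$ certifies; this explains why passing from $S_1$ to $S_1'$ is necessary for the statement to hold in general.
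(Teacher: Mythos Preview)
Your argument is correct. For part~(2) it coincides with the paper's proof: both lift an affine combination for $S_1'$ to one for $S$ by attaching, on each block $\ell$, a fixed column $\beta^*_\ell$ of $S_2[\ell]$; use $\rho_\ell\ge 0$ (which is exactly what the appended row certifies) to get nonnegativity of the lifted point; apply Theorem~\ref{thm:char_slack-matrices} to $S$; and project the resulting convex combination back to $S_1'$.

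For part~(1) you take a genuinely different route. The paper argues geometrically: it notes that $S$ is a submatrix of a slack matrix of $(P_1\times P_2)\cap H$ (using Lemma~\ref{lem:1-sum_slack}) and then appeals to Lemma~\ref{lem:gluedprod}(i), whose proof is a rational-density argument followed by a bijection between multisets, to conclude $\conv(\col(S))=(P_1\times P_2)\cap H$. You instead verify the criterion of Theorem~\ref{thm:char_slack-matrices} for $S$ directly: after projecting $z$ to get convex representations on the $S_1$- and $S_2$-sides with matching block weights $\rho_\ell$, you merge them via the bilinear weights $\rho_\ell\,\xi^\ell_{ij}=\lambda_i\mu_j/\rho_\ell$. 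This is self-contained and more elementary, effectively reproving Lemma~\ref{lem:gluedprod}(i) algebraically without the density step. The paper's route, in contrast, makes the polyhedral identity (simplicial glued product $=$ affine section of the Cartesian product) explicit, which it reuses elsewhere.
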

\begin{proof}
(i). Let $P_i := \conv(\col(S_i)) \subseteq \R^{m_i}$ for $i = 1,2$. Without loss of generality, $x_1, \dots, x_{k-1}$  can be assumed to be the first $k-1$ rows of $S_1$, and similarly for $y_1,\dots,y_{k-1}$ and $S_2$. Hence, for a point $x \in \R^{m_1}$, we overload notation and denote by $x_i$ the $i$-th coordinate of $x$, and similarly for $y\in \R^{m_2}$. Notice that, by the definition of $k$-product, $\conv(\col(S))$ is the simplicial glued product of $P_1$, $P_2$ with respect to $x_1, \dots, x_{k-1}$, $y_1,\dots,y_{k-1}$. 

We remark that $S$ is a submatrix of a slack matrix of $(P_1\times P_2)\cap H$, where $H$ is the hyperplane defined by the equations $x_1=y_1,\dots,  x_{k-1} = y_{k-1}$. Indeed, using Lemma~\ref{lem:1-sum_slack}, the slack matrix of $P_1 \times P_2$ is given by $S_1 \otimes S_2$. Adding constraints $x_i=y_i$ for $i \in [k-1]$ implies that columns of $P_1 \times P_2$ whose entries $x_i$ and $y_i$ differ are not vertices of $(P_1\times P_2)\cap H$, and can therefore be excluded. After such exclusions, rows $x_i$ and $y_i$ are copies of one another, and in a non-redundant slack matrix we can exclude one of them. We deduce therefore that there exists a slack matrix $S'$ of $(P_1\times P_2)\cap H$ which has all and only the rows of $S$, and possibly more columns.
However, by Lemma \ref{lem:gluedprod} part i) we have that $\conv(\col(S))=P_1\times_k P_2= (P_1\times P_2)\cap H$, implying that $S'$ does not have any more column than $S$, i.e. $S=S'$ is a slack matrix. 

\medskip

\noindent (ii). Let $S = (S_1,x_1,\dots,x_{k-1})\otimes_k (S_2,y_1,\dots,y_{k-1})$ be a slack matrix. We show that $S_1'=S_1+(\mathbf{1}-x_1-\dots-x_{k-1})$ is a slack matrix (where we let ($\mathbf{1}-x_1-\dots-x_{k-1})$ be the $k$-th row of $S'_1$), the argument for $S_2'$ being exactly the same. By Theorem \ref{thm:char_slack-matrices}, we have $\aff(\col(S))\cap\R^m_+=\conv(\col(S))$, and we will show that the same holds for $S_1'$. We use a similar notation as in the proof of Lemma \ref{lem:gluedprod}: we assume that $x_1,\dots,x_{k-1}, \mathbf{1}-x_1-\dots-x_{k-1}$ are the first $k$ rows of $S_1'$, and for $\ell=1,\dots,k$ we denote by $V_1^\ell$ the set of columns of $S_1'$ with $\ell$-th coordinate equal to $1$.

Let $x^*\in \aff(\col(S_1'))\cap \R_+^{m_1}$, one has $x^*=\sum_i \lambda_i v_i= \sum_{v_i \in V_1^1} \lambda_iv_i + \dots+ \sum_{v_i \in V_1^{k}} \lambda_iv_i$, with $\sum_i \lambda_i=1$. In particular, $x^*_\ell=\sum_{v_i \in V_1^\ell} \lambda_i\geq 0$ for $\ell=1,\dots,k$ with $x_1^*+\dots+x_k^*=1$. 
We now extend $x^*$ to a point $\tilde{x}\in \aff(\col(S))$ as follows: for $\ell=1,\dots, k$, fix a column $u_\ell$ of $S_2$ with the $\ell$-th coordinate equal to 1 if $\ell<k$, and with the first $k-1$ coordinates all equal to 0 if $\ell=k$ (such columns exist since by assumption every $S_2[0],\dots,S_2[k-1]$ is non-empty). Then, for $\ell=1,\dots, k$, map each $v_i \in V_1^\ell$ to the column of $S$ consisting of $v_i$ (without its $k$-th coordinate) followed by $u_\ell$ (without the coordinates corresponding to $y_1,\dots,y_{k-1}$). We denote such column by $w_i$, for $i=1,\dots,n_1$, and let $\tilde{x}=\sum_i \lambda_i w_i$. Now, we claim that $\tilde{x} \in\R^m_+$: indeed, by construction of the $w_i$'s, every component of $\tilde{x}$ is equal to a component of $x^*$, or to a sum of a (possibly empty) subset of $\{x_1^*,\dots, x^*_k\}$, according to the corresponding component of the $u_\ell$'s.
Hence we have that  $\tilde{x} \in  \aff(\col(S))\cap\R^m_+=\conv(\col(S))$. We claim that this implies that $x^*\in  \conv(\col(S_1'))$: indeed, if $\tilde{x}=\sum_i \mu_i w_i$, with $\mu_i\geq 0$ for $i=1,\dots,n_1$, and $\sum_i \mu_i=1$ then it follows that $x^*=\sum_i \mu_i v_i$. This is trivial except for the $k$-th coordinate, which is equal to $\sum_{i : v_i \in V_1^k} \mu_i=1-\sum_{i:v_i \in V_1^1} \mu_i-\dots-\sum_{i:v_i \in V_1^{k-1}} \mu_i=1-x_1^*-\dots-x_{k-1}^*=x_k^*$, where the latter equality holds by construction. Hence we conclude that $S_1'$ is a slack matrix.
\end{proof}

We now give an explicit relation between the $k$-product and the simplicial glued product.

\begin{cor}\label{cor:glued}Let $k \in \N$, $P$ be a polytope, and $S$ a non-redundant slack matrix of $P$.  
 Then $P$ is affinely isomorphic to the simplicial glued product of polytopes $P_1,P_2$ with respect to variables $x_1,\dots, x_{k-1}$ of $P_1$, $y_1,\dots, y_{k-1}$ of $P_2$ if and only if
 $S=S_1\otimes_k S_2$ for some matrices $S_1,S_2$. Moreover, if the latter happens, $S_1'$ (resp. $S_2'$) obtained from $S_1$ (resp. $S_2$) as in Lemma \ref{lem:k-sum_slack} is a slack matrix of $P_1$ (resp. $P_2$), and the special rows of $S_1$ (resp.~$S_2$) correspond to inequalities $x_i\geq 0$ (resp. $y_i\geq 0$) for $i\in [k-1]$.
\end{cor}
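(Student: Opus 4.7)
The plan is to combine Lemma~\ref{lem:k-sum_slack}, which relates $k$-products and slack matrices at the matrix level, with Lemma~\ref{lem:gluedprod}, which describes the vertex and facet structure of simplicial glued products. The two directions of the ``if and only if'' will be treated separately.

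For the ``$\Leftarrow$'' direction, assume $S = (S_1, x_1, \ldots, x_{k-1}) \otimes_k (S_2, y_1, \ldots, y_{k-1})$ and invoke Lemma~\ref{lem:k-sum_slack}(ii) to conclude that the augmented matrices $S_1'$ and $S_2'$ are slack matrices. By Lemma~\ref{lem:slackconvcol}, set $P_1 := \conv(\col(S_1'))$ and $P_2 := \conv(\col(S_2'))$. By the definition of the $k$-product, the special rows of $S_1$ are $0/1$-valued with pairwise disjoint support and each nonzero; together with the appended row of $S_1'$, this guarantees that the projection of $P_1$ onto the corresponding coordinates is exactly $\conv\{\mathbf{0}, e_1, \ldots, e_{k-1}\}$ and that each $x_i \geq 0$ is facet-defining (being a non-redundant row of $S_1'$). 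The same holds for $P_2$, so $P_1 \times_k P_2$ is a well-defined simplicial glued product. By Lemma~\ref{lem:k-sum_slack}(i) applied to $(S_1', S_2')$, the matrix $S_1' \otimes_k S_2'$ is a slack matrix of $P_1 \times_k P_2$; it differs from $S$ only by two additional rows (the appended rows from each factor), which are identical and each an affine function of the special rows of $S$. Removing them recovers $S$, and since affine functions of existing coordinates do not change the affine type of the column convex hull, $\conv(\col(S)) \cong P_1 \times_k P_2$; combined with Lemma~\ref{lem:slackconvcol} this yields $P \cong P_1 \times_k P_2$.

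For the ``$\Rightarrow$'' direction, assume $P$ is affinely isomorphic to a simplicial glued product $P_1 \times_k P_2$ and take non-redundant slack matrices $\tilde S_1, \tilde S_2$ of $P_1, P_2$. By the definition of simplicial glued product, each $x_i \geq 0$ and each $y_i \geq 0$ is facet-defining, so it corresponds to a row of the respective slack matrix, which is $0/1$-valued with a $1$ exactly at vertices with $x_i = 1$, and can therefore be designated as a special row. Lemma~\ref{lem:gluedprod}(iv) then tells us that the facets of $P$ are the facets of $P_1$ and $P_2$ combined (with $y_i \mapsto x_i$), apart from one copy each of the duplicated $x_i \geq 0$ facets and a possible duplicate/redundancy of $\sum_{i=1}^{k-1} x_i \leq 1$. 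I define $S_1, S_2$ from $\tilde S_1, \tilde S_2$ by removing one copy of the $\sum x_i \leq 1$ row from a single factor if duplicated, and dropping it entirely if redundant in $P$. Enumerating the vertices of $P$ via Lemma~\ref{lem:gluedprod}(ii) as $V_0 \cup V_1 \cup \cdots \cup V_{k-1}$ and computing the slacks block by block shows that the (non-redundant) slack matrix of $P$ equals $(S_1, x_1, \ldots, x_{k-1}) \otimes_k (S_2, y_1, \ldots, y_{k-1})$, and the special rows evidently correspond to the $x_i \geq 0$ inequalities.

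The main obstacle I anticipate is the careful bookkeeping around the inequality $\sum_{i=1}^{k-1} x_i \leq 1$, which may be redundant, duplicated, or facet-defining in $P$. Lemma~\ref{lem:gluedprod}(iv) is tailored to handle exactly this edge case, so once the correct placement of this row in $S_1$ or $S_2$ is decided, the block-wise computation of the slack matrix aligns directly with the definition of the $k$-product.
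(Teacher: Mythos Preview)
Your proposal is correct and follows essentially the same route as the paper's proof: both directions hinge on Lemma~\ref{lem:k-sum_slack} together with Lemma~\ref{lem:gluedprod}(ii),(iv), and the slack-embedding Lemma~\ref{lem:slackconvcol}. The only point you leave implicit that the paper spells out is the passage, in the ``$\Rightarrow$'' direction, from the $k$-product slack matrix you construct to the \emph{given} matrix $S$: since both are non-redundant slack matrices of $P$, they agree up to row scaling and row/column permutation, and this is what forces $S$ itself (not just some slack matrix of $P$) to be a $k$-product with the claimed factors.
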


\begin{proof}
Assume first that $S$ is a $k$-product of $S_1$, $S_2$. $S_1',S_2'$ as described in Lemma~\ref{lem:k-sum_slack} are slack matrices of certain polytopes $P_1,P_2$. 
Then thanks to Lemma \ref{lem:slackconvcol}, $P$ is affinely isomorphic to $\conv(\col(S))$ which, arguing as in the proof of Lemma \ref{lem:k-sum_slack}, part (i), is affinely isomorphic to the glued product of $P_1$, $P_2$. This shows the ``if'' direction. 

Conversely, assume that $P$ is affinely isomorphic to the simplicial glued product $P_1\times_k P_2$ with respect to variables $x_1,\dots, x_{k-1}$ of $P_1$, $y_1,\dots, y_{k-1}$ of $P_2$. Fix non-redundant, full-dimensional descriptions of $P_1$, $P_2$  that contain inequalities $x_i\geq 0$ (resp. $y_i\geq 0$)  for $i\in [k-1]$, and (if facet-defining) the inequality $\sum_{i=1}^{k-1} x_i\leq 1$ (resp. $\sum_{i=1}^{k-1} y_i\leq 1$).
Now, consider the description of $P$ obtained from such descriptions as in part (iv) of Lemma \ref{lem:gluedprod}. 
Let $S'$ be the corresponding non-redundant slack matrix of $P$. 
By construction $S'$ is a $k$-product of two matrices $\tilde S_1$, $\tilde S_2$, with special rows corresponding to $x_i\geq 0$ for $i\in [k-1]$, where the columns of $\tilde S_1$ (resp. $\tilde S_2$) correspond to vertices of $P_1$ (resp. $P_2$), the rows of $\tilde S_1$ (resp. $\tilde S_2$) correspond to inequalities of the fixed description of $P_1$ (resp. $P_2$), and all inequalities of the latter description are injectively mapped to rows of $\tilde S_1$ (resp.~$\tilde S_2$), with  possibly the exception of $\sum_{i=1}^{k-1} x_i\leq 1$ (resp. $\sum_{i=1}^{k-1} y_i\leq 1$). Adding back the row corresponding to such inequality gives $\tilde{S}_1'$ (resp. $\tilde{S}_2'$), that is a slack matrix of $P_1$ (resp. $P_2$).

  Now, $S$ and $S'$ are both non-redundant slack matrices of $P$, hence they are obtained from one another by scaling rows by positive factors and permuting rows and columns. Hence $S$ is a $k$-product of matrices $S_1, S_2$ that are obtained from $\tilde S_1, \tilde S_2$ by scaling rows by positive factors. This implies that matrices $S_1'$, $S_2'$ as in the thesis are obtained from $\tilde S_1'$, $\tilde S_2'$ by scaling rows by positive factors, hence they are slack matrices of $P_1$, $P_2$ respectively.\end{proof}

The following observation justifies the idea of decomposing slack matrices via $k$-\op{s}, proving that a $k$-product of two slack matrices has strictly larger rank than theirs. The only exception is when one of the two factors is (a scaling of) the $k\times k$ identity matrix, which we denote by $I_k$ (which is the non-redundant slack matrix of a $(k-1)$-dimensional simplex): one can check that $I_k$ acts as a sort of neutral element of the $k$-product of slack matrices (see \cite{aprile2018thesis} for further details).

\begin{obs}\label{obs:k-sum}
Let $k\geq 2$ and let $P, P_1, P_2$ be such that $P$ is the simplicial glued product of $P_1, P_2$. Assume that neither $P_1$ nor $P_2$ is not a $(k-1)$-dimensional simplex. Then $\dim(P)>\max\{\dim(P_1),\dim(P_2)\}\geq k-1$.
\end{obs}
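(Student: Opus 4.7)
The plan is to compute $\dim(P)$ exactly in terms of $\dim(P_1)$ and $\dim(P_2)$, and then extract the strict inequality from the hypothesis on the factors. Specifically, I aim to show
\[
\dim(P) \;=\; \dim(P_1) + \dim(P_2) - (k-1),
\]
after which the improved bound $\dim(P_i)\geq k$ for $i=1,2$ (derived below) will give $\dim(P)\geq \dim(P_j)+1$ for each $j\in\{1,2\}$, so that $\dim(P)>\max\{\dim(P_1),\dim(P_2)\}\geq k-1$.

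First I upgrade the trivial bound $\dim(P_i)\geq k-1$. By the definition of simplicial glued product, the coordinate projection of $P_i$ onto its $k-1$ special coordinates is the standard $(k-1)$-simplex, so $\dim(P_i)\geq k-1$. If equality held, this affine surjection from a $(k-1)$-dimensional set onto a $(k-1)$-dimensional set would be an affine isomorphism, making $P_i$ affinely isomorphic to a $(k-1)$-simplex. The hypothesis rules this out, so $\dim(P_i)\geq k$ for $i=1,2$.

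For the dimension formula, I invoke Lemma~\ref{lem:gluedprod}(i) to write $P=(P_1\times P_2)\cap H$, where $H$ is cut out by the $k-1$ linear equations $x_i=y_i$ for $i\in[k-1]$. Since $\aff(P_1\times P_2)=\aff(P_1)\times\aff(P_2)$ has dimension $\dim(P_1)+\dim(P_2)$, it suffices to show that the $k-1$ linear forms $x_i-y_i$ restrict to linearly independent functions on this affine hull and that $H$ meets $\mathrm{relint}(P_1\times P_2)$. The first assertion follows because on $\aff(P_i)$ the simplicial coordinates $x_1,\dots,x_{k-1}$ are already linearly independent (they surject onto the affine hull of the $(k-1)$-simplex), and since the $x_i$ and $y_i$ act on disjoint factors, no cancellation between the forms is possible. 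For the second assertion, a point on the diagonal $x=y$ in the relative interior of the simplex, such as $(\tfrac{1}{k},\dots,\tfrac{1}{k})$, lifts into $\mathrm{relint}(P_1)\times\mathrm{relint}(P_2)$ by surjectivity of the projection from $\mathrm{relint}(P_i)$ onto the relative interior of the simplex. The standard identity $\mathrm{relint}(C\cap L)=\mathrm{relint}(C)\cap L$ for a convex set $C$ and an affine subspace $L$ meeting $\mathrm{relint}(C)$ then delivers $\aff(P)=\aff(P_1\times P_2)\cap H$, which has dimension $\dim(P_1)+\dim(P_2)-(k-1)$.

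The delicate step is this exact codimension count: I must ensure that intersecting $P_1\times P_2$ with $H$ drops the dimension by precisely $k-1$, which is where the linear independence of the $x_i - y_i$ on $\aff(P_1)\times\aff(P_2)$ together with the relative-interior intersection are essential. Granting this, $\dim(P_2)\geq k$ yields $\dim(P)\geq \dim(P_1)+k-(k-1)=\dim(P_1)+1$, and symmetrically $\dim(P)>\dim(P_2)$. Should the relative-interior argument prove fussy, a parallel route via Lemma~\ref{lem:gluedprod}(ii) produces the same bound by counting directly from the slab decomposition $V(P)=V_0\cup V_1\cup\dots\cup V_{k-1}$: each $\mathrm{conv}(V_i)$ is a face of $P$ of dimension $\dim(P_1)+\dim(P_2)-2(k-1)$, and the $k-1$ affinely independent translations between the slabs (indexed by $\mathbf{0},e_1,\dots,e_{k-1}$ in the simplicial coordinates) restore the missing $k-1$ dimensions.
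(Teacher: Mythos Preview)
Your argument is correct, but it follows a genuinely different route from the paper's proof. The paper works entirely through slack matrices: it invokes Corollary~\ref{cor:glued} to obtain slack matrices $S,S_1,S_2$ with $S=S_1\otimes_k S_2$, uses Lemma~\ref{lem:slackconvcol} to translate dimensions into ranks, and then shows $\rk(S)>\rk(S_i)$ by a contradiction on column bases (if $\rk(S)=\rk(S_1)$, a basis of columns of $S_1$ extends to a basis of columns of $S$, forcing each $S_2[j]$ to have a single column, so $P_2$ is a $(k-1)$-simplex). Your approach is purely geometric: you establish the exact formula $\dim(P)=\dim(P_1)+\dim(P_2)-(k-1)$ directly from $P=(P_1\times P_2)\cap H$ in Lemma~\ref{lem:gluedprod}(i), via the linear independence of the forms $x_i-y_i$ on $\aff(P_1)\times\aff(P_2)$ and a relative-interior intersection, and then feed in the upgraded bound $\dim(P_i)\geq k$. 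Your route is more elementary (no slack-matrix machinery) and yields the sharper identity for $\dim(P)$, not just the strict inequality; the paper's route is less quantitative but stays within the slack-matrix framework that the surrounding sections rely on. One minor remark: your backup ``slab'' argument via Lemma~\ref{lem:gluedprod}(ii) tacitly assumes each face $\conv(V_i)$ has codimension exactly $k-1$ in each factor, which would need its own justification; your primary argument via the relative interior does not need this and is already complete.
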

\begin{proof} Let $S$ be a slack matrix of $P$. Thanks to Corollary \ref{cor:glued}, we have that  $S$ (up to scaling rows by positive factors) is isomorphic to a $k$-product of two matrices that, after adding a row as in Lemma \ref{lem:k-sum_slack}, are slack matrices of $P_1$, $P_2$ respectively. Let $r_1,\dots, r_{k-1}$ be the special rows of $S$, and let $r=\mathbf{1}-r_1-\dots-r_k$. Using Theorem~\ref{thm:char_slack-matrices}, by possibly adding row $r$ twice to $S$ we obtain a slack matrix $S'$ of $P$ that is the $k$-product of matrices $S_1$, $S_2$ that, by Corollary~\ref{cor:glued} are slack matrices of $P_1, P_2$ respectively. Notice that $\rk(S)=\rk(S')=\dim(P)+1$ by Lemma \ref{lem:slackconvcol}.
We show that $\rk(S)>\max\{\rk(S_1),\rk(S_2)\}\geq k$, which concludes the proof thanks to Lemma \ref{lem:slackconvcol}.

As a first remark, we have that $\rk(S_1)\geq k$: indeed, choose one column $c_j$ from each $S_1[j]$ for $j=0,\dots,k-1$. The columns of $S_1$ corresponding to $c_0, \dots, c_{k-1}$ are linearly independent. Similarly we have $\rk(S_2)\geq k$. 


We now show that $\rk(S)>\rk(S_1)$, the proof for $S_2$ being the same. This will complete the proof. First, notice that $\rk(S)\geq \rk(S_1)$ since $S_1$ is a submatrix of $S'$. Now, assume by contradiction that equality holds, hence there are $t=\rk(S_1)=\rk(S)$ columns of $S_1$ that form a basis for the column space of $S_1$, and $t$ corresponding columns of $S$ that form a basis $B$ for the column space of $S$. Every column of $S$ can be written in a unique way as linear combination of columns in $B$, implying that no two columns of $S$ are identical when restricted to rows of $S_1$, but different otherwise. Hence $S_2[j]$ consists of one column only for any $j\in \{0,\dots,k-1\}$, in particular $S_2$ has exactly $k$ columns. Hence, the slack embedding of $S_2$, hence $P_2$ as well, is  a $(k-1)$-dimensional simplex, a contradiction.
\end{proof}

\section{Algorithms}\label{sec:algorithms}

In this section we study the problem of recognizing $k$-\op{}s. We first focus on the following problem: given a matrix $S$, we want to determine whether $S$ is a $1$-\op{}, and find matrices $S_1, S_2$ such that $S=S_1\otimes S_2$. Since we allow the rows and columns of $S$ to be permuted in an arbitrary way, the problem is non-trivial. 
In the second part of the section, we extend our methods to the problem of recognizing $k$-\op{}s.

We begin with a preliminary observation, which is the starting point of our approach. Suppose that a matrix $S$ is a 1-\op{} $S_1\otimes S_2$. Then the rows of $S$ can be partitioned into two sets $R_1, R_2$, corresponding to the rows of $S_1, S_2$ respectively. We write that $S$ is a 1-\op{} \emph{with respect to} the partition $R_1,R_2$. A column of the form $(a_1,a_2)$, where $a_i$ is a column vector with components indexed by $R_i$ ($i \in [2]$), is a column of $S$ if and only if $a_i$ is a column of $S_i$ for each $i \in [2]$. Moreover, the number of occurrences of $(a_1,a_2)$ in $S$ is just the product of the number of occurrences of $a_i$ in $S_i$ for $i \in [2]$. Under uniform probability distributions on the columns of $S$, $S_1$ and $S_2$, the probability of picking $(a_1,a_2)$ in $S$ is the product of the probability of picking $a_1$ in $S_1$ and that of picking $a_2$ in $S_2$. We will exploit this intuition below.

\subsection{Recognizing 1-\op{}s via submodular minimization} 

First, we recall some notions from information theory, see~\cite{cover2012elements} for a more complete exposition. 
%
Let $A$ and $B$ be two discrete random variables with ranges $\mathcal{A}$ and $\mathcal{B}$ respectively. The \emph{mutual information} of $A$ and $B$ is:
$$
I(A;B) = \sum_{a\in \mathcal{A}, b\in \mathcal{B}} \pr(A=a,B=b) \cdot \log_2 \frac{\pr(A=a,B=b)}{\pr(A=a)\cdot \pr(B=b)}. 
$$
The mutual information of two random variables measures how close is their joint distribution to the product of the two corresponding marginal distributions. 
%
%

We will use the following facts, whose proof can be found in \cite{cover2012elements, krause2005near}. Let $C_1, \ldots, C_m$ be discrete random variables. For $X \subseteq [m]$ we consider the random vectors $C_{X} := (C_i)_{i \in X}$ and $C_{\overline{X}} := (C_i)_{i \in \overline{X}}$, where $\overline{X} := [m] \setminus X$. The function $f : 2^{[m]} \to \R$ such that
\begin{equation}
\label{eq:def_f}
f(X) := I(C_X;C_{\overline{X}})
\end{equation}
will play a crucial role.

\begin{prop}\label{prop:mutual_info}
\begin{enumerate}[(i)]
\item\label{prop:indep} For all discrete random variables $A$ and $B$, we have $I(A;B)\geq 0$, with equality if and only if $A$ and $B$ are independent.
\item\label{prop:submodular}
If $C_1, \ldots, C_m$ are discrete random variables, then the function $f$ as in~\eqref{eq:def_f} is submodular.
\end{enumerate}
\end{prop}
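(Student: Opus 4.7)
My plan is to treat both parts via the standard connection between mutual information and Shannon entropy. For part (i), I would interpret $I(A;B)$ as the Kullback--Leibler divergence between the joint distribution $p(a,b)$ and the product of marginals $p(a)p(b)$, and then invoke Jensen's inequality on the concave function $\log$. Concretely, I would compute
\begin{equation*}
-I(A;B) = \sum_{a,b : p(a,b) > 0} p(a,b)\log\frac{p(a)p(b)}{p(a,b)} \leq \log\!\left(\sum_{a,b: p(a,b)>0} p(a)p(b)\right) \leq \log 1 = 0,
\end{equation*}
which gives $I(A;B)\ge 0$. Because $\log$ is strictly concave, Jensen's inequality is tight exactly when $p(a)p(b)/p(a,b)$ is constant on the support of $(A,B)$, and the second inequality is tight exactly when that support is all of $\mathcal{A}\times\mathcal{B}$; combining the two equality cases one recovers precisely $p(a,b)=p(a)p(b)$ everywhere, i.e.\ independence of $A$ and $B$.

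For part (ii), the key step is to rewrite $f$ in terms of entropy. Using the identity $I(A;B) = H(A) + H(B) - H(A,B)$, I would observe that
\begin{equation*}
f(X) = H(C_X) + H(C_{\overline X}) - H(C_{[m]}),
\end{equation*}
where the last term is a constant in $X$. Thus it suffices to show that $X \mapsto H(C_X) + H(C_{\overline X})$ is submodular. Because the map $X \mapsto \overline X$ is an anti-isomorphism of the Boolean lattice (sending unions to intersections and vice versa), this reduces to the single statement that $X \mapsto H(C_X)$ is submodular.

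The main lemma I need is therefore submodularity of Shannon entropy: for all $S,T \subseteq [m]$,
\begin{equation*}
H(C_S) + H(C_T) \geq H(C_{S \cup T}) + H(C_{S \cap T}).
\end{equation*}
This is the only genuinely substantial step, and it is the obstacle I would expect to spend the most care on, although it is classical. I would derive it from the chain rule $H(C_S) = H(C_{S\cap T}) + H(C_{S\setminus T}\mid C_{S \cap T})$ together with the fact that conditioning does not increase entropy, namely $H(C_{S\setminus T}\mid C_{S\cap T}) \geq H(C_{S\setminus T}\mid C_T)$, which is itself an application of part (i) to the conditional mutual information $I(C_{S\setminus T};C_{T\setminus S}\mid C_{S\cap T})\ge 0$. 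Adding $H(C_T)$ to both sides of the resulting inequality and applying the chain rule again on the right yields the submodularity of entropy. With this in hand, applying it twice (once to $\{X,Y\}$ and once to $\{\overline X, \overline Y\}$) and summing gives submodularity of $f$, completing the proof. As these facts are all textbook, I would point to~\cite{cover2012elements} for the classical derivations.
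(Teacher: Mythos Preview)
Your argument is correct and is precisely the classical derivation. Note, however, that the paper does not actually prove this proposition: it simply states the two facts and refers the reader to \cite{cover2012elements, krause2005near} for proofs. What you have written is essentially the textbook argument one finds in those references, so there is nothing to compare---you have supplied the omitted proof rather than an alternative one.
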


Let $S$ be an $m \times n$ matrix. Let $C := (C_1,\ldots,C_m)$ be a uniformly chosen random column of $S$. That is, $\pr(C = c) = \mu(c)/n$, where $\mu(c)$ denotes the number of occurrences in $S$ of the column $c \in \col(S)$. 

Let $f: 2^{[m]} \rightarrow \R$ be defined as in~\eqref{eq:def_f}. We remark that the definition of $f$ depends on $S$, which we consider fixed throughout the section. The set function $f$ is non-negative (by Proposition~\ref{prop:mutual_info}.\ref{prop:indep}), symmetric (that is, $f(X)=f(\overline{X})$) and submodular (by Proposition~\ref{prop:mutual_info}.\ref{prop:submodular}). 

The next lemma shows that we can determine whether $S$ is a 1-\op{} by minimizing $f$. Its proof can be found in \cite{aprile2018thesis,aprilerecognizing}.

\begin{lem}\label{lem:mutual_info}
Let $S\in \R^{m\times n}$, and $\emptyset\neq X\subsetneq [m]$. Then $S$ is a $1$-\op{} with respect to $X,\overline{X}$ if and only if $C_X$ and $C_{\overline{X}}$ are independent random variables, or equivalently (by Proposition \ref{prop:mutual_info}.\ref{prop:indep}), if and only if $f(X) = 0$.
\end{lem}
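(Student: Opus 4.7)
The plan is to first dispose of the ``equivalently'' clause: by Proposition~\ref{prop:mutual_info}.\ref{prop:indep} and the definition of $f$, we have $f(X)=I(C_X;C_{\overline X})=0$ iff $C_X$ and $C_{\overline X}$ are independent, so it suffices to prove the equivalence between $S$ being a $1$-\op{} with respect to $X,\overline X$ and the independence of $C_X, C_{\overline X}$. I will introduce the joint multiplicities: letting $a$ range over the distinct restrictions to rows in $X$ of columns of $S$ and $b$ over the distinct restrictions to $\overline X$, write $\mu(a,b)$ for the number of columns of $S$ whose $X$-restriction is $a$ and $\overline X$-restriction is $b$, and set $\mu_1(a):=\sum_b\mu(a,b)$, $\mu_2(b):=\sum_a\mu(a,b)$. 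Then the joint distribution of $(C_X,C_{\overline X})$ is $\mu(a,b)/n$ with marginals $\mu_1(a)/n$ and $\mu_2(b)/n$, so independence is exactly the identity $n\,\mu(a,b)=\mu_1(a)\mu_2(b)$ for all $(a,b)$.

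The forward direction is a calculation: if $S\cong S_1\otimes S_2$ with $S_1,S_2$ having row index sets $X$ and $\overline X$ respectively, the definition of the $1$-\op{} forces the multiplicity of the column $(a,b)$ in $S_1\otimes S_2$ to equal the product of the multiplicities of $a$ in $S_1$ and of $b$ in $S_2$. Summing one coordinate at a time, together with $n=n_1n_2$ (the number of columns of $S$ equals the product of the column counts of the factors), yields $\mu(a,b)/n=(\mu_1(a)/n)(\mu_2(b)/n)$, i.e., independence.

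For the converse, assuming independence, I will exhibit explicit factors. Set $g:=\gcd_a\mu_1(a)$ and try $\mu_{S_1}(a):=\mu_1(a)/g$, $\mu_{S_2}(b):=g\,\mu_2(b)/n$; these are positive rationals with $\mu_{S_1}(a)\mu_{S_2}(b)=\mu_1(a)\mu_2(b)/n=\mu(a,b)$, $\sum_a\mu_{S_1}(a)=n/g$, and $\sum_b\mu_{S_2}(b)=g$. Integrality of $\mu_{S_1}(a)$ is immediate from the definition of $g$. For $\mu_{S_2}(b)$, write $\mu_1(a)=g\,\alpha(a)$ with $\gcd_a\alpha(a)=1$ and observe that $n$ divides $\mu_1(a)\mu_2(b)=g\,\alpha(a)\,\mu_2(b)$ for every $a$; comparing $p$-adic valuations prime-by-prime and using that the $\alpha(a)$ are jointly coprime yields $n\mid g\,\mu_2(b)$. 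Taking $S_1$ (resp.\ $S_2$) to consist of $\mu_{S_1}(a)$ copies of each $a\in\mathcal A$ (resp.\ $\mu_{S_2}(b)$ copies of each $b\in\mathcal B$), the $1$-\op{} $S_1\otimes S_2$ contains the column $(a,b)$ exactly $\mu_{S_1}(a)\mu_{S_2}(b)=\mu(a,b)$ times, so $S\cong S_1\otimes S_2$.

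The main obstacle is this integrality step: the probabilistic identity only yields a rational factorization of the joint multiplicities, and producing factors whose column counts are integers multiplying to $n$ requires the gcd argument above. Everything else is routine bookkeeping with multiplicities and the definition of the $1$-\op{}.
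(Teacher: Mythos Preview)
The paper does not supply its own proof of this lemma; it defers to \cite{aprile2018thesis,aprilerecognizing}. Your argument is correct. The forward direction is exactly the computation the paper sketches in the paragraph preceding the lemma. For the converse, you correctly isolate the only nontrivial point: independence gives the rational factorization $\mu(a,b)=\mu_1(a)\mu_2(b)/n$, and one must then produce \emph{integer} column-multiplicities for $S_1$ and $S_2$ whose product is $\mu(a,b)$ and whose sums multiply to $n$. Your gcd/$p$-adic argument does this cleanly: writing $\mu_1(a)=g\,\alpha(a)$ with $\gcd_a\alpha(a)=1$ and using that $n\mid g\,\alpha(a)\,\mu_2(b)$ for every $a$, the existence of some $a$ with $v_p(\alpha(a))=0$ for each prime $p$ forces $n\mid g\,\mu_2(b)$.

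The paper's remark immediately after the lemma only indicates how to rebuild the factors when the columns of $S$ are pairwise distinct; in that special case each $\mu(a,b)\in\{0,1\}$, independence forces every pair in $\mathcal A\times\mathcal B$ to occur exactly once, and one simply takes each distinct restriction once, so no integrality issue arises. Your proof covers the general case with repeated columns, which is what the lemma as stated requires.
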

Notice that one can efficiently reconstruct $S_1$, $S_2$ once we identified $X$ such that $f(X)=0$. In particular, if the columns of $S$ are all distinct, then $S_1$ consists of all the distinct columns of $S$ restricted to the rows of $X$, each taken once, and $S_2$ is obtained analogously from $S$ restricted to the rows of $\overline{X}$. The last ingredient we need is that every (symmetric) submodular function can be minimized in polynomial time. Here we assume that we are given a polynomial time oracle to compute our function.

\begin{thm}[Queyranne~\cite{queyranne1998minimizing}] \label{thm:submodular}
There is a polynomial time algorithm that outputs a set $X$ such that $X \neq \emptyset, [m]$ and $f(X)$ is minimum, where $f : 2^{[m]}\rightarrow \R$ is any given symmetric submodular function.
\end{thm}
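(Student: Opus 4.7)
The plan is to prove Queyranne's theorem via a greedy ordering argument that iteratively extracts \emph{pendant pairs}, sidestepping the heavier Gr\"otschel--Lov\'asz--Schrijver machinery needed for general (non-symmetric) submodular minimization. Call an ordered pair $(u,v)$ of distinct elements of $[m]$ a pendant pair for $f$ if $f(\{v\}) \le f(X)$ for every proper nonempty $X$ with $u \notin X$ and $v \in X$; equivalently, $\{v\}$ is a minimizer among all sets separating $u$ from $v$.

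First I would describe the \emph{legal ordering} procedure. Starting from an arbitrary $v_1 \in [m]$ and having chosen $W_i = \{v_1,\ldots,v_i\}$, select $v_{i+1} \in [m]\setminus W_i$ maximizing
$$g_i(u) \;:=\; f(\{u\}) + f(W_i) - f(W_i \cup \{u\}),$$
which can be thought of as the ``connection strength'' of $u$ to $W_i$. Producing the full ordering $v_1,\ldots,v_m$ requires $O(m^2)$ oracle evaluations of $f$. The core technical claim is then the \emph{pendant pair lemma}: the last two elements $(v_{m-1}, v_m)$ of any legal ordering form a pendant pair. The proof is by contradiction; assuming some $X \subsetneq [m]$ separates $v_{m-1}$ from $v_m$ with $f(X) < f(\{v_m\})$, one walks through the ordering and uses submodularity of $f$, combined with the greedy maximality of $g_i$, to derive a chain of inequalities that ultimately contradict symmetry $f(X)=f(\overline{X})$ at the final step. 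This is where symmetry is used decisively.

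Next, the algorithm itself is a straightforward recursion. Compute a legal ordering, record the value $f(\{v_m\})$ as a candidate, and then contract the pendant pair by passing to the ground set $[m]' = ([m]\setminus\{v_{m-1},v_m\}) \cup \{*\}$ with new function $f'$ defined by $f'(Y) := f(Y)$ if $* \notin Y$ and $f'(Y) := f((Y \setminus \{*\}) \cup \{v_{m-1}, v_m\})$ otherwise. One checks that $f'$ inherits symmetry and submodularity, and that the oracle for $f'$ is implemented by a single oracle call to $f$. Recurse until only two elements remain. The correctness argument is that any proper nonempty minimizer $X^*$ of $f$ either keeps some pendant pair together (in which case $X^*$ survives as a minimizer after contraction) or separates it (in which case the recorded value $f(\{v_m\})$ already matches $f(X^*)$, by the pendant pair property).

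The complexity follows immediately: $m-1$ levels of recursion, each costing $O(m^2)$ oracle calls for the legal ordering, for a total of $O(m^3)$ calls, hence polynomial time. The main obstacle is clearly the pendant pair lemma in step two; it is the only step that is not mechanical, and it is precisely the place where the symmetry hypothesis $f(X)=f(\overline{X})$ enters in an essential way. Everything else (ordering, contraction, recursion, candidate bookkeeping) is routine once that lemma is in hand.
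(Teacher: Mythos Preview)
The paper does not prove this statement at all; Theorem~\ref{thm:submodular} is quoted as a black box from Queyranne~\cite{queyranne1998minimizing} and used without proof. Your proposal is a faithful outline of Queyranne's original pendant-pair argument (legal ordering, pendant-pair lemma, contraction, recursion), and the sketch is correct in its essentials, including the $O(m^3)$ oracle-call bound. So there is nothing to compare against in the paper itself: you are supplying a proof where the paper only supplies a citation.
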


We have now all the ingredients to conclude the following:

 \begin{thm}\label{thm:1-prod_recog}
 Let $S\in \R^{m\times n}$. There is an algorithm that is polynomial in $m,n$ and determines whether $S$ is a 1-\op{} and, in case it is, outputs two matrices $S_1, S_2$ such that $S=S_1\otimes S_2$.
 \end{thm}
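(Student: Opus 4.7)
The plan is to combine Lemma~\ref{lem:mutual_info}, Proposition~\ref{prop:mutual_info}, and Theorem~\ref{thm:submodular} in the natural way, then handle the reconstruction and bookkeeping carefully. First I would set up the function $f : 2^{[m]} \to \R$ defined in~\eqref{eq:def_f} using the uniform distribution on the columns of $S$. By Proposition~\ref{prop:mutual_info}.\ref{prop:submodular} this function is submodular, and by symmetry of mutual information it is symmetric ($f(X)=f(\overline X)$). By Proposition~\ref{prop:mutual_info}.\ref{prop:indep} it is also nonnegative. This places us exactly in the setting of Queyranne's algorithm.

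Next I would argue that $f$ admits a polynomial-time evaluation oracle. Given a subset $X \subseteq [m]$, one can scan the $n$ columns of $S$ once, group them by their restriction to $X$ (resp.\ $\overline X$, resp.\ $[m]$) using, e.g., sorting or hashing, count multiplicities, and evaluate the sum in the definition of mutual information in $O(mn)$ arithmetic operations (assuming the standard real-RAM model and treating $\log_2$ as a unit cost). This shows that $f$ can be queried in polynomial time.

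Now I invoke Theorem~\ref{thm:submodular} to find, in polynomial time, a proper nonempty subset $X^* \subsetneq [m]$ minimizing $f$. By Lemma~\ref{lem:mutual_info}, $S$ is a $1$-product if and only if some such $X^*$ satisfies $f(X^*) = 0$; equivalently, the returned minimum is $0$ (since $f \ge 0$). If $f(X^*) > 0$, the algorithm reports that $S$ is not a $1$-product. Otherwise, we reconstruct the factors: let $T_1$ be the multiset of distinct columns of $S$ restricted to rows $X^*$, and $T_2$ analogously for $\overline{X^*}$. By Lemma~\ref{lem:mutual_info} and the independence characterization, every column of $S$ is (up to multiplicity) the concatenation of some column of $T_1$ with some column of $T_2$; one then outputs $S_1$ and $S_2$ taking each distinct column with a suitable multiplicity (for instance, taking each distinct column once yields matrices $S_1,S_2$ whose $1$-product is isomorphic to $S$ after collapsing duplicate columns; to recover exact equality up to isomorphism one distributes the multiplicities of the columns of $S$ as a rank-one product of multiplicities over the two factors, which is consistent precisely because $f(X^*)=0$).

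The main obstacle I anticipate is not the existence of the algorithm but the careful verification that the reconstruction step is well-defined and efficient, in particular that the multiplicities of columns of $S$ factor as a product of multiplicities coming from $S_1$ and $S_2$; this is exactly the content of the independence statement in Lemma~\ref{lem:mutual_info} and so follows with little additional work. The only other subtlety is the arithmetic model for evaluating $f$ (entries may be irrational in general, but the probabilities $\mu(c)/n$ are rational, so the evaluation reduces to rational arithmetic plus $\log_2$ of rationals, matching the standard conventions used in~\cite{queyranne1998minimizing}). Altogether, Queyranne's algorithm makes a polynomial number of oracle calls and each oracle call is polynomial, giving overall polynomial running time in $m$ and $n$.
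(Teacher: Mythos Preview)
Your proposal is correct and follows essentially the same approach as the paper: compute the mutual information function $f$, apply Queyranne's symmetric submodular minimization to find a proper nonempty minimizer $X^*$, and conclude via Lemma~\ref{lem:mutual_info} that $S$ is a $1$-product iff $f(X^*)=0$, reconstructing $S_1,S_2$ from the restrictions to $X^*$ and $\overline{X^*}$. You are simply more explicit than the paper about the oracle implementation, the multiplicity bookkeeping in the reconstruction, and the arithmetic model, all of which the paper leaves implicit.
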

\begin{proof} It is clear that $f(X)$ can be computed in polynomial time for any $X$. It suffices then to run Queyranne's algorithm to find $X$ minimizing $f$. If $f(X)>0$, then $S$ is not a 1-\op. Otherwise, $f(X)=0$ and $S_1$, $S_2$ can be reconstructed as described in the proof of Lemma~\ref{lem:mutual_info}.
\end{proof}


We conclude the section with a decomposition result which will be useful in the next section. We call a matrix \emph{irreducible} if it is not a 1-\op. The result below generalizes the fact that a polytope can be uniquely decomposed as a cartesian product of ``irreducible'' polytopes. 


\begin{lem}\label{lem:unique_decomposition}
Let $S\in \R^{m\times n}$ be a 1-\op. Then there exists a partition $\{X_1,\dots, X_t\}$ of $[m]$ such that:
\begin{enumerate}
\item $S$ is a $1$-\op{} with respect to $X_i, \overline{X_i}$ for all $i \in [t]$;
\item for all $i \in [t]$ and all proper subsets $X$ of $X_i$, $S$ is not a 1-\op{} with respect to $X,\overline{X}$;
\item the partition $X_1, \dots, X_t$ is unique up to permuting the labels.
\end{enumerate}
In particular, if $S$ has all distinct columns, then there are matrices $S_1,\dots, S_t$ such that $S=S_1\otimes\dots\otimes S_t$, each $S_i$ is irreducible, and the choice of the $S_i$'s is unique up to renaming and permuting columns.
\end{lem}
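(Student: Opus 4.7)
The plan is to translate everything into the language of the symmetric submodular function $f$ from~\eqref{eq:def_f} and exploit the well-known fact that the family of ``zero-cuts'' of a non-negative symmetric submodular function has a nice sublattice structure.

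First I would define $\mathcal F := \{X \subseteq [m] : f(X) = 0\}$. By Lemma~\ref{lem:mutual_info}, the proper nonempty elements of $\mathcal F$ are exactly the $X$ such that $S$ is a $1$-\op{} with respect to $X, \overline{X}$. Using that $f$ is non-negative, symmetric, and submodular (Proposition~\ref{prop:mutual_info}), I would then show that $\mathcal F$ is closed under the natural set operations: if $X, Y \in \mathcal F$, then submodularity gives
\[
0 \leqslant f(X\cap Y)+f(X\cup Y) \leqslant f(X)+f(Y) = 0,
\]
so $X\cap Y, X\cup Y \in \mathcal F$; symmetry gives $\overline{X} \in \mathcal F$; and combining these yields $X\setminus Y = X \cap \overline{Y} \in \mathcal F$. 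In particular, for any $i \in [m]$, the intersection
\[
X_i := \bigcap\{X\in\mathcal F : i\in X\}
\]
is a well-defined member of $\mathcal F$, and is the unique smallest element of $\mathcal F$ containing~$i$.

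Next I would introduce the equivalence relation $i \sim j$ if and only if every $X \in \mathcal F$ contains both or neither of $i, j$, and let $X_1, \dots, X_t$ be its equivalence classes. The core claim is that $X_i$ as defined above coincides with the equivalence class of $i$: the inclusion of the class into $X_i$ is by definition of $\sim$; conversely, if $j \in X_i$ and some $Y \in \mathcal F$ contained $j$ but not $i$, then $\overline{Y} \in \mathcal F$ contains $i$, hence $X_i \subseteq \overline{Y}$, contradicting $j \in X_i \cap Y$. This simultaneously proves condition~(1) (each $X_i \in \mathcal F$) and condition~(2) (if $\emptyset \neq Y \subsetneq X_i$ were in $\mathcal F$, it would separate two elements of the class $X_i$, contradicting their equivalence). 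For uniqueness (3), given any partition $\{Y_1,\dots,Y_s\}$ meeting (1) and (2), pick $i \in Y_j$; then $X_i \subseteq Y_j$ and $X_i \in \mathcal F$ is nonempty, so minimality of $Y_j$ forces $X_i = Y_j$, showing the two partitions coincide.

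For the ``in particular'' clause, I would iterate the $1$-\op{} splitting: writing $S = S_1' \otimes S'$ with respect to $X_1, \overline{X_1}$, the hypothesis that $S$ has all distinct columns forces both $S_1'$ and $S'$ to have all distinct columns (a repeated column in either factor would duplicate columns in the \op{}). Moreover, the partition of $\overline{X_1}$ induced on $S'$ by $X_2, \dots, X_t$ still satisfies conditions (1)--(2), so we can recurse and obtain $S = S_1 \otimes \dots \otimes S_t$ with each $S_i$ irreducible; uniqueness of the factorization up to renaming and permuting columns follows directly from the uniqueness of the partition together with the observation that, once the row-partition is fixed and columns are distinct, each $S_i$ is determined as the matrix of distinct columns of $S$ restricted to rows $X_i$. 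The only delicate point in the argument is verifying the ``smallest element of $\mathcal F$ containing $i$ equals the equivalence class of $i$'' step, which is where the symmetry of $f$ (and hence closure under complement) is essential; without it the submodular-intersection argument alone would not suffice.
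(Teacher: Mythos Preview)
Your proposal is correct and follows essentially the same route as the paper: both arguments hinge on the fact that $\mathcal F=\{X:f(X)=0\}$ is closed under intersection, union, and complement (from non-negativity, submodularity, and symmetry of $f$), and then identify the $X_i$ as the atoms of this lattice. The paper phrases this as ``the minimal non-empty members of $\mathcal M$'' and checks directly that they are pairwise disjoint and cover $[m]$, whereas you reach the same objects via the equivalence relation $i\sim j$; these are two descriptions of the same partition, and the uniqueness and ``in particular'' parts are handled in the same way.
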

\begin{proof}
Let $f:2^{[m]} \to \R$ be the function defined in Equation~\ref{eq:def_f}. Let $\mathcal M =\{X \subseteq [m] \mid f(X)=0\}$.  Let $X_1, \dots, X_t$ be the minimal (under inclusion) non-empty members of $\mathcal M$.  Since $f$ is non-negative and submodular, if $f(A)=f(B)=0$, then $f(A\cap B)=f(A\cup B)=0$. By minimality, this implies that $X_i \cap X_j = \emptyset$ for all $i \neq j$.  Since $f$ is symmetric, $\bigcup_{i \in [t]} X_i=[m]$. By Lemma~\ref{lem:mutual_info},  $t \geq 2$ and $X_1, \dots, X_t$ satisfy (i) and (ii).  Conversely, by Lemma~\ref{lem:mutual_info}, if $\{Y_1, \dots, Y_s\}$ is a partition of $[m]$ satisfying (i) and (ii), then $Y_1, \dots, Y_s$ are the minimal non-empty members of $\mathcal M$, which proves uniqueness.  



To conclude, assume that $S$ has all distinct columns. Then as argued above each $S_i$ is obtained by picking each distinct column of $S$ restricted to rows of $X_i$ exactly once, and it is thus unique up to permutations, once $X_i$ is fixed. Each $S_i$ is irreducible thanks to the minimality of $X_i$ and to Lemma \ref{lem:mutual_info}. The fact that the $X_i$'s are unique up to renaming concludes the proof.
\end{proof}

\subsection{Extension to $k$-products}
We now extend the previous results in order to prove Theorem \ref{thm:main1} and \ref{thm:main2}. 
\begin{proof}[Proof of Theorem \ref{thm:main1}]
Recall that, if a matrix $S$ is a $k$-\op{}, then it has $k-1$ special rows that divide $S$ in submatrices $S[0], \dots, S[k-1]$, all of which are 1-\op{}s \emph{with respect to the same partition}. Hence, our algorithm starts by guessing the $k-1$ special rows, and obtaining the corresponding submatrices $S[0], \dots, S[k-1]$. Notice that, while our input matrix $S$ is a matrix with real entries, the special rows are chosen among the 0/1 rows of $S$. Let $f_0,f_1,\dots,f_{k-1}$ denote the functions $f$ as defined in \eqref{eq:def_f} with respect to the matrices $S[0], \dots, S[k-1]$ respectively, and let $\tilde{f}=\sum_{i=0}^{k-1} f_i$. Notice that $\tilde{f}$ is submodular, and is zero if and only if each $f_i$ is. Let $X$ be a proper subset of the non-special rows of $S$ (which are the rows of any of $S[0], \dots, S[k-1]$). It is an easy consequence of Lemma \ref{lem:mutual_info} that $S[0], \dots, S[k-1]$ are 1-\op{}s with respect to $X$ if and only if $\tilde{f}(X)=0$. Then $S$ is a $k$-\op{} with respect to the chosen special rows if and only if the minimum of $\tilde{f}$ is zero. If this does not happen, we proceed to the next guessing of special rows. There are $O(m^k)$ such guessings, hence polynomially many in the input size (recall that $k$ is assumed to be constant).

Alternatively, one could first repeatedly decompose each $S[j]$ and obtain a minimal partition of its row set, as described in Lemma \ref{lem:unique_decomposition}, and then check whether these $k$ partitions are refinements of a single partition $X,\bar{X}$ of the row set. This is a simple combinatorial problem that can be solved efficiently.


Once a feasible partition is found, $S_1, S_2$ can be reconstructed by first reconstructing all $S_1[j]$'s, $S_2[j]$'s and then concatenating them and adding the special rows.
\end{proof}

\begin{proof}[Proof of Theorem \ref{thm:main2}] 
In polynomial time, it is possible to remove redundant rows from the slack matrix $S$ of $P$. Hence, from now on, we assume that $S$ is non-redundant. 

By Corollary~\ref{cor:glued}, $P$ is affinely isomorphic to $P_1 \times_k P_2$, for some polytopes $P_1,P_2$ if and only if the algorithm from Theorem~\ref{thm:main1} outputs matrices $S_1, S_2$ such that $S=S_1 \otimes_k S_2$. In this case, $S_1',S_2'$ constructed as in Corollary~\ref{cor:glued} are slack matrices of $P_1, P_2$. \end{proof}



\section{Recognizing 2-level matroid base polytopes}\label{sec:slackmatroids}

\subsection{Basic definitions}

In this section, we use the results in Section \ref{sec:algorithms} to derive a polynomial time algorithm to recognize the slack matrix of a 2-level base matroid polytope. 

We start with some basic definitions and facts about matroids, and we refer the reader to \cite{oxley2006matroid} for missing definitions and details. We regard a matroid $M$ as a pair $(E,\calB)$, where $E$ is the ground set of $M$, and $\calB$ is its set of bases. The dual matroid of $M$, denoted by $M^*$, is the matroid on the same ground set whose bases are the complements of the bases of $M$. 
An element $p\in E$ is called a \emph{loop} (respectively \emph{coloop}) of $M$ if it appears in none (all) of the bases of $M$.
If $e$ is a coloop, then the \emph{deletion} of $e$ is the matroid $M \setminus e$ on $E\setminus \{e\}$ whose set of bases is $\{B \setminus \{e\} \mid B \in \mathcal B\}$; otherwise $M\setminus e$ is the matroid on $E\setminus \{e\}$ whose bases are the bases of $M$ that do not contain $e$.  The \emph{contraction} of $e$ is the matroid $M/e:=(M^* \setminus e)^*$. 
A matroid $M = (E,\calB)$ is \emph{uniform} if $\calB=\binom{E}{k}$, where $k$ is the rank of $M$. We denote the uniform matroid with $n$ elements and rank $k$ by $U_{n,k}$.

Consider matroids $M_1=(E_1,\calB_1)$ and $M_2=(E_2,\calB_2)$, with non-empty ground sets. If $E_1 \cap E_2 =\emptyset$, the \emph{1-sum} $M_1 \oplus M_2$ is defined as the matroid with ground set $E_1 \cup E_2$ and base set $\{B_1 \cup B_2 \mid B_1 \in \calB_1, B_2 \in \calB_2\}$. If, instead, $E_1 \cap E_2 = \{p\}$, where $p$ is neither a loop nor a coloop in $M_1$ or $M_2$, 
we let the \emph{2-sum} $M_1 \oplus_2 M_2$ be the matroid with ground set $(E_1 \cup E_2) \setminus \{p\}$, and base set $\{(B_1 \cup B_2) \setminus \{p\} \mid B_i\in \calB_i \mbox{ for } i \in [2] \mbox{ and } p\in B_1 \triangle B_2\}$.  A matroid $M$ is \emph{connected} if it cannot be written as the 1-sum of two matroids, each with fewer elements then $M$. It is well-known that $M_1 \oplus_2 M_2$ is connected if and only if $M_1$ and $M_2$ are both connected. We also recall the following (see~\cite[Chapters 4.2 and Chapter 8.3]{oxley2006matroid}).

\begin{prop}\label{prop:sumrank} 
Let $M$ be a matroid on ground set $E$ and let $E_1, E_2\subseteq E$ be disjoint.
\begin{enumerate}
    \item $M=M_1\oplus M_2$ for some matroids $M_1, M_2$ on ground sets $E_1$, $E_2$ if and only if $\rk(E_1)+\rk(E_2)=\rk(E)$.
    \item Let $M$ be connected. Then $M=M_1\oplus_2 M_2$ for some matroids $M_1, M_2$ on ground sets $E_1\cup\{p\}$, $E_2\cup\{p\}$ respectively if and only if there is a partition $(E_1, E_2)$ of $E$ with $\rk(E_1)+\rk(E_2)=\rk(E)+1$.
\end{enumerate}
\end{prop}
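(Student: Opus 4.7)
The plan is to verify both parts using standard matroid theory; both statements are recalled from Oxley~\cite{oxley2006matroid} (Chapters 4.2 and 8.3), so the proof proposal is to outline the main ideas and appeal to the reference for details.

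For part (i), the forward direction is immediate from the definition of the 1-sum: the bases of $M_1 \oplus M_2$ are exactly the sets $B_1 \cup B_2$ with $B_i$ a basis of $M_i$, so $\rk(E)=|B_1|+|B_2|=\rk(E_1)+\rk(E_2)$. For the converse, assume $\rk(E_1)+\rk(E_2)=\rk(E)$. Any basis $B$ of $M$ satisfies $|B \cap E_i| \le \rk(E_i)$, so the identity $|B|=\rk(E)=\rk(E_1)+\rk(E_2)$ forces $|B\cap E_i|=\rk(E_i)$; hence $B \cap E_i$ is a basis of $M|E_i$ and $B$ decomposes along the partition $(E_1,E_2)$. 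Repeated basis exchange in $M$ then shows conversely that every pair $(B_1, B_2)$ of bases of $(M|E_1, M|E_2)$ gives a basis $B_1 \cup B_2$ of $M$, so $M=(M|E_1)\oplus(M|E_2)$.

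For part (ii), the forward direction is a short rank calculation based on the definition of the 2-sum: if $M=M_1 \oplus_2 M_2$ with shared element $p$ that is neither loop nor coloop in either factor, then $\rk_M(E_i)=\rk(M_i)-1$ and $\rk_M(E)=\rk(M_1)+\rk(M_2)-2$, giving $\rk(E_1)+\rk(E_2)=\rk(E)+1$. The converse corresponds to Seymour's 2-sum decomposition theorem: given a partition $(E_1, E_2)$ with $\rk(E_1)+\rk(E_2)=\rk(E)+1$ and $M$ connected, one constructs the factors $M_i$ on ground set $E_i \cup \{p\}$ for a new element $p$ by prescribing the rank function as the appropriate extension of $\rk_M$ restricted to $E_i$, and then verifies that $B \subseteq E$ is a basis of $M$ if and only if $B=(B_1 \cup B_2)\setminus\{p\}$ for bases $B_i$ of $M_i$ with $p \in B_1 \triangle B_2$.

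The main obstacle is the converse of part (ii): showing that the prescribed factors $M_i$ are genuine matroids and that the above basis decomposition characterizes $M$. Since this is a classical structural theorem, the proof would defer to Oxley~\cite{oxley2006matroid} rather than redo the construction in detail, giving only the rank computations for the forward directions and referencing the theorems there for the hard direction of (ii).
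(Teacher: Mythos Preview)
The paper does not prove this proposition at all; it simply cites \cite[Chapters 4.2 and 8.3]{oxley2006matroid} and moves on. Your approach of sketching the easy directions and deferring to Oxley for the hard converse of (ii) is therefore fully aligned with the paper's treatment, and your sketch of part (i) is fine.

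However, the rank computations you give in the forward direction of part (ii) are incorrect, and in fact do not even yield the conclusion you state. If $M=M_1\oplus_2 M_2$ with common element $p$ (neither loop nor coloop in either factor), then $\rk_M(E)=\rk(M_1)+\rk(M_2)-1$, not $-2$: a basis of $M$ has the form $(B_1\cup B_2)\setminus\{p\}$ with $p\in B_1\triangle B_2$, hence has size $|B_1|+|B_2|-1$. Likewise $\rk_M(E_i)=\rk(M_i)$, not $\rk(M_i)-1$: since $p$ is not a coloop of $M_1$ and not a loop of $M_2$, one can choose $B_1\not\ni p$ and $B_2\ni p$, giving a basis of $M$ whose intersection with $E_1$ is all of $B_1$. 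With your values one would get $\rk(E_1)+\rk(E_2)=(\rk(M_1)-1)+(\rk(M_2)-1)=\rk(E)$, contradicting the very identity you claim to derive. With the corrected values, $\rk(E_1)+\rk(E_2)=\rk(M_1)+\rk(M_2)=\rk(E)+1$ as required.
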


The \emph{base polytope} $B(M)$ of a matroid $M$ is the convex hull of the characteristic vectors of its bases. It is well-known that:
\[
B(M)=\{x\in\R^E_+: x(U)\leq \rk(U) \ \forall U\subseteq E, x(E)=\rk(E)\},
\]
where $\rk$ denotes the rank function of $M$.

We remark that, for any matroid $M$, the base polytopes $B(M)$ and $B(M^*)$ are affinely isomorphic via the transformation $f(x)=1-x$ and hence have the same slack matrix. 

\subsection{A high-level view of our algorithm}

Our algorithm is based on the following decomposition result, that characterizes those matroids $M$ such that $B(M)$ is 2-level (equivalently, such that $B(M)$ admits a 0/1 slack matrix). 

\begin{thm}[\cite{Grande16}] \label{thm:matroid-2-level} 
	The base polytope of a matroid $M$ is $2$-level if and only if $M$ can be obtained from uniform matroids through a sequence of 1-sums and 2-sums.
\end{thm}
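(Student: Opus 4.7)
My plan is to prove the two directions separately, handling the ``only if'' direction by induction on $|E(M)|$.

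\textbf{The ``if'' direction.} First I would verify the base case: the base polytope of a uniform matroid $U_{n,k}$ is the hypersimplex $\{x\in[0,1]^n:\sum_i x_i=k\}$, whose facets are the coordinate inequalities $x_i\ge 0$ and (when $k<n$) $x_i\le 1$. Since every vertex is the characteristic vector of a $k$-subset, each slack is $0$ or $1$, so $B(U_{n,k})$ is $2$-level. Next, I would show that both sums preserve $2$-levelness. The $1$-sum case follows from $B(M_1\oplus M_2)=B(M_1)\times B(M_2)$ together with Lemma~\ref{lem:1-sum_slack}: the $1$-product of two $\{0,1\}$-slack matrices is $\{0,1\}$. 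For the $2$-sum at a shared element $p$, I would observe that $B(M_1\oplus_2 M_2)$ is affinely isomorphic to a simplicial $2$-glued product of $B(M_1)$ and $B(M_2)$ along $p$ (possibly after replacing one of $M_1,M_2$ by its dual, in order to recast the ``$p\in B_1\triangle B_2$'' condition of the $2$-sum into the ``$x_p=y_p$'' form required by the definition of simplicial glued product). Lemma~\ref{lem:k-sum_slack}(i) with $k=2$ then produces a $\{0,1\}$ slack matrix of the $2$-sum.

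\textbf{The ``only if'' direction; induction setup.} If $M$ is disconnected, write $M=M_1\oplus M_2$; then $B(M)=B(M_1)\times B(M_2)$, Lemma~\ref{lem:1-sum_slack} forces each $B(M_i)$ to be $2$-level, and induction applies. If $M$ is connected and uniform we are done. Otherwise the goal is to exhibit a $2$-separation $(E_1,E_2)$, i.e.\ a partition of $E(M)$ with $|E_1|,|E_2|\ge 2$ and $\rk(E_1)+\rk(E_2)=\rk(M)+1$; by Proposition~\ref{prop:sumrank}(ii) this gives $M=M_1\oplus_2 M_2$ with $|E(M_i)|<|E(M)|$. The $2$-levelness of each $B(M_i)$ is then deduced by identifying it with (or as a minor-operation applied to) the face of $B(M)$ cut out by $x(E_i)=\rk(E_i)$: this face is the base polytope of $M|_{E_i}\oplus M/E_i$, hence $2$-level as a face of a $2$-level polytope, from which the $2$-levelness of $B(M_i)$ follows by a standard analysis of the $2$-sum.

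\textbf{The main obstacle: producing the 2-separation.} This is the technical heart of the argument. Since $M$ is not uniform, I expect a facet-defining rank inequality $x(U)\le \rk(U)$ with $2\le |U|\le |E|-2$ (otherwise direct inspection of the remaining facet structure forces $M$ to be uniform). By $2$-levelness, the slacks of this facet lie in $\{0,c\}$ for some positive integer $c$. A basis-exchange argument forces $c=1$: any two bases of $M$ can be joined by a sequence of single-element swaps, each of which changes $|B\cap U|$ by at most one, so the set of attained values of $|B\cap U|$ is an integer interval. Therefore $|B\cap U|\in\{\rk(U)-1,\rk(U)\}$ for every basis $B$, and by extending a maximum independent set of $E\setminus U$ to a basis of $M$ one gets $\rk(E\setminus U)\le \rk(M)-\rk(U)+1$. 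Since $M$ is connected, Proposition~\ref{prop:sumrank}(i) supplies the reverse inequality $\rk(U)+\rk(E\setminus U)\ge \rk(M)+1$. Combining, $\rk(U)+\rk(E\setminus U)=\rk(M)+1$, so $(U,E\setminus U)$ is the desired $2$-separation and the induction closes. The delicate point, which I would expect to absorb most of the technical effort, is showing that in the non-uniform case a facet-defining $U$ with $2\le|U|\le|E|-2$ really exists, and verifying that the factors $M_1,M_2$ inherit $2$-levelness cleanly from $B(M)$.
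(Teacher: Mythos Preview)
The paper does not prove Theorem~\ref{thm:matroid-2-level}; it is quoted from~\cite{Grande16} and used as a black box for the algorithm in Section~\ref{sec:slackmatroids}. So there is no in-paper argument to compare against, and I assess your outline on its own.

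Your strategy is correct. Two remarks on the points you flag as delicate. The first is easier than you fear: if $M$ is connected, loopless and coloopless, and every facet of $B(M)$ has the form $x_e\ge 0$ or $x_e\le 1$, then $B(M)$ coincides with the hypersimplex $\{x\in[0,1]^E:x(E)=\rk(M)\}$, forcing $M$ to be uniform. (And $|U|=|E|-1$ never gives a new facet, since on the affine hull $x(E\setminus\{e\})\le\rk(E\setminus\{e\})=\rk(E)$ is equivalent to $x_e\ge 0$.) So in the non-uniform case a facet $x(U)\le\rk(U)$ with $2\le|U|\le|E|-2$ exists automatically.

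The second point---that $B(M_i)$ inherits $2$-levelness---does not go through the face $\{x(E_i)=\rk(E_i)\}$ as you sketch. That face is indeed $B(M|_{E_i})\times B(M/E_i)$, so you get $2$-levelness of $M_i\setminus p$ and $M_i/p$; but $M_i$ lives on $E_i\cup\{p\}$ and is not a minor of $M$, and there is no general principle deducing $2$-levelness of $B(N)$ from that of $B(N\setminus p)$ and $B(N/p)$. The clean route is the one the paper's machinery already provides: once $M=M_1\oplus_2 M_2$, Lemma~\ref{obs:2sumpolydescription} together with Lemma~\ref{lem:gluedprod}(ii),(iv) says that every facet of $B(M_1)$ lifts to a facet of $B(M)$, and the slack of the lifted facet at a vertex $(v_1,v_2)$ of $B(M)$ equals its slack at $v_1\in V(B(M_1))$. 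Since every vertex of $B(M_1)$ occurs as the first block of some vertex of $B(M)$, $2$-levelness transfers. Equivalently, in slack-matrix language: the non-redundant $0/1$ slack matrix of $B(M)$ is a $2$-product by Corollary~\ref{cor:glued}, and the factors $S_i'$---submatrices of it augmented by one complementary $0/1$ row---are $0/1$ slack matrices of $B(M_i)$. Either phrasing closes the induction; your face-based detour does not.
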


The general idea is to use the algorithm from Theorem~\ref{thm:main1} to decompose our candidate slack matrix as 1-\op{}s and 2-\op{}s, until each factor corresponds to the slack matrix of a uniform matroid, which can be easily recognized. We discuss each of those steps below.

First, observe that, if $M=M_1\oplus M_2$, then $B(M)$ is the Cartesian product $B(M_1)\times B(M_2)$. Hence its slack matrix is a 1-\op{}, by Lemma \ref{lem:1-sum_slack}. Lemma \ref{lem:1-sum_matroid} shows that the converse holds. 

Second, if $M=M_1\oplus_2 M_2$, then a less trivial polyhedral relation holds: $B(M)$ is the glued product of appropriate affine transformations of $B(M_1)$, $B(M_2)$, providing a connection with the 2-\op{} of slack matrices. We will explain this connection in Lemma~\ref{obs:2sumpolydescription}.

Third, let us explain how to recognize the slack the matrix of base polytopes of uniform matroids. The base polytope of the uniform matroid $U_{n,k}$ is the $(n,k)$-hypersimplex
\(
B(U_{n,k}) = \{ x \in [0,1]^E \mid x(E) = k \}
\).
If $2\leq k \leq n-2$, the (irredundant, 0/1) slack matrix $S$ of $B(U_{n,k})$ has $2n=2|E|$ rows and $\binom{n}{k}$ columns of the form $(v, \mathbf{1}-v)$ where $v\in\{0,1\}^n$ is a vector with exactly $k$ ones, hence can be recognized in polynomial time (in its size). We denote such a matrix by $S_{n,k}$. If $k=1$, or equivalently $k=n-1$, $S=S_{n,1}=S_{n,n-1}$ is just the identity matrix $I_n$. The case $k=0$ or $k=n$ corresponds to a non-connected matroid whose base polytope is just a single vertex, and can be ignored for our purposes.

Finally, we remark that the connections between $1$- and $2$- sum of matroids and $1$- and 2-\op{} of (slack) matrices does not carry over to $k$-sum and $k$-\op{}, in particular the 3-sum of matroids does not seem to have a ``simple'' interpretation in terms of slack matrices (see \cite{aprile2019regular} for details about the polyhedral aspects of the 3-sum of matroids).


Before proceeding with some structural lemmas, we need some preliminary assumptions. Let $M(E,\calB)$ be a matroid such that $B(M)$ is 2-level, and let $S$ be a slack matrix of $B(M)$. From now on we assume that:
\begin{enumerate}
\item $S$ is 0/1.
\item \label{assump:loops} $M$ does not have loops or coloops. 
\item \label{assump:const} $S$ does not have any constant row (i.e., all zeros or all ones).

\end{enumerate}

Assumption \ref{assump:loops} is without loss of generality as, if $e$ is a loop or coloop of $M$, then $B(M)$ has a constant coordinate in correspondence of $e$ and is thus isomorphic to $B(M\setminus e)$. Similarly, Assumption \ref{assump:const} is without loss of generality as constant rows correspond to redundant inequalities and can always be removed from a slack matrix.

Finally, only for the sake of proving Lemmas \ref{lem:1-sum_matroid}, \ref{lem:2-sum_matroid}, we also make the following assumption throughout Sections \ref{sec:1sum}, \ref{sec:2sum}.

\begin{enumerate}\setcounter{enumi}{3}
\item\label{assump:nonneg} $S$ has a row for each inequality of the form $x_e\geq 0$ (we refer to such rows as non-negativity rows) and a row for each inequality $1-x_e\geq 0$ for $e\in E$.
\end{enumerate}

We now justify Assumption \ref{assump:nonneg}. 
One can show (using well-known facts from~\cite{tutte1965lectures}, see \cite{aprile2018thesis} for more details)
that for each element $e\in E$ at least one of the inequalities $x_e\geq 0$, $x_e\leq 1$ is facet defining for $B(M)$. 



Notice that these inequalities correspond to opposite faces of $B(M)$, and to complementary 0/1 rows in the slack matrix $S$ of $B(M)$ (two 0/1 rows are complementary if their sum is the all ones vector $\mathbf{1}$).
We remark that, for any slack matrix $S$ with a 0/1 row $r$, adding the complementary row $\mathbf{1}-r$ to $S$ still gives a slack matrix, which is a $k$-\op{} if and only if $S$ is. Hence we can assume that our slack matrix $S$ contains all the non-negativity rows.

\subsection{1-sums}\label{sec:1sum}

We now focus on the relationship between 1-sums and 1-\op{}s. As already remarked, if $S_1, S_2$ are slack matrices of $B(M_1), B(M_2)$ respectively, then $S_1\otimes S_2$ is a slack matrix of $B(M_1)\times B(M_2)=B(M_1\oplus M_2)$. We now show that the converse holds: this will make sure that, whenever we decompose the slack matrix of a matroid base polytope as a 1-\op, the factors are still matroid base polytopes.

\begin{lem} \label{lem:1-sum_matroid}
Let $M$ be a matroid and let $S$ be a slack matrix of $B(M)$. If $S=S_1\otimes S_2$ for some matrices $S_1, S_2$, then there are matroids $M_1, M_2$ such that $M=M_1\oplus M_2$ and $S_i$ is a slack matrix of $B(M_i)$ for $i \in [2]$.
\end{lem}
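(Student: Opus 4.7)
By Lemma~\ref{lem:1-sum_slack}, $S_1, S_2$ are slack matrices of polytopes $P_1, P_2$ and $B(M)$ is affinely isomorphic to $P_1 \times P_2$ via some affine isomorphism $\phi$. Let $R_1 \sqcup R_2$ be the row partition of $S$ induced by the 1-\op{}. By Assumption~4, both rows $r_e$ (slack of $x_e \geq 0$) and $\mathbf{1} - r_e$ (slack of $x_e \leq 1$) lie in $S$ for each $e \in E$. I claim they must lie in the same block $R_i$: otherwise the disjoint facets $F_e = \{x \in B(M) : x_e = 0\}$ and $F_e' = \{x \in B(M) : x_e = 1\}$ would go under $\phi$ to facets of $P_1 \times P_2$ of mutually orthogonal types---one of the form $G_1 \times P_2$, the other $P_1 \times G_2$---whose intersection $G_1 \times G_2$ is non-empty, contradicting $F_e \cap F_e' = \emptyset$. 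Set $E_i := \{e \in E : r_e \in R_i\}$. Both $E_1, E_2$ are non-empty, because if some $E_i$ were empty, the $R_{3-i}$-restriction of any column of $S$ would already determine the full column, forcing $S_i$ to have a single distinct column and $S$ to contain constant rows, contradicting Assumption~3.

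\textbf{1-sum of matroids.} The rectangle property of the 1-\op{} (every pair of a column of $S_1$ and a column of $S_2$ occurs as a column of $S$) yields, for any two bases $B, B' \in \calB$, a base $B''$ whose $R_1$- and $R_2$-restrictions match those of $B$ and $B'$ respectively. Since the non-negativity rows in $R_i$ already pin down the $E_i$-coordinates, this forces $B'' = (B \cap E_1) \cup (B' \cap E_2)$. Defining $\calC_i := \{B \cap E_i : B \in \calB\}$, we obtain the factorization $\calB = \{C_1 \cup C_2 : C_1 \in \calC_1, C_2 \in \calC_2\}$. A standard transfer of base exchange to $\calC_i$ (swap an element within $E_i$ while fixing the $E_{3-i}$-part of a base) shows that $M_i := (E_i, \calC_i)$ is a matroid and $M = M_1 \oplus M_2$; loop- and coloop-freeness of $M$ (Assumption~2) is inherited by $M_1$ and $M_2$.

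\textbf{Identifying $S_1$ as a slack matrix of $B(M_1)$.} (The case of $S_2$ is symmetric.) The columns of $S_1$ biject with the bases of $M_1$ via $C_1 \mapsto c_1(B)$ for any $B$ with $B \cap E_1 = C_1$ (well-defined by the rectangle property). Non-negativity rows of $R_1$ are automatically slack rows of the non-negativity facets of $B(M_1)$ with matching entries. For a rank-type row $x(U) \leq \rk_M(U)$ in $R_1$, the 1-\op{} forces its slack $\rk_M(U) - |B \cap U|$ to depend only on $B \cap E_1$, which in turn forces $|C_2 \cap (U \cap E_2)|$ to be constant over $C_2 \in \calC_2$. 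A standard matroid argument (equating the maximum and minimum of $|C_2 \cap X|$ over bases) then shows that $U \cap E_2$ is a separator of $M_2$ with this constant value equal to $\rk_{M_2}(U \cap E_2)$. Combined with $\rk_M(U) = \rk_{M_1}(U \cap E_1) + \rk_{M_2}(U \cap E_2)$ (from $M = M_1 \oplus M_2$), the row entry at $B$ reduces to $\rk_{M_1}(U \cap E_1) - |C_1 \cap (U \cap E_1)|$, which is exactly the slack of the valid inequality $x(U \cap E_1) \leq \rk_{M_1}(U \cap E_1)$ in $B(M_1)$ at $C_1$.

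\textbf{Main obstacle.} The delicate step is the last one: translating the abstract affine isomorphism $B(M) \cong P_1 \times P_2$ into a concrete matroid-theoretic identification $P_i \cong B(M_i)$. The subtle interplay is that rank inequalities of $B(M)$ may a priori involve elements of both $E_1$ and $E_2$, and one must argue---using the 1-sum structure and the absence of loops and coloops---that each such inequality in $R_1$ effectively reduces to a valid rank inequality on $U \cap E_1$ inside $M_1$.
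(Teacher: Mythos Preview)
Your proof is correct and follows the same strategy as the paper: partition $E$ via the non-negativity rows (Assumption~4), use the rectangle property of the $1$-\op{} to obtain $\calB=\{C_1\cup C_2 : C_i\in\calC_i\}$ and hence $M=M_1\oplus M_2$, then identify each $S_i$ as a slack matrix of $B(M_i)$. Two minor remarks: your claim that $r_e$ and $\mathbf{1}-r_e$ lie in the same block $R_i$ is not needed for this lemma (the paper uses only the rows $r_e$ here; also, your facet argument should really be phrased for faces, since one of $x_e\ge 0$, $x_e\le 1$ may fail to be facet-defining); and your separator analysis of rank-type rows in $R_1$ is in fact more explicit than the paper's treatment, which simply asserts that any inequality $x(U)\le\rk(U)$ with $U$ meeting both $E_1$ and $E_2$ is redundant in $B(M_1)\times B(M_2)$ and can be discarded.
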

 \begin{proof} 
By Assumption \ref{assump:nonneg}, $S$ contains all the rows corresponding to inequalities $x_e\geq 0$, for each $e$ element of $M$. Each such non-negativity inequality belongs either to $S_1$ or to $S_2$, hence we can partition $E$ into $E_1, E_2$ accordingly. Recall that the row set of $S$ can also be partitioned into sets $R_1, R_2$, since each row of $S$ corresponds to a row of $S_1$ or $S_2$. Notice that $E_1 \neq \emptyset$; otherwise, all the rows corresponding to $x_e\geq 0$ belong to $R_1$.  Similarly, $E_2 \neq \emptyset$. But now the slack of a vertex of $B(M)$ with respect to every other inequality (of form $x(U)\leq \rk(U)$ for some $U\subseteq E$) depends entirely on the slack with respect to the rows in $R_1$, implying that each column of the matrix $S$ restricted to rows of $R_1$ can be completed to a column of $S$ in a unique way. Hence, since $S$ is a 1-\op, we must conclude that $S_2$ is made of a single column, contradicting the fact that $S$ does not have constant rows (Assumption \ref{assump:const}). 

Now, let $\calB_i=\{B\cap E_i: B\in \calB\}$ for $i \in [2]$. By definition of 1-\op{} of matrices, $B(M)=\{B_1\cup B_2: B_i\in \calB_i $ for $i \in [2]\}$. 
This implies that $M=M_1\oplus M_2$ where $M_i=M\restr{E_i}$  for $i \in [2]$, thus $B(M)=B(M_1)\times B(M_2)$. Hence, for every row of $S$ corresponding to an inequality $x(U)\leq \rk(U)$, we have either $U\subseteq E_1$, $U\subseteq E_2$, or the inequality is redundant and can be removed. In the first case, clearly the row is in $R_1$ as its entries depend only on the rows $x_e\geq 0$ for $e\in E_1$, and similarly in the second case the row is in $R_2$. Since removing redundant rows does not change the polytopes for which $S,S_1,S_2$ are slack matrices, we conclude that $S_i$ is a slack matrix of $B(M_i)$ for $i \in [2]$.
\end{proof}

\begin{cor}\label{cor:matroid_conn}
Let $M$ be a matroid and let $S$ be a slack matrix of $B(M)$. Then $M$ is connected if and only if $S$ is irreducible. 
\end{cor}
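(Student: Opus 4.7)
The corollary states $M$ connected $\iff$ $S$ irreducible.

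The forward direction follows immediately from Lemma~\ref{lem:1-sum_matroid} by contraposition: if $S = S_1 \otimes S_2$ with both factors nonempty, the lemma produces matroids $M_1, M_2$ on disjoint, nonempty ground sets (nonemptiness of each $E_i$ is established within that lemma's proof using Assumption~\ref{assump:nonneg}) with $M = M_1 \oplus M_2$, so $M$ is disconnected.

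For the converse, I would argue the contrapositive. Assume $M = M_1 \oplus M_2$ with nonempty ground sets $E_1, E_2$, so $B(M) = B(M_1) \times B(M_2)$. The crux is to show every row of $S$ depends only on the $E_1$-coordinates or only on the $E_2$-coordinates of the corresponding vertex. Consider such a row, coming from some valid inequality $a^\top x \leq b$ of $B(M)$, and write $a = a_1 + a_2$ with $\mathrm{supp}(a_i) \subseteq E_i$. Since $S$ is $0/1$ (Assumption~1), $a^\top v \in \{b-1, b\}$ for every vertex $v$, and every vertex can be written as $v = (v_1, v_2) \in V(B(M_1)) \times V(B(M_2))$ with $v_1, v_2$ varying independently. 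Fixing $v_2$, the values that $a_1^\top v_1$ takes as $v_1$ varies must lie in the two-element set $\{b-1-a_2^\top v_2,\; b-a_2^\top v_2\}$; since this set of values is $v_2$-independent, a short case analysis (depending on whether $a_1^\top v_1$ takes one or two distinct values) forces either $a_1^\top v_1$ to be constant on $V(B(M_1))$ or $a_2^\top v_2$ to be constant on $V(B(M_2))$. In either case the slack $b - a^\top v$ depends on only one of $v_1, v_2$.

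I would then partition the rows of $S$ into $R_1, R_2$ accordingly. Each column of $S\restr{R_i}$ depends only on the $E_i$-part of the vertex, so distinct columns of $S\restr{R_i}$ are in bijection with $V(B(M_i))$ (using that the non-negativity rows in $R_i$, guaranteed by Assumption~\ref{assump:nonneg}, separate the $0/1$ vertices of $B(M_i)$). Letting $\tilde S_i$ be the matrix of distinct columns of $S\restr{R_i}$, the matrix $\tilde S_i$ is a slack matrix of $B(M_i)$ by Lemma~\ref{lem:slackconvcol}, and by construction $S$ is isomorphic via a column permutation to $\tilde S_1 \otimes \tilde S_2$, so $S$ is a $1$-product.

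I expect the row-partitioning step to be the main technical obstacle, as it is where the $0/1$ hypothesis on $S$ is essential: without it, ``mixed'' rows with support intersecting both $E_1$ and $E_2$ could exist and spoil the partition. The rest of the argument is essentially bookkeeping around the product structure of $B(M)$.
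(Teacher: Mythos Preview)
Your proposal is correct. The forward direction is exactly what the paper intends: the contrapositive of Lemma~\ref{lem:1-sum_matroid}.

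For the backward direction the paper offers no argument at all---the corollary is stated immediately after Lemma~\ref{lem:1-sum_matroid} and is meant to follow from that lemma together with the one-line remark preceding it (that if $M=M_1\oplus M_2$ then $B(M)=B(M_1)\times B(M_2)$, whose slack matrix is $S_1\otimes S_2$ by Lemma~\ref{lem:1-sum_slack}). You instead give a self-contained argument: using only the $0/1$ hypothesis on $S$ you show that every row of $S$ depends on at most one factor of the Cartesian product, and then assemble the $1$-product explicitly. This is more work than the paper does, but it buys something real: the paper's implicit appeal to Lemma~\ref{lem:1-sum_slack} only exhibits \emph{some} slack matrix of $B(M)$ as a $1$-product, whereas your argument shows that the \emph{given} $S$---which under Assumptions~(i)--(iv) could in principle contain redundant $0/1$ rows beyond the facet rows and the single-coordinate rows of Assumption~(iv)---is itself a $1$-product. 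One small quibble: your citation of Lemma~\ref{lem:slackconvcol} for ``$\tilde S_i$ is a slack matrix of $B(M_i)$'' is misplaced (Lemma~\ref{lem:1-sum_slack} is the right tool), but that side claim is not needed for the conclusion anyway.
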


\subsection{2-sums}\label{sec:2sum}

Now, we deal with slack matrices of connected matroids and with the operation of 2-\op. 

We will need the following result, which provides a description of the base polytope of a 2-\op{} $M_1\oplus_2 M_2$ in terms of the base polytopes of $M_1, M_2$. Its proof can be derived from \cite{Grande16}, or found in \cite{aprile20182}.

\begin{lem}\label{obs:2sumpolydescription}
	Let $M_1, M_2$ be matroids on ground sets $E_1, E_2$ respectively, with $E_1\cap E_2=\{p\}$, and let $M=M_1\oplus_2 M_2$. Then $B(M)$ is affinely isomorphic to $$(B(M_1)\times B(M_2))\cap\{(x,y) \in\R^{E_1} \times \R^{E_2} \mid x_p+y_p=1\}.$$
\end{lem}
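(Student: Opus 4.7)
The plan is to exhibit an explicit affine isomorphism. Let $\pi:\R^{E_1}\times\R^{E_2}\to\R^{(E_1\cup E_2)\setminus\{p\}}$ be the linear map that forgets both $p$-entries and concatenates the remaining coordinates of $x$ and $y$. I would first observe that $\pi$ is injective on $B(M_1)\times B(M_2)$: since $B(M_i)\subseteq\{x(E_i)=\rk(M_i)\}$, the coordinate $x_p$ is determined by $x|_{E_1\setminus\{p\}}$ and similarly $y_p$ by $y|_{E_2\setminus\{p\}}$. A short computation gives $\pi(\chi^{B_1},\chi^{B_2})=\chi^{(B_1\cup B_2)\setminus\{p\}}$ whenever $B_i\subseteq E_i$, using that $E_1\cap E_2=\{p\}$.

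Next I would pin down the vertex-level correspondence. By the definition of the 2-sum, the bases of $M=M_1\oplus_2 M_2$ are exactly the sets $(B_1\cup B_2)\setminus\{p\}$ with $(B_1,B_2)\in\calB_1\times\calB_2$ and $p\in B_1\triangle B_2$. This parametrization is a bijection onto $\calB(M)$: given a basis $B$ of $M$, the cardinality $|B\cap E_1|\in\{\rk(M_1),\rk(M_1)-1\}$ forces a unique choice of which side contains $p$, and only that side yields a pair of bases of $M_1$ and $M_2$. Hence $\pi$ maps the set $V:=\{(\chi^{B_1},\chi^{B_2}) : B_i\in\calB_i,\ p\in B_1\triangle B_2\}$ bijectively onto the vertex set of $B(M)$.

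The crucial step is to prove that $Q:=(B(M_1)\times B(M_2))\cap\{x_p+y_p=1\}$ equals $\conv V$. The inclusion $\conv V\subseteq Q$ is immediate. For the converse, given $(x,y)\in Q$, write $x=\sum_i\alpha_i\chi^{B_1^i}$ and $y=\sum_j\beta_j\chi^{B_2^j}$, and split $\calB_i=\calB_i^+\sqcup\calB_i^-$ according to whether the basis contains $p$. The weight that the product distribution places on $(+,+)$ pairs equals $x_py_p$, and on $(-,-)$ pairs equals $(1-x_p)(1-y_p)$; combined with $x_p+y_p=1$, both equal $x_p(1-x_p)$. This balance lets one pair every ``bad'' $(+,+)$ term with a ``bad'' $(-,-)$ term of equal weight (using a rational approximation, mirroring the density argument in the proof of Lemma~\ref{lem:gluedprod}(i)) and apply the swap identity $(\chi^{B_1^i},\chi^{B_2^j})+(\chi^{B_1^k},\chi^{B_2^\ell})=(\chi^{B_1^i},\chi^{B_2^\ell})+(\chi^{B_1^k},\chi^{B_2^j})$, which converts any pair of bad terms into two elements of $V$. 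This redistribution rewrites $(x,y)$ as a convex combination of points of $V$.

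Once $Q=\conv V$ is established, injectivity of $\pi$ on $Q$ together with $\pi(Q)=\conv\pi(V)=B(M)$ yields the desired affine isomorphism. The main obstacle is the swap/redistribution argument in the third paragraph; it is a minor but nontrivial variant of the rational density argument used for simplicial glued products in Lemma~\ref{lem:gluedprod}(i), and everything else is routine verification.
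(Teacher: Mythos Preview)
Your argument is correct. The paper does not actually prove this lemma in-text---it defers to \cite{Grande16} and \cite{aprile20182}---but immediately afterward it observes that the statement casts $B(M)$ as a simplicial glued product of $B(M_1)$ and (an affine copy of) $B(M_2)$, which places it squarely in the framework of Lemma~\ref{lem:gluedprod}(i). Your swap/redistribution step is precisely the $k=2$ instance of the rational density argument used there: the constraint $x_p+y_p=1$ forces the total $(+,+)$ weight $x_py_p$ and the total $(-,-)$ weight $(1-x_p)(1-y_p)$ to coincide, after which the identity $(u,v)+(u',v')=(u,v')+(u',v)$ converts each matched bad pair into two elements of $V$. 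So your approach is essentially the one the paper points to, carried out directly rather than by invoking the general glued-product lemma. The remaining pieces---the bijection $(B_1,B_2)\mapsto (B_1\cup B_2)\setminus\{p\}$ and the injectivity of $\pi$ on the affine hull via $x(E_i)=\rk(M_i)$---are routine and handled adequately.
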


Lemma \ref{obs:2sumpolydescription} implies that, if $M=M_1\oplus_2 M_2$, then $B(M)$ is affinely isomorphic to the glued product of polytopes $B(M_1)$ and $\{(\overline{y}_p, y)\in \R\times \R^{E_2}: y\in B(M_2), \overline{y}_p=1-y_p\}$ with respect to variables $x_p$, $\overline{y}_p$ respectively. Notice that the latter polytope is affinely isomorphic to $B(M_2)$. 
Notice that, if $x_p\geq 0$ and $y_p\leq 1$ are facet-defining for $B(M_1)$, $B(M_2)$ respectively, then the glued product is indeed simplicial. Otherwise, thanks to Assumption \ref{assump:nonneg} above and to its justification, one checks that Corollary \ref{cor:glued} holds as well for this glued product.

Hence in both cases we have, by Corollary \ref{cor:glued} 
 that if $S_i$ is a slack matrix of $B(M_i)$ for $i \in [2]$, then $(S_1,x_p) \otimes_2 (S_2,\overline{y}_p)$ is a slack matrix of $B(M)$, where $x_p$ is the row corresponding to $x_p\geq 0$, $\overline{y}_p$ the row corresponding to $y_p\leq 1$, and $\overline{y}_p$ is a row of $S_2$ by assumption (iv). 

 The only missing ingredient is now a converse to the above statement.

\begin{lem} \label{lem:2-sum_matroid}
Let $M = (E,\calB)$ be a connected matroid and let $S$ be a slack matrix of $B(M)$. Assume there are $S_1, S_2$ such that $S=(S_1,x_1)\otimes_2 (S_2,\overline{y}_1)$, for some rows $x_1, \overline{y}_1$, and let $S_1'=S_1+(\mathbf{1}-x_1)$ and similarly for $S_2'$. Then there are matroids $M_1, M_2$ such that $M=M_1\oplus_2 M_2$ and $S_i'$ is a slack matrix of $B(M_i)$, for $i=1,2$.
\end{lem}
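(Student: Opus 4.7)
The plan is to parallel the proof of Lemma \ref{lem:1-sum_matroid}, using Corollary \ref{cor:glued} to interpret the 2-product structure geometrically, and then invoking the rank characterization of matroid 2-sums (Proposition \ref{prop:sumrank}(2)) together with the connectedness of $M$. By Corollary \ref{cor:glued}, from $S=(S_1,x_1)\otimes_2 (S_2,\overline{y}_1)$ I obtain polytopes $P_1,P_2$ with slack matrices $S_1',S_2'$ such that $B(M)$ is affinely isomorphic to the simplicial glued product $P_1\times_2 P_2$.

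By Assumption (iv), every non-negativity row $x_e\geq 0$ and its complement $1-x_e\geq 0$ appear in $S$, and hence in $S_1$ or $S_2$ (possibly as one of the special rows). Adapting the argument of Lemma \ref{lem:1-sum_matroid}, for each $e\in E$ the two complementary rows must lie in the same $S_i$: otherwise, the 2-product structure would force these rows to be constant on the blocks $S_j[0]$ and $S_j[1]$, which either makes them constant rows (violating Assumption (iii)) or forces them to coincide with $x_1$ or $\overline{y}_1$ (up to complementation). This yields a partition $E=E_1\sqcup E_2$, with at most one shared element arising when the special row is itself a non-negativity inequality of $B(M)$; both $E_1,E_2$ are non-empty by the same reasoning as in the 1-sum case. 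To verify the 2-sum rank condition $\rk_M(E_1)+\rk_M(E_2)=\rk_M(E)+1$, I analyze the face $F_0$ of $B(M)$ on which the special row $r$ of $S$ is tight. Its slack matrix is (up to redundancy) $S[0]=S_1[0]\otimes S_2[0]$, a 1-product whose row partition respects $(E_1,E_2)$. Since $F_0$ is itself a matroid base polytope---obtained from $M$ by deletion, contraction, or restriction-plus-quotient depending on whether $r$ corresponds to $x_e\geq 0$, $1-x_e\geq 0$, or a rank inequality $x(U)\leq \rk(U)$---Lemma \ref{lem:1-sum_matroid} applied to $F_0$ yields a matroid 1-sum decomposition along $(E_1,E_2)$. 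A short case analysis on the facet type of $r$, combined with the standard rank formulas for matroid minors, translates this 1-sum decomposition into the required 2-sum rank identity.

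Given the connectedness of $M$, Proposition \ref{prop:sumrank}(2) then produces matroids $M_1,M_2$ on $E_1\cup\{p\},E_2\cup\{p\}$ (for a new element $p$) with $M=M_1\oplus_2 M_2$. To conclude, Lemma \ref{obs:2sumpolydescription} gives that $B(M)$ is affinely isomorphic to $B(M_1)\times_2 B(M_2)$, a simplicial glued product with the same partition structure as $P_1\times_2 P_2$; by matching the non-negativity row structure of $S_i'$ with the coordinate structure of $B(M_i)$ (both indexed by $E_i\cup\{p\}$), we identify $P_i$ with $B(M_i)$, so $S_i'$ is a slack matrix of $B(M_i)$. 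The main obstacle is the rank calculation in the second paragraph: distinguishing the cases for the facet type of the special row $r$ and carefully applying the matroid-minor rank formulas (in particular, handling the degenerate case where one of $M_i$ becomes $U_{2,1}$, which corresponds to $r$ being a non-negativity row of $B(M)$) to obtain the 2-sum rank identity.
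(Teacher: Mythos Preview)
Your outline follows the paper's structure: partition $E=E_1\sqcup E_2$ via the non-negativity rows, establish the 2-sum rank condition, apply Proposition~\ref{prop:sumrank}, and finish with Lemma~\ref{obs:2sumpolydescription}. However, two steps are not adequately justified.

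First, the paper does not handle the case where the special row $r$ is a non-negativity row as a ``degenerate case''---it shows this case cannot occur. If $r$ corresponds to $x_e\ge 0$, then $S[0]$ and $S[1]$ are slack matrices of $B(M\setminus e)$ and $B(M/e)$ respectively, and both are non-trivial $1$-products. By Corollary~\ref{cor:matroid_conn} both $M\setminus e$ and $M/e$ are disconnected, contradicting Tutte's theorem that for a connected matroid at least one of these minors is connected. Your claim that this corresponds to one $M_i$ being $U_{2,1}$ is not argued, and in any case the partition of non-negativity rows into $E_1,E_2$ breaks down when $e$'s own row is the special one.

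Second, and more seriously, your derivation of the rank identity is where the real work lies, and ``a short case analysis \ldots\ combined with the standard rank formulas for matroid minors'' does not suffice. Applying Lemma~\ref{lem:1-sum_matroid} to $F_0$ (and symmetrically to $F_1$, which is indeed also a matroid base polytope) only tells you that $|B\cap E_i|$ is constant on each of $\calB_0$ and $\calB_1$; write these constants as $r_0^i,r_1^i$. You then get $\rk_M(E_i)=\max(r_0^i,r_1^i)$ and $r_0^1+r_0^2=r_1^1+r_1^2=\rk(M)$, hence $\rk_M(E_1)+\rk_M(E_2)=\rk(M)+|r_0^1-r_1^1|$. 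Connectedness gives $|r_0^1-r_1^1|\ge 1$, but you need \emph{equality}. No rank formula for minors yields this; the paper invokes the bijective basis exchange property to show that a gap of $2$ or more is impossible (a single exchange changes $|B\cap E_1|$ by at most $1$, so every exchanged basis would still lie in the same block, forcing $|B\cap E_1|=|B'\cap E_1|$). This exchange argument is the crux of the proof and is missing from your sketch.
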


\begin{proof}
 First, we claim that the special row $r$ of $S$ does not correspond to any non-negativity inequality (which are all present in $S$ thanks to Assumption \ref{assump:nonneg}). Indeed, if it corresponds to $x_e\geq 0$ for some $e\in E$, then it is not hard to see that $S[0]$ is the slack matrix of $M\setminus e$, and similarly $S[1]$ is the slack matrix of $M/e$ (where $S[0], S[1]$ are defined as in Section \ref{sec:intro}). But both matrices are 1-\op{}s, hence by Corollary \ref{cor:matroid_conn}, none of $M\setminus e, M/e$ is connected. But this is in contradiction with the well-known fact that, if $M$ is connected, then at least one of $M\setminus e, M/e$ is connected (see~\cite{tutte1965lectures}).
 
 Hence, the special row $r$ corresponds to an inequality $x(U)\leq \rk(U)$ for some $U\subseteq E$, while each inequality $x_e\geq 0$ corresponds to a row of either $S_1$ or $S_2$, giving a partition of $E$ in $E_1, E_2$. We will now proceed similarly as in the proof of Lemma \ref{lem:1-sum_matroid}. First, by noticing that the slack of any vertex with respect to $x(U)\leq \rk(U)$ depends exclusively on the slack with respect to the non-negativity inequalities, we can again conclude that $E_1, E_2$ are not empty.
 
 Moreover, we claim that $\rk(E_1)+\rk(E_2)=\rk(E)+1$: this, thanks to Proposition \ref{prop:sumrank}, implies that $M=M_1\oplus_2 M_2$ where $M_i$ is a matroid on ground set $E_i\cup\{p\}$, for $i=1,2$.
  Notice that $\rk(E_1)+\rk(E_2)\geq \rk(E)+1$, since $M$ is connected (see again Proposition \ref{prop:sumrank}).
 
 Let us partition the bases of $M$ (hence the columns of $S$) into $\calB_0$ and $\calB_1$ according to the row $r$ and its corresponding inequality: in particular, $B\in \calB_0$ if $|B\cup U|=\rk(U)$ and $B\in \calB_1$ if $|B\cup U|=\rk(U)-1$. Let $\calB_0^i=\{B\cap E_i: B\in \calB_0\}$ for $i=1,2$: since
 bases of $\calB_0$ correspond to columns of $S[0]=S_1[0]\otimes S_2[0]$, we have that $B\in \calB_0$ if and only if $B=B_1\cup B_2$ with $B_i\in \calB_0^i$. But this implies that the (independent) sets in $\calB_0^1$ (resp. $\calB_0^2$)  have all the same size $r_0^1$ (resp. $r_0^2$). The same can be argued for bases in $\calB_1$, by defining $\calB_1^i$, $r_1^i$, $i=1,2$ analogously. Notice that $r_0^1+r_0^2=r_1^1+r_1^2=\rk(M)$. But then $\rk(E_i)=\max(r_0^i, r_1^i)$, $i=1,2$. Notice that, if $r_0^i=r_1^i$ for some $i\in\{0,1\}$, then $\rk(E_1)+\rk(E_2)=\rk(E)$, a contradiction. Hence, without loss of generality, we must have $\rk(E_1)=r_0^1>r_1^1$ and $\rk(E_2)=r_1^2>r_0^2$. Notice that our claim follows from showing that $r_1^1=r_0^1-1$. If, by contradiction, $r_1^1>r_0^1-1$, then we would have that all bases $B$ of $M$ satisfy either $|B\cap E_1|=r_1^1$ or $|B\cap E_1|\leq r_1^1-2$, but this for instance contradicts the bijective basis exchange axiom (see Exercise 12 of Chapter 12 of \cite{oxley2006matroid}, or \cite{brualdi1969very}).
 
 Now, from Lemma \ref{obs:2sumpolydescription} and the subsequent discussion we have that $S$ is the 2-product of the slack matrices of $B(M_1)$, $B(M_2)$: by arguing similarly as in the proof of Lemma \ref{lem:1-sum_matroid}, we conclude that such matrices are exactly $S_1$, $S_2$ (or $S_1'$, $S_2'$).
\end{proof}

\subsection{Concluding the proof}
We are now ready to prove the main result of this section, namely, Theorem \ref{thm:main3}. We recall that, as observed in Section \ref{sec:def}, we can assume that our input $S\in \{0,1\}^{m\times n}$ is preprocessed such that, if $S$ is a slack matrix, it is non-redundant.



\begin{proof}[Proof of Theorem~\ref{thm:main3}]
We first check whether $S$ is isomorphic to $S_{d,k}$ for some $d$ and $k$, in which case we are done.

Then, we run the algorithm from Theorem \ref{thm:main1} with $k=1$ and, if $S$ is a 1-\op, we decompose it into irreducible factors $S_1,\dots, S_t$ and test each $S_i$ separately. By Lemma \ref{lem:1-sum_matroid} we have that $S$ is a slack matrix of $B(M)$ if and only if $S_i$ is a slack matrix of $B(M_i)$ for each $i \in [t]$, and $M=M_1\oplus \dots \oplus M_t$.

We can now assume that $S$ is irreducible. We run the algorithm from Theorem \ref{thm:main1} for $k=2$ to iteratively decompose $S$ as a 2-\op{}. Notice that we ignore decompositions as $S=S'\otimes_2 I_2$, where $I_2$ is the $2\times 2$ identity matrix and only decompose $S$ when both factors are different from $I_2$ (see the discussion at the end of Section \ref{sec:def}). We decompose $S$ as the 2-\op{} of matrices $S_1,\dots,S_t$ where $S_i=S_{d_i,k_i}$ for some $d_i\geq k_i\geq 1$ and for $i \in [t]$. If this is not possible (i.e. one of the factors $S_i$ is not of the desired form, and cannot be decomposed further), then thanks to Lemma \ref{lem:2-sum_matroid} we conclude that $S$ is not a slack matrix of a base polytope. 

Now, there is one last technicality we have to deal with, before we can apply Lemma \ref{obs:2sumpolydescription} and conclude that $S$ is the slack matrix of a base polytope. We need to ensure that each pair of special rows involved in a 2-\op{} is ``coherent'', i.e. we need that, for each 2-sum between matroids $S_i$, $S_j$, the special row of one of them corresponds to $x_p\geq 0$, for some element $p$, and the special row of the other corresponds to $x_p\leq 1$. If this does not happen, i.e. if, say, both rows are of form $x_p\geq 0$, then the matrix resulting from the 2-product is the slack matrix of a polytope which is not the base polytope of a matroid. 
Note that, unless $S_i$ is the identity matrix (in which case all its rows are non-negativity rows), we can choose whether $S_i$ is the slack matrix of $U_{d_i,k_i}$ or of its dual $U_{d_i,d_i-k_i}$, hence we can choose the form of the special row. Hence $S$ is the slack matrix of a matroid polytope if and only if there is a choice that makes all the pairs of special rows coherent. This problem can be easily solved as follows. Define a tree with nodes $S_1,\dots,S_t$, where two nodes $S_i, S_j$ are adjacent if the 2-\op{} $S_i\otimes_2 S_j$ occurs during the decomposition of $S$. Now, by coloring the nodes of the tree by two colors, according to the form of the special row, one can efficiently determine whether there exists a proper coloring satisfying the ``fixed'' colors (given by the $S_i$'s that are identity matrices). Notice that, if there exists a feasible coloring, then this determines a matroid $M$, and it is essentially unique: it is easy to see that the only other possible coloring gives rise to the dual matroid $M^*$, corresponding to the same slack matrix. This concludes the algorithm. Notice that, in case $S$ is the slack matrix of $B(M)$, $M$ (or its dual) can be reconstructed by successively taking the 2-\op{} of the $U_{d_i,k_i}$'s (or of their duals, depending on the coloring found).
\end{proof}

\section{Recognizing stable set polytopes of perfect graphs}\label{sec:stab(G)k-sum} 
We now turn to the second application of our results to a class of combinatorial polyotopes. 
First, we give an interpretation of the $k$-\op{} operation in the context of slack matrices of stable set polytopes. For a graph $G$, we denote its stable set polytope\footnote{The stable set polytope of a graph is the convex hull of its stable sets.} by $\stab(G)$.

We say that a graph $G(V,E)$ has a \emph{clique cut-set} $K$ if $V$ can be partitioned in $V_1, V_2, K$ such that $V_1,V_2\neq \emptyset$, $K$ is a clique, and there is no edge of $G$ between $V_1, V_2$. Note that we allow $K$ to be empty: this is equivalent to $G$ being disconnected, in particular $G$ is the disjoint union of $G_1, G_2$ with vertex sets $V_1, V_2$ respectively. It is easy to see that in this latter case $\stab(G)=\stab(G_1)\times \stab(G_2)$, hence if $S, S_i$ denote the slack matrices of $\stab(G), \stab(G_i)$ for $i=1,2$, we have $S=S_1\otimes S_2$. 
 We now remark that a generalization of this fact holds. We recall the following theorem of Chv{\'a}tal \cite[Theorem 4.1]{chvatal1975certain}, which we rewrite in a convenient form.
 
 \begin{thm}\label{thm:cliquecutset}
 Let $G(V,E)$ be a graph and $(V_1,V_2,K)$ a partition of $V$, where $K$ is a clique cut-set. For $i=1,2$, let $G_i:=G[K \cup V_i]$. Then $\stab(G)$ is affinely isomorphic to the simplicial glued product of $\stab(G_1)$ and $\stab(G_2)$ with respect to coordinates $x_v$, for $v\in K$.
 \end{thm}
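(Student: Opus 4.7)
The plan is to produce an explicit affine isomorphism between $\stab(G)$ and the simplicial glued product $\stab(G_1)\times_{|K|+1}\stab(G_2)$ along the coordinates indexed by $K$. The backbone is the combinatorial bijection between stable sets of $G$ and pairs $(S_1,S_2)$ with $S_i$ stable in $G_i$ and $S_1\cap K=S_2\cap K$: given $S$ stable in $G$, set $S_i:=S\cap V(G_i)$, which is stable in $G_i$ and agrees with $S_{3-i}$ on $K$ by construction; conversely, given such a pair, $S_1\cup S_2$ is stable in $G$ because every edge of $G$ lies entirely inside $G_1$ or inside $G_2$ --- this uses precisely that $K$ separates $V_1$ from $V_2$.

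Before invoking the glued product, I would verify that the setup is genuinely \emph{simplicial} with $k-1=|K|$ glue coordinates. Since $K$ is a clique in each $G_i$, any stable set of $G_i$ meets $K$ in at most one vertex, so the projection of $\stab(G_i)$ onto $\{x_v:v\in K\}$ is precisely $\conv(\{\mathbf{0}\}\cup\{e_v:v\in K\})$, the simplex demanded by the definition in Section~\ref{sec:def}. Each inequality $x_v\geq 0$, $v\in K$, is also facet-defining for $\stab(G_i)$: its zero face equals $\stab(G_i-v)$ and has dimension $|V(G_i)|-1=\dim\stab(G_i)-1$ (witnessed by the affinely independent stable sets $\emptyset$ and $\{u\}$ for $u\in V(G_i)\setminus\{v\}$). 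One should also note that each ``block'' $S_i[j]$ in the factorization is non-empty: $S_i[0]$ contains $\emptyset$, and $S_i[v]$ contains $\{v\}$ for $v\in K$.

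Finally, I would define the linear map $\phi\colon\R^V\to\R^{V(G_1)}\times\R^{V(G_2)}$ that duplicates the $K$-coordinates, $\phi(x)=(x|_{V(G_1)},x|_{V(G_2)})$. This $\phi$ is injective (since $V=V(G_1)\cup V(G_2)$) and its image is the subspace $H=\{(x,y):x_v=y_v\,\forall v\in K\}$. By the combinatorial bijection, $\phi$ sends the vertex $\chi_S$ of $\stab(G)$ to the pair $(\chi_{S_1},\chi_{S_2})$, which is a vertex of the glued product by definition; conversely, every generator of the glued product is of this form. Since $\phi$ is affine, injective, and bijects the vertex sets, it restricts to an affine isomorphism $\stab(G)\to\stab(G_1)\times_{|K|+1}\stab(G_2)$. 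The main piece requiring care is the simplicial bookkeeping (the projection and facet conditions); once that is settled, everything else is a clean consequence of the vertex correspondence.
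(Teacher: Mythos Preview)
The paper does not give its own proof of this theorem: it is cited as Chv\'atal's result \cite[Theorem~4.1]{chvatal1975certain}, merely ``rewritten in a convenient form'' to match the glued-product terminology. So there is no in-paper argument to compare against.

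Your proposal is correct and is the natural way to prove the statement directly. The vertex bijection (stable sets of $G$ $\leftrightarrow$ pairs of stable sets in $G_1,G_2$ agreeing on $K$) together with the injective linear duplication map $\phi$ already gives the affine isomorphism onto the glued product, since an injective affine map sends $\conv(A)$ to $\conv(\phi(A))$; you do not even need to know that all generators of the glued product are vertices. Your verification of the simplicial hypotheses (projection onto the $K$-coordinates is the standard simplex because $K$ is a clique, and each $x_v\ge 0$ is facet-defining for $\stab(G_i)$) is exactly what is required so that the result can legitimately be called a \emph{simplicial} glued product, which is how the paper phrases it. Chv\'atal's original formulation is in terms of the inequality description of $\stab(G)$ (juxtaposing the descriptions of $\stab(G_1)$ and $\stab(G_2)$), which via Lemma~\ref{lem:gluedprod}(i) is equivalent to what you show; your vertex-based argument is arguably cleaner for the present purpose.
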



The following is a consequence of Lemma \ref{lem:k-sum_slack}, Theorem \ref{thm:cliquecutset}, and the fact that the properties of simplicial glued product are easily verified in this case.  

\begin{thm}\label{thm:stab(G)k-sum}
Let $G(V,E)$ be a graph and $S$ be a slack matrix of $\stab(G)$, and let $k\geq 1$. Then the following are equivalent:
\begin{enumerate}
    \item[(i)]
$S = (S_1,x_1,\dots,x_{k-1}) \otimes_k (S_2,y_1,\dots,y_{k-1})$ where for $i=1,2$, $S_i$ is a slack matrix of the stable set polytope of a graph $G_i$, and $x_1,\dots,x_{k-1}$ (resp.~$y_1,\dots,y_{k-1}$) correspond to certain nonnegativity inequalities in $G_1$ (resp.~$G_2$); 
\item[(ii)] $G$ has a clique cut-set of size $k-1$.
\end{enumerate}
\end{thm}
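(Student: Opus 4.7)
The proof combines Theorem~\ref{thm:cliquecutset} (Chvátal's description of $\stab(G)$ as a simplicial glued product whenever $G$ has a clique cut-set) with Corollary~\ref{cor:glued} (the dictionary between $k$-\op{}s of slack matrices and simplicial glued products). Both directions reduce to carefully checking that the ``simplicial'' hypothesis of the glued product corresponds on the graph side to $K$ being a clique.

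\emph{Direction (ii) $\Rightarrow$ (i).} Assume $G$ has a clique cut-set $K=\{v_1,\dots,v_{k-1}\}$ with vertex partition $V=V_1\cup V_2\cup K$, and let $G_i:=G[V_i\cup K]$. Theorem~\ref{thm:cliquecutset} gives that $\stab(G)$ is affinely isomorphic to the simplicial glued product of $\stab(G_1)$ and $\stab(G_2)$ along the coordinates $x_{v_j}$, $j\in[k-1]$. I would first verify the simplicial hypotheses: since $K$ is a clique in each $G_i$, stable sets of $G_i$ contain at most one vertex of $K$, so the projection of $\stab(G_i)$ onto the $K$-coordinates is exactly $\conv\{\mathbf{0},e_1,\dots,e_{k-1}\}$; and $x_{v_j}\geq 0$ is facet-defining in $\stab(G_i)$ (it is a vertex-nonnegativity inequality in a full-dimensional stable set polytope). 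Corollary~\ref{cor:glued} then immediately yields $S=(S_1,x_1,\dots,x_{k-1})\otimes_k(S_2,y_1,\dots,y_{k-1})$, with $S_i$ a slack matrix of $\stab(G_i)$ and special rows corresponding to $x_{v_j}\geq 0$, as required.

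\emph{Direction (i) $\Rightarrow$ (ii).} Suppose $S=(S_1,x_1,\dots,x_{k-1})\otimes_k(S_2,y_1,\dots,y_{k-1})$ with $S_i$ a slack matrix of $\stab(G_i)$, and the special row $x_j$ (resp.~$y_j$) corresponding to the nonnegativity inequality $x_{u_j}\geq 0$ (resp.~$x_{w_j}\geq 0$) for some $u_j\in V(G_1)$ (resp.~$w_j\in V(G_2)$). By Corollary~\ref{cor:glued}, $\stab(G)$ is affinely isomorphic to the simplicial glued product of $\stab(G_1)$ and $\stab(G_2)$ along the coordinates $(x_{u_1},\dots,x_{u_{k-1}})$ and $(x_{w_1},\dots,x_{w_{k-1}})$. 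The simplicial property forces the projection of $\stab(G_i)$ onto these coordinates to be $\conv\{\mathbf{0},e_1,\dots,e_{k-1}\}$, which amounts to saying that no stable set of $G_1$ contains two of the $u_j$'s, i.e.~$\{u_1,\dots,u_{k-1}\}$ is a clique of $G_1$; the same argument gives a clique $\{w_1,\dots,w_{k-1}\}$ of $G_2$. I would then build a graph $G'$ by identifying $u_j$ with $w_j$ to obtain a vertex $v_j$, so that $K:=\{v_1,\dots,v_{k-1}\}$ is by construction a clique cut-set of $G'$ with sides $V_i:=V(G_i)\setminus\{u_1,\dots,u_{k-1}\}$ (both nonempty: if, say, $V_1=\emptyset$ then $S_1$ would reduce to (a permutation of) $I_k$, and a direct count shows the $k$-product $I_k\otimes_k S_2$ would have the same number of columns as $S_2$, contradicting the size requirement in the definition of $k$-product). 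Applying the already-proved direction (ii) $\Rightarrow$ (i) to $G'$, the same simplicial glued product is also affinely isomorphic to $\stab(G')$, hence $\stab(G)\cong\stab(G')$. Finally, I would conclude $G\cong G'$ by exploiting the shared slack matrix $S$: its rows marked as nonnegativity rows biject with the vertices of each graph and its columns biject with the stable sets, so the two stable set systems coincide under this identification, hence so do the edge sets (a pair is an edge iff it is not a stable set).

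\emph{Main obstacle.} The subtle step is the last one in direction (i) $\Rightarrow$ (ii): passing from the affine isomorphism $\stab(G)\cong\stab(G')$ to the graph isomorphism $G\cong G'$. One needs to track which rows of $S$ are vertex-nonnegativity rows (those coming from $S_1$, $S_2$ or among the special rows), so that the correspondence is canonical on both sides; some care is also required if $S_i$ is not a non-redundant slack matrix, but this can be handled by preprocessing (removing redundant rows, as discussed after Lemma~\ref{lem:slackconvcol}).
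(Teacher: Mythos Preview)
Your approach is essentially the same as the paper's, which simply states the theorem ``is a consequence of Lemma~\ref{lem:k-sum_slack}, Theorem~\ref{thm:cliquecutset}, and the fact that the properties of simplicial glued product are easily verified in this case.'' You have fleshed out both directions along exactly those lines: Chv\'atal's theorem provides the glued-product description, and Corollary~\ref{cor:glued}/Lemma~\ref{lem:k-sum_slack} translates it into a $k$-\op{} of slack matrices.

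One comment on the step you flag as the main obstacle. Your plan is to deduce $G\cong G'$ by matching ``the rows marked as nonnegativity rows'' on the two sides; but the decomposition $S=S_1\otimes_k S_2$ only tells you which rows are nonnegativity rows \emph{for $G'$}, not for $G$, so this matching is not available as stated. The clean way to close the gap is to use instead the purely polyhedral fact that an affine isomorphism $\stab(G)\cong\stab(G')$ forces $G\cong G'$ (e.g.\ both graphs can be recovered from the face lattice of the common polytope: the dimension gives $|V|$, and the edges are exactly the vertex pairs not spanning a $2$-face through the empty-set vertex). With that in place your argument for (i)$\Rightarrow$(ii) goes through; the non-emptiness of $V_1,V_2$ via the column count of $I_k\otimes_k S_2$ is correct (it uses only that $S_1[j]$ has a single column for each $j$, which holds regardless of redundant rows in $S_1$).
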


The facial description of the stable set polytope of a general graph can be rather complicated. However, the situation is different when the graph $G$ is perfect: then
$\stab(G)$ is a 2-level polytope whose slack matrix is well understood (see \cite{aprile2020extended,aprile2017extension,yannakakis1991expressing}). We recall (see \cite{chvatal1975certain}) that a graph $G(V,E)$ is perfect if and only if
$$\stab(G)=\{x\in\R_+^V: x(C)\leq 1 \text{ for all maximal cliques $C$ of }G\}.$$ 
Moreover, this description is non-redundant. Hence the 0/1, non-redundant slack matrix of $\stab(G)$ will have a column for each stable set of $G$ and a row for each non-negativity inequality and for each maximal clique of $G$ (we call such rows and inequalities \emph{clique rows} and \emph{clique inequalities}, respectively). We now describe a polynomial-time algorithm to recognize slack matrices of stable set polytopes of perfect graphs. Interestingly, this does not rely on the notion of $k$-\op{}. 


\begin{proof}[Proof of Theorem \ref{thm:recog-slack-stab(G)}]
 First, as argued in Section \ref{sec:def}, we can restrict ourselves to the case of non-redundant slack matrices. Given $S\in \{0,1\}^{m\times n}$, let the rank of $S$ be $r$, hence if $S$ is a slack matrix of a polytope $P$, the dimension of $P$ is $d=r-1$ thanks to Lemma \ref{lem:slackconvcol}. 
 We know that if $S$ is the slack matrix of $\stab(G)$ (we refer to this as the YES case), for a graph $G$ with $d$ vertices, then there is a column of $S$ (corresponding to the empty stable set) with exactly $d$ zeros. Hence, if no such column exists, we output NO. 
 
 Otherwise, for each column $c$ with exactly $d$ zeros, we proceed as follows. If this is a YES case and $c$ corresponds to the empty stable set, the $d$ rows that have zeros in column $c$ must correspond to the non-negativity inequalities, and the other rows must correspond to clique inequalities. This allows us to construct a graph $G$ by connecting two vertices if they do not form a stable set, i.e., if the two corresponding rows do not have a 1 in the same position. Now, the columns of $S$ give us a list $L_1$ of stable sets of $G$, some of which are singletons. By looking at the rows of $S$ that correspond to clique inequalities, restricted to the singleton columns of $S$, we obtain a list $L_2$ of subsets of vertices of $G$ that are supposed to be maximal cliques of $G$. 
 
 We need to check that $L_2$ is indeed the list of all maximal cliques of $G$, and, similarly, that $L_1$ contains all the stable sets of $G$ (we just write that $L_1, L_2$ are \emph{correct}). Now, it is well-known that the maximal cliques (equivalently, maximal stable sets) of a graph can be enumerated in total polynomial time (i.e., in time polynomial in the size of the output), for instance using the Bron–Kerbosch algorithm (\cite{bron1973algorithm}). This allows to efficiently check whether $L_2$ is correct and whether the maximal sets in $L_1$ are indeed the maximal stable sets of $G$.
 
 Now, let $n'$ be the number of (not necessarily maximal) stable sets of $G$.
 We remark that $L_1$ is correct if and only if $n'$ is equal to $n$, the number of columns of $S$. Hence, we only need to argue that we can efficiently establish whether $n'=n$ (without explicitly computing $n'$, that might be much larger than $n$).
 Let $S_1,\dots, S_t$ be the maximal stable sets of $G$ computed by the Bron–Kerbosch algorithm. Then we have $n'=|\{S\subseteq V(G): S\subseteq S_i$ for some $i\in[t]\}|$. Notice that a formula for $n'$ is given by the inclusion-exclusion principle: indeed, given power sets $2^{S_1}$, $\dots$, $2^{S_t}$, $n'$ is exactly the size of the union of such sets. 
 
 In order to check whether $n=n'$, we proceed as follows. First, assume without loss of generality that $S_1$ is the largest maximal stable set of $G$, then we can check whether $2^{|S_1|}\leq n$, otherwise we stop and output NO.
 Then we proceed by simply writing down all subsets of $S_1$, then all subsets of $S_2$ that we have not already written down, etc., until either we listed $n+1$ sets (hence we stop and output NO) or we compute $n'\leq n$. Notice that going through all such subsets takes time proportional $\sum_{i=1}^t 2^{|S_i|}\leq tn \leq n^2$. Hence, the whole process takes polynomial time in $n$, and at the end we can establish whether $n=n'$.

 Hence we can efficiently check whether the lists are correct in time polynomial in their size, i.e., polynomial in the size of $S$. If all these checks are successful, then we only need to check that the entries of clique rows of $S$ are correct: for each such row, and for each column of $S$ corresponding to a stable set, the corresponding entry of $S$ must be 1 if and only if the clique and the stable set intersect, 0 otherwise. Finally, if this check is also successful then $S$ is the slack matrix of $\stab(G)$ (with our selected column $c$ corresponding to the empty stable set) and we output YES. 
 
 If the check fails for each choice of the column $c$ as above, we output NO. In the worst case, we need to iterate the above procedure over all columns of $S$ (with exactly $d$ zeros) and the thesis follows. 
\end{proof}

\section*{Acknowledgements}

This project was supported by ERC Consolidator Grant 615640-ForEFront, by a gift by the SNSF and by the ONR award N00014-20-1-2091.

\printbibliography

\end{document}